\theoremstyle{plain}  
\newtheorem{lemma}{Lemma}
\newtheorem{proposition}[lemma]{Proposition}
\newtheorem{corollary}[lemma]{Corollary}
\newtheorem{theorem}[lemma]{Theorem}
\newtheorem{remark}{Remark}
\theoremstyle{definition}
\newtheorem{example}{Example}
\newcommand{\cX}{\mathcal{X}}
\newcommand{\cY}{\mathcal{Y}}
\newcommand{\cZ}{\mathcal{Z}}
\newcommand{\cM}{\mathcal{M}}
\newcommand{\cA}{\mathcal{A}}
\newcommand{\cP}{\mathcal{P}}
\newcommand{\cS}{\mathcal{S}}
\newcommand{\cC}{\mathcal{C}}
\newcommand{\cT}{\mathcal{T}}
\newcommand{\cB}{\mathcal{B}}
\newcommand{\abs}[1]{|#1|}
\newcommand{\eps}{\varepsilon}
\newcommand{\proj}[1]{|k\rangle\!\langle k|}
\newcommand{\twonorm}[1]{\| #1 \|_{2}}
\newcommand{\cv}{\textrm{cv}}
\newcommand{\Ceps}{C_{\eps}}
\DeclareMathOperator*{\plimsup}{\mathfrak{p}-lim\, sup\,}
\DeclareMathOperator*{\pliminf}{\mathfrak{p}-lim\, inf\,}
\DeclareMathOperator{\Exp}{E}
\DeclareMathOperator*{\Var}{Var}
\DeclareMathOperator*{\Cov}{Cov}
\DeclareMathOperator*{\argmax}{arg\,max}
\DeclareMathOperator*{\argmin}{arg\,min}
\DeclareMathOperator{\diag}{diag}
\begin{document}
\flushbottom

\title{Second-Order Coding Rates for Channels with State} 
 
 \author{Marco Tomamichel$^*$, \emph{Member, IEEE},   \hspace{.025in}  and \hspace{.025in}   Vincent Y.~F.\ Tan$^\dagger$, \emph{Member, IEEE}\thanks{$^*$ Center for Quantum Technologies, National University of Singapore (Email: {cqtmarco@nus.edu.sg}).}  \thanks{$^\dagger$   Department of Electrical and Computer Engineering (ECE) and Department of Mathematics, National University of Singapore (NUS), (Email:
  {vtan@nus.edu.sg}). }\thanks{This paper was presented in part at the IEEE Information Theory Workshop (ITW 2013), Seville, Spain \cite{TomTanITW2012}.}}
   

\maketitle

\begin{abstract}
We study the performance limits of state-dependent discrete memoryless channels with a discrete state available at both  the encoder and  the decoder.
We establish the $\eps$-capacity as well as necessary and sufficient conditions for the strong converse property for such channels when the sequence of channel states is not necessarily stationary, memoryless or ergodic. We then seek a finer characterization of these capacities in terms of second-order coding rates. The general results are supplemented by several examples including i.i.d.\  and Markov states and mixed channels. 
\end{abstract}

\section{Introduction}
We revisit  the   classical problem of channel coding with random states~\cite[Ch.~7]{elgamal} where the channel, viewed as a stochastic kernel from the set of inputs $\cX$ and states $\cS$ to the output $\cY$, is discrete, memoryless and stationary, while the {\em discrete} state is allowed to be a {\em general   source} in the sense of Verd\'u-Han~\cite{VH94, Han10}. This means that, apart from the state having a finite alphabet, it does not have to be stationary, memoryless nor ergodic. The state is  known noncausally at both encoder and decoder. See Fig.~\ref{fig:states}. This models the scenario in which the channel viewed as a stochastic kernel from $\cX$ to $\cY$ is possibly non-ergodic or having memory, where both non-ergodicity and memory are induced by the  general state sequence. We derive the $\eps$-capacity and its optimistic version~\cite{chen99} as well as second-order coding rates~\cite{Hayashi08,Hayashi09} and specialize the general results to various state distributions, such as  independent and identically distributed (i.i.d.) and Markov.  The justifications of  the second-order results  require new techniques such as multiple applications of various forms of the Berry-Esseen theorem~\cite[Sec.~2.2]{tao12} \cite[Sec.\ XVI.7]{feller} to simplify expectations of Gaussian cumulative distributions functions.

\subsection{Main Results and Technical Contributions}
There are two sets of results in this paper, namely the results concerning first- and second-order coding rates.

First, we derive general formulas for the $\eps$-capacity and optimistic $\eps$-capacity (in the sense of Chen-Alajaji~\cite{chen99}) both under cost constraints. We do so by using the information spectrum method by Han and Verd\'u~\cite{VH94, Han10}. The direct part is proved using an extension of Feinstein's lemma~\cite{feinstein, Lan06}  to channels with state while the converse part is proved using a   one-shot converse bound established by the present authors~\cite{tomamicheltan12}.  These capacities only depend on the cumulative distribution function (cdf) of the C\`esaro mean of the  capacity-cost functions which can also be expressed as an average of the  capacity-cost functions with respect to the empirical distribution of the state sequence. This corroborates our intuition because the channel is well-behaved, thus it does not require characterization using information spectrum quantities and  probabilistic limits~\cite{VH94, Han10}. The only complication that can arise is due to the generality of the state and for this, we do require probabilistic limits.  We thus observe a neat decoupling of the randomness induced by the channel and the state.  By specializing the $\eps$-capacity and optimistic $\eps$-capacity to the capacity and optimistic capacity respectively, we derive a necessary and sufficient condition for the strong converse  \cite[Sec.~V]{VH94} \cite[Def.~3.7.1]{Han10} to hold.  By a further application of Chebyshev's inequality, we provide a simpler sufficient condition for the strong converse to hold. This condition is  based  only on first- and second-order statistics of the state process and hence, is much easier to verify. We provide examples to illustrate the various conditions for the strong converse property to hold. 

Second, we use the one-shot bounds to derive optimum second-order coding rates~\cite{Hayashi08,Hayashi09} for this problem of channels with general state available at both encoder and decoder. Second-order coding rates  provide an approximate characterization of the backoff from capacity at blocklength $n$. These rates are typically derived via one application of the central limit theorem (for each of the direct and converse parts) and hence are usually   expressed in terms of a variance or dispersion quantity~\cite{PPV10,PPV11} and the inverse of the cdf of a standard Gaussian. For example, in channel coding, Strassen showed that the maximum size of a codebook $M^*(W^n,\eps)$ for which there exist  codes  of length $n$ and average    error probability  $\eps$ for a well-behaved discrete memoryless channel (DMC) $W$ satisfies
\begin{equation}
\log M^*(W^n,\eps) = nC(W) + \sqrt{nV_\eps(W)} \Phi^{-1}(\eps) + O(\log n), \label{eqn:strass}
\end{equation}
where $C(W)$ and $V_\eps(W)$ are the channel capacity and $\eps$-channel dispersion \cite{PPV10} of $W$ respectively. The {\em second-order coding rate}, a term coined by Hayashi~\cite{Hayashi09}, is coefficient of the $\sqrt{n}$  in \eqref{eqn:strass}, namely $\sqrt{V_\eps(W)} \Phi^{-1}(\eps)$.   We would like to characterize this quantity for channels with state.   We first allow the state to be general  but finitely-valued and derive a general formula (in the Verd\'u-Han sense~\cite{VH94}) for the optimum second-order coding rate. Subsequently, this result is specialized to various state distributions including i.i.d.\ states, Markov states and the mixed channels scenario, previously studied by Polyanskiy-Poor-Verd\'u~\cite[Thm.~7]{PPV11}. 

To illustrate our  main contribution at a high level, let us consider the states being i.i.d.\ with distribution $P_S\in\cP(\cS)$. The capacity in this case is $C(W,P_S) = \max_{P\in\cP(\cX|\cS)}I(P;W|P_S)$ \cite[Sec.~7.4.1]{elgamal}.  Suppose first that the state is known to be some deterministic  sequence $s^n$ of type  \cite{Csi97} (empirical distribution)  $T_{s^n}$.  Denote the optimum error probability for a length-$n$ block code with $M$ codewords as $\eps^*(W^n,M,s^n)$.  By a slight extension of  Strassen's channel coding result in \eqref{eqn:strass}  to  memoryless but non-stationary channels we find that, for typical $s^n$, 
\begin{equation}
\eps^*(W^n,M,s^n)= \Phi\bigg( \frac{\log M -n C(T_{s^n}) }{ \sqrt{nV(T_{s^n})}}\bigg) + O\bigg( \frac{1}{\sqrt{n} }\bigg), \nonumber
\end{equation}
where the {\em empirical capacity} and {\em  empirical dispersion} are respectively defined as
\begin{equation}
 C(T_{s^n}) :=\frac{1}{n}\sum_{i=1}^n C(W_{s_i}),\quad\mbox{and}\quad  V(T_{s^n}) :=\frac{1}{n}\sum_{i=1}^n V(W_{s_i}). \nonumber
\end{equation}
We have assumed that the $\eps$-dispersion of each channel, $W_s$, $s\in\cS$, is positive and does not depend on $\eps$, an assumption that is true for almost all DMCs. Denote this  dispersion as $V(W_s)$.
The optimum error probability when the state is random and i.i.d.\ is  denoted as $\eps^*(W^n,M)$ and  it can be written as the following expectation:
\begin{equation}
\eps^*(W^n,M ) = \Exp_{S^n}\big[ \eps^*(W^n,M,S^n)\big]=\Exp_{S^n} \bigg[  \Phi\bigg( \frac{\log M -n   C(T_{S^n}) }{ \sqrt{n   V(T_{s^n})  }}\bigg)  \bigg].\label{eqn:epsM}
\end{equation}
So the analysis of the expectation above is crucial.  One of our main technical contributions is to show that    this expectation roughly equals 
\begin{equation}
    \Phi\bigg( \frac{\log M -n C(W,P_S)  }{ \sqrt{n  ( \Var[C(W_S)] + \Exp[V(W_S)]) }}\bigg) .\label{eqn:simplify}
\end{equation}
This shows that the dispersion of the channel with state is $\Var[C(W_S)] + \Exp[V(W_S)]$. The justification that  \eqref{eqn:epsM} is approximately \eqref{eqn:simplify} is done in Lemmas~\ref{lem:approx_V} and~\ref{lem:approx_sum_Vs} where we first approximate $V(T_{S^n})$ with $\Exp[V(W_S)]$ (without too much loss) and subsequently approximate $C(T_{S^n})$  with the true capacity, $C(W,P_S)$. The second approximation results in the additional variance term.  This variance term, $\Var[C(W_S)]$, represents the randomness of the state while the other variance term, $\Exp[V(W_S)]$, represents the randomness of the channel given the state. 

\subsection{Related Work}
Channels with random states   have been studied extensively. See the book by El Gamal and Kim~\cite[Ch.~7]{elgamal}  for a thorough overview. We use the information spectrum method to analyze the problem of channels with random states where the state can be general.  By placing the distribution of the information density random variable in a central role, Han and Verd\'u~\cite{VH94, Han10}  treated information-theoretic  problems beyond the i.i.d.\ or information-stable setting. To the best of our knowledge, the only other work that analyzes channels with  state from the information spectrum viewpoint is that in~\cite{Tan13b} for the Gel'fand-Pinsker problem (i.e.\ the state is only available at the encoder).

Recently, there has been a surge of interest in second-order coding rates for a variety of information-theoretic tasks such as  source coding~\cite{Kot97, TK14}, intrinsic randomness~\cite{Hayashi08} and channel coding~\cite{PPV10,Hayashi09}. This line of work, in fact, started from Strassen's seminal work \cite{strassen} on hypothesis testing and channel coding in which he characterized the fundamental limits up to the second-order. There are two other noteworthy works that are closer in spirit to our study of the second-order coding rates for channels with state. First, we mention the work of Polyanskiy-Poor-Verd\'u~\cite{PPV11} who derived the dispersion of the Gilbert-Elliott channel~\cite{Gilbert, Elliott, Mush} where the state is either unavailable to both  terminals or only available at the decoder.   Second, the work by Ingber and Feder~\cite{ingber10} involves finding the dispersion for the problem where the  i.i.d.\ state is available only at the decoder. We   compare and contrast our results to the relevant results in~\cite{PPV11} and~\cite{ingber10}. Finally, we mention that there are some  recent works on second-order coding rates involving  SIMO \cite{yang13} and MIMO  \cite{hoydis13} fading channels but  the setup in this paper is different from that in~\cite{yang13,hoydis13}. In particular, we consider a discrete state setup here and this leads to different results compared to~\cite{yang13} where it was observed that the dispersion term is zero.

\subsection{Paper Organization}
This paper is structured as follows: In Section~\ref{sec:prelims}, we state the definitions of the information-theoretic problem, capacities,  optimistic capacities and second-order coding rates.  The first-order and strong converse  results are presented in Section~\ref{sec:cap}. The second-order results are presented next in Section~\ref{sec:second}. These sections will be supplemented with five continuing examples that illustrate specializations of the general formulas. The proofs are deferred to the latter sections. In Section~\ref{sec:one-shot}, we introduce and prove two one-shot results that are used to prove subsequent direct and converse parts. In Section~\ref{sec:prf_eps_cap}, we prove the main first-order result in Theorem~\ref{thm:eps_cap}.  In Section~\ref{sec:prf_second}, we use the two one-shot results to prove the general result concerning the second-order coding rate of Theorem~\ref{thm:sec_gen}. Finally, Section~\ref{sec:prf_examples} contains the proofs of the second-order coding rates for the various examples in Theorems~\ref{thm:mixed}--\ref{thm:weird}.

\section{Preliminaries and Definitions} \label{sec:prelims}
\subsection{Basic Definitions}
We assume throughout that $\cX$, $\cY$ and $\cS$ are finite sets. Let $\cP(\cX)$ be the set of probability distributions on $\cX$. We also denote the set of channels from $\cX$ to $\cY$ as $\cP(\cY|\cX) \cong \cP(\cY)^{|\cX|}$. In the following, we let $W\in\cP(\cY|\cX\times\cS)$ be a channel where $\cX$ denotes the input alphabet, $\cS$ denotes the state alphabet and $\cY$ denotes the output alphabet. The set of all $x\in\cX$ that are {\em admissible} for the channel in state $s\in\cS$ is 
\begin{equation*}
\cB_{s}(\Gamma) := \{ x \in \cX \,|\, b_s(x)\le\Gamma \}
\end{equation*}
for some functions $b_s:\cX\to\mathbb{R}^+$ and $\Gamma>0$. We do not explicitly mention $\Gamma$ if there are no cost constraints, i.e., if $\Gamma = \infty$. The channel state $S$ is a random variable with probability distribution $P_S\in\cP(\cS)$.

For any $P\in\cP(\cX|\cS)$, we define the conditional distribution $PW\in\cP(\cY|\cS)$ as $PW(y|s) :=\sum_x P(x|s) W(y|x,s)$. The following {\em conditional log-likelihood ratios} are of interest:
\begin{align*}
i(x;y|s) :=\log\frac{W(y|x,s)}{PW(y|s)} ,  \quad 
j_Q(x;y|s) :=\log\frac{W(y|x,s)}{Q(y|s)} , 
\end{align*}
where the latter definition applies for any $Q\in\cP(\cY|\cS)$ with $Q(\cdot|s)\gg W(\cdot|x,s)$ for every $(x,s)\in\cX\times\cS$.\footnote{The notation $Q \gg P$ denotes the fact that $P$ is absolutely continuous with respect to $Q$. We also say that $Q$ dominates $P$.}
  We denote the  {\em conditional mutual information} as $I(P,W|P_S):=\Exp[i(X;Y|S)]$ where $(S, X, Y)$ is distributed using the law $(S, X, Y) \leftarrow P_S(s)P(x|s) W(y|x,s)$.  Furthermore, the \emph{capacity-cost function} of the channel $W_s := W(\cdot|*,s)$\footnote{The notation $W_s = W(\cdot|*,s)$ is a shorthand for the statement that $W_s(y|x) = W(y|x,s)$ for all $(s,x,y) \in\cS \times  \cX \times \cY$.} is defined as (in bits per channel use)
\begin{align*}
  C_s(\Gamma) := \max_{P\in \cP(\cB_s(\Gamma))}I(P,W_s), \qquad \textrm{where} \quad I(P,W) :=  \sum_{x \in \cX} P(x) D(W(\cdot|x)\|PW)
\end{align*}  
and $D(P\|Q)$ denotes the relative entropy between $P$ and $Q$.
The \emph{average capacity-cost function} given a probability distribution $P_S \in \cP(\cS)$ is defined as 
\begin{align*}
C(P_S, \Gamma):=\sum_{s\in\cS} P_S(s) C_s(\Gamma).
\end{align*}
The quantities $C_s(\Gamma)$ and $C(P_S,\Gamma)$ will be respectively denoted as  $C_s$ and $C (P_S )$ if there are no cost constraints.

For our results concerning second-order coding rates, we do not consider cost constraints (i.e., $\Gamma = \infty$). Then,
we denote the maximizing distribution in $C_s = \max_{P \in \cP(\cX) } I(P, W_s)$ as $P_s^* \in \cP(\cX)$ and we assume that $P_s^*$ is unique for each $s\in\cS$. For a precise justification of this assumption, see Section~\ref{sec:second}. Furthermore, we define the   conditional distribution  $P^*\in\cP(\cX|\cS)$ as $P^*(x|s)=P_s^*(x)$. We   also consider the following second-order quantities. Define the {\em information dispersion} for the channel $W_s \in \cP(\cY|\cX)$ as 
\begin{align*}
V_s  := V(P_s^*, W_s), \quad \textrm{where} \quad
V(P, W) :=  \sum_{x\in\cX}P(x) \sum_{y\in\cY} W(y|x) \left[ \log\frac{ W(y|x)}{PW(y)}- D(W( \cdot|x) \|P W ) \right]^2,
\end{align*}
is the {\em conditional information variance}.
The quantity $V_s$ is also known more simply as the {\em dispersion}  of the channel $W_s$  in the literature~\cite{PPV10,PPV11} although we prefer to reserve the term ``dispersion'' to be an operational quantity. See Section~\ref{sec:defs_sec}. The {\em average conditional information variance} with respect to $P_S$ is   
\begin{align*}
V(P_S) := \sum_{s\in\cS} P_S(s) V_s  . 
\end{align*}

For  a sequence of real-valued  random variables $ \{A_n\}_{n=1}^{\infty}$, the  {\em $\liminf$ and $\limsup$ in probability}~\cite{Han10,VH94,chen99} are respectively defined as 
\begin{align*}
\pliminf_{n\to\infty}A_n&:=\sup\big\{ a \in \mathbb{R} \, \big|\,  \limsup_{n\to\infty}\Pr[  A_n<a]=0\big\} ,\quad\mbox{and} \\
\plimsup_{n\to\infty}A_n&:= -\pliminf_{n\to\infty}(-A_n) = \inf\big\{ a \in \mathbb{R} \,\big|\, \liminf_{n\to\infty} \Pr [ A_n \leq a ] = 1 \big\}.
\end{align*}

Let the probability density function (pdf) of the standard normal distribution be denoted as  
$$
\phi(x):= \frac{1}{\sqrt{2 \pi}} \exp \Big(-\frac{1}{2} x^2\Big).
$$ 
We also extensively  employ the cumulative distribution function (cdf) of the standard normal distribution 
$$\Phi(a) := \int_{-\infty}^a \phi(x) \,\mathrm{d}x.$$ 
We define its inverse as $\Phi^{-1}(\eps) := \sup \{ a \in \mathbb{R} \,|\, \Phi(a) \leq \eps \}$, which evaluates to the usual inverse for $0 < \eps < 1$ and continuously extended to take values $\pm \infty$ outside that range.

\subsection{Codes for Channels with States}
A {\em code} for the channel $W$ with cost constraint $\Gamma$ 
is defined by  $\cC:=\{\cM,e,d\}$ where $\cM$ is the  {\em message set}, $e :\cM\times\cS\to\cX$ is the {\em  encoder} and   $d : \cY\times\cS\to\cM$ is the {\em  decoder}. The encoder must satisfy $e(m,s) \in \cB_s(\Gamma)$ for all $s \in \cS$ and $m \in \cM$. For $S \leftarrow P_S$, the average (for uniform $M$) and maximum error probabilities are respectively defined as 
\begin{align*}
p_{\mathrm{avg}}(\cC;W,P_S) &:=\Pr[M\ne M'] \quad \textrm{and}\\
p_{\mathrm{max}}(\cC;W,P_S) &:=\max_{m\in\cM}\Pr[M\ne M' |M=m] ,
\end{align*}
The relation of the random variables $M$, $X$, $Y$, $M'$ and $S$ is depicted in Fig.~\ref{fig:states}.

We let $M^*(\eps,\Gamma;W, P_S)$ 
be the maximum \emph{code size} $|\cM|$ for which transmission with average 
error probability of at most~$\eps$ is possible through the channel $W$ when the state with distribution $P_S$ is known at both encoder and decoder, i.e.
\begin{align*}
  M^*(\eps, \Gamma; W, P_S) = \sup \big\{ k \in \mathbb{N} \,\big|\, \exists\, \cC = \{\cM, e, d\} \textrm{ with } |\cM| = k \textrm{ and } p_{\textrm{avg}}(\cC; W, P_S) \leq \eps \big\} .
\end{align*}

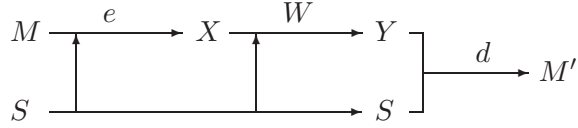
\begin{figure}
\vspace{0.2cm} 
\centering
\begin{picture}(205,48)

\put(5,40){\mbox{$M$}}
\put(5,10){\mbox{$S$}}

\put(20,43){\vector(1,0){50}}
\put(40,47){\mbox{$e$}}
\put(20,13){\vector(1,0){118}}

\put(30,13){\vector(0,1){30}}

\put(75,40){\mbox{$X$}}

\put(88,43){\vector(1,0){50}}
\put(108,47){\mbox{$W$}}
\put(98,13){\vector(0,1){30}}

\put(143,40){\mbox{$Y$}}
\put(143,10){\mbox{$S$}}

\put(156,13){\line(1,0){5}}
\put(156,43){\line(1,0){5}}
\put(161,13){\line(0,1){30}}
\put(161,28){\vector(1,0){40}}
\put(181,32){\mbox{$d$}}

\put(205,25){\mbox{$M'$}}

\end{picture}

\vspace{-0.4cm}
\caption{Casual dependence of random variables when state $S$ is known at encoder and decoder, $M$ is the message, $X$ and $Y$ are the cannel input and outputs, respectively, and $M'$ is the estimate of the message.}
\label{fig:states}
\end{figure}

For a length-$n$ sequence $s^n = (s_1, \ldots, s_n) \in \cS^n$, the {\em type}~\cite[Ch.~2]{Csi97} or {\em empirical distribution} of $s^n$, denoted $T_{s^n}\in \cP(\cS)$, is the relative frequency of various symbols  in  $s^n$. More precisely, $T_{s^n}(s) =\frac{1}{n} \sum_{k=1}^n 1\{   s_k = s\}$. The set of all types with alphabet $\cS$ formed from sequences of length $n$ is denoted as $\cP_n(\cS)$. Note that the type-counting lemma~\cite{Csi97} states that $|\cP_n(\cS)|\le (n+1)^{|\cS|}$.

When we consider $n$ uses of the channel, $W^n \in \cP(\cY^n|\cX^n\times\cS^n)$ observes the following memoryless behavior
\begin{equation*}
W^n(y^n|x^n,s^n)=\prod_{k=1}^n W(y_k|x_k,s_k),\quad(x^n,y^n,s^n) \in \cX^n\times\cY^n\times \cS^n.
\end{equation*}
Moreover, admissible inputs $x^n$ when the channel  $W^n$ is in state $s^n$  must belong to the   set
\begin{align*}
\cB_{s^n}(\Gamma):=\bigg\{ x^n\in\cX^n \, \bigg|\, \frac{1}{n} \sum_{k=1}^n b_{s_k}(x_k) \leq \Gamma \bigg\}.
\end{align*}
To model general behavior of the channel, we allow the state sequence or source $\hat{S} :=\{ S^n=(S^{(n)}_1,\ldots, S^{(n)}_n)\}_{n=1}^{\infty}$ to evolve in an arbitrary manner in the sense of Verd\'u-Han~\cite{VH94, Han10}.

\subsection{Definition of  the $(\eps,\Gamma)$-Capacity and the Optimistic $(\eps,\Gamma)$-Capacity}

We say that a number $R \in \mathbb{R} \cup \{\infty\}$ is an {\em $(\eps,\Gamma)$-achievable rate} if there exists a sequence of non-negative numbers $\{\eps_n\}_{n=1}^{\infty}$ such that 
\begin{align*}
\liminf_{n\to\infty}\frac{1}{n}\log M^*(\eps_n,\Gamma;W^n,P_{\hat{S}})\ge R,\quad\mbox{and}\quad \limsup_{n\to\infty}\eps_n\le\eps.
\end{align*}
The {\em $\eps$-capacity-cost function} $C(\eps, \Gamma; W, P_{\hat{S}})$ for $\eps\in [0,1]$ is the supremum of  all $(\eps,\Gamma)$-achievable rates. The {\em capacity-cost function} is $C( \Gamma; W, P_{\hat{S}}):=C(0, \Gamma; W, P_{\hat{S}})$. Note that $C(1, \Gamma; W, P_{\hat{S}}) = \infty$ by definition. 

Similarly, a number  $R\in \mathbb{R} \cup \{-\infty\}$ is an {\em optimistic $(\eps,\Gamma)$-achievable rate} \cite{chen99} if there exists a sequence  of non-negative numbers $\{\eps_n\}_{n=1}^{\infty}$ for which\footnote{One might think that replacing the first $\liminf$ in~\eqref{eq:optimistic_enough} with a $\limsup$ will lead to a more optimistic achievable rate. However, by considering a sequence $\{\eps_n\}$ that alternates between $\eps_0 < \eps$ and $\eps_1 = 1$, one sees that \emph{any} $R > 0$ is achievable for such a modified definition.} 
\begin{align}
\liminf_{n\to\infty}\frac{1}{n}\log M^*(\eps_n,\Gamma;W^n,P_{\hat{S}})\ge R, \quad\mbox{and}\quad  \liminf_{n\to\infty}\eps_n < \eps. \label{eq:optimistic_enough}
\end{align}
The {\em optimistic $\eps$-capacity-cost function}  $C^\dagger(\eps, \Gamma; W, P_{\hat{S}})$ for $\eps\in [0,1]$  is the supremum of all   optimistic $(\eps,\Gamma)$-achievable rates.\footnote{In fact, Chen-Alajaji~\cite[Defs.~4.9/4.10]{chen99} define the $\eps$-optimistic capacity slightly differently resulting in their version of the  $\eps$-optimistic capacity being characterizable  at all but at most countably many $\eps\in [0,1]$. Our definition allows us to characterize the $\eps$-optimistic capacity  for all $\eps\in [0,1]$. See  a discussion of this issue in \cite[Sec.~IV]{VH94}. }  
 Note that $C^\dagger(\eps, \Gamma; W, P_{\hat{S}})$ can also be expressed as the infimum of all numbers $R  \in\mathbb{R}$ such that for every sequence $\{\eps_n\}_{n=1}^{\infty}$ the following implication holds:
\begin{align*}
\liminf_{n\to\infty}\frac{1}{n}\log M^*(\eps_n,\Gamma;W^n,P_{\hat{S}})\ge R \implies \liminf_{n\to\infty}\eps_n \ge \eps.
\end{align*}
We define the {\em optimistic capacity-cost function} as $C^\dagger(\Gamma; W,P_{\hat{S}}):=C^\dagger(1, \Gamma; W, P_{\hat{S}})$. Moreover, $C^{\dagger}(0, \Gamma; W, P_{\hat{S}}) = -\infty$ by definition.

Note that by the additivity of the cost function in the multi-letter setting,  $\Gamma\mapsto C(\eps, \Gamma; W, P_{\hat{S}})$ and $\Gamma\mapsto C^\dagger(\eps, \Gamma; W, P_{\hat{S}})$ are both concave and hence continuous for $\Gamma>0$.

A channel $W$ with   general state $\hat{S}$ {\em has the strong converse property}~\cite[Sec.~V]{VH94} \cite[Def.~3.7.1]{Han10}  if
\begin{align*}
C(\Gamma; W,P_{\hat{S}})=C^\dagger(\Gamma; W,P_{\hat{S}}),
\end{align*}
for all $\Gamma>0$. This is the form of the strong converse property stated in Hayashi-Nagaoka~\cite{hayashi03}. In other words, the strong converse property holds if and only if for every sequence $\{\eps_n\}_{n=1}^{\infty}$ for which
\begin{align*}
\liminf_{n\to\infty}  \frac{1}{n}\log M^*(\eps_n,\Gamma;W^n,P_{\hat{S}})> C(\Gamma; W,P_{\hat{S}}),   
\end{align*}
we have $\lim_{n\to\infty}\eps_n = 1$.

We note that even though the quantities above are defined based on the average error probability, all the results in the following also hold for the maximum error probability.

\subsection{Definition of the $(\eps,\beta)$-Optimum Second-Order Coding Rate and the $\eps$-Dispersion} \label{sec:defs_sec}

For the second-order results to be presented in Section~\ref{sec:second}, we do not consider cost constraints so $\Gamma=\infty$ and thus, this parameter is omitted from the following definitions. 

Let $(\eps ,\beta )\in (0,1) \times [\frac{1}{2},1)$. We say that an extended real number $r \in \mathbb{R} \cup \{\pm\infty\}$ is an  {\em $(\eps,\beta)$-achievable second-order coding rate} if there exists a sequence of non-negative numbers $\{\eps_n\}_{n=1}^{\infty}$ such that 
\begin{align*}
\liminf_{n\to\infty}\frac{1}{n^\beta} \left[\log M^*(\eps_n;W^n,P_{\hat{S}}) - n C(\eps ; W, P_{\hat{S}}) \right] \ge r,\quad\mbox{and}\quad \limsup_{n\to\infty}\eps_n\le\eps,
\end{align*}
where $C(\eps;W, P_{\hat{S}})$ is the $\eps$-capacity defined in the preceding section.  Note that if $r$ is an {\em $(\eps,\beta)$-achievable second-order coding rate} then there exists a sequence of length-$n$ codes with number of codewords  and error probability being $M_n$ and $\eps_n$ respectively and satisfying 
\begin{align*}
\log M_n = n C(\eps ; W, P_{\hat{S}})+ n^\beta  \, r  + o\big( n^\beta \big)\quad\mbox{and}\quad \limsup_{n\to\infty}\eps_n\le\eps.
\end{align*}
 The {\em $(\eps,\beta)$-optimum second-order coding rate}~\cite{Hayashi09} $\Lambda( \eps,\beta; W, P_{\hat{S}})$ is the supremum of all   $(\eps,\beta)$-achievable second-order coding rates. Note that, although termed a ``rate'', $\Lambda( \eps,\beta; W, P_{\hat{S}})$ can in fact be negative or infinite.  This convention was also used in Hayashi's works~\cite{Hayashi08,Hayashi09}.

We remark   that there exists at most one {\em critical exponent} $\beta^* \in [\frac{1}{2},1)$ such that $r^* := \Lambda(\eps,\beta^*; W, P_{\hat{S}}) \in \mathbb{R} \setminus \{ 0 \}$ is finite and non-zero.  Then, if $\beta > \beta^*$, we necessarily have $\liminf_{n\to\infty}\frac{1}{n^\beta} \left[\log M^*(\eps_n;W^n,P_{\hat{S}}) - n C(\eps ; W, P_{\hat{S}}) \right] = 0$; on the other hand, if $\beta < \beta^*$, the $\liminf$ is infinite. The $(\eps,\beta)$-optimum second-order coding rate is thus a more general definition than the $\eps$-dispersion (defined below). Indeed, we will  show  in  Theorem~\ref{thm:subblock} that  there exists specific state distributions $P_{\hat{S}}$ that result in a  finite $\Lambda( \eps,\beta; W, P_{\hat{S}})$     for  $\beta$    being an arbitrary number in $[\frac{1}{2},1)$ and not only $\beta=\frac{1}{2}$ as in previous work on dispersion and second-order coding rates (e.g., \cite{Hayashi08, Hayashi09}).

Finally, when $\beta=\frac{1}{2}$ above, we can define, for $\eps \in (0,1)\setminus \{\frac{1}{2} \}$,   the  {\em $\eps$-dispersion}~\cite[Def.~2]{PPV10} of the channel $W$ with general state distribution $P_{\hat{S}}$  as
\begin{align*}
\Upsilon(\eps; W , P_{\hat{S}}):=\limsup_{n\to\infty} \left( \frac{\log M^*(\eps;W^n,P_{\hat{S}})- n C(\eps ; W, P_{\hat{S}}) }{\sqrt{n}\, \Phi^{-1}(\eps)}\right)^2.
\end{align*}

\section{Results for the Capacity and Strong Converse} \label{sec:cap}
In this section, we state our main results for the two first-order quantities of interest, namely the $(\eps,\Gamma)$-capacity and the optimistic  $(\eps,\Gamma)$-capacity.  We also present conditions for the strong converse property  to hold. 

\subsection{The $(\eps,\Gamma)$-Capacity and the Optimistic  $(\eps,\Gamma)$-Capacity}
In order to state the $(\eps,\Gamma)$-capacity and the optimistic  $(\eps,\Gamma)$-capacity, we define  the following quantities:
\begin{align*}
J(R|\Gamma; W,P_{\hat{S}}) := \limsup_{n\to\infty} \Pr\bigg[ \frac{1}{n}\sum_{k=1}^n C_{S_k}(\Gamma) \leq R\bigg], \quad\mbox{and}\quad J^\dagger(R|\Gamma; W,P_{\hat{S}}) := \liminf_{n\to\infty} \Pr \bigg[ \frac{1}{n}\sum_{k=1}^n C_{S_k}(\Gamma) \leq R \bigg].
\end{align*}
where  the probabilities are taken with respect to the general state sequence $\hat{S}$. Note that $\Pr [ \frac{1}{n}\sum_{k=1}^n C_{S_k}(\Gamma) \leq R]$ is the cdf of the random variable $\frac{1}{n}\sum_{k=1}^n C_{S_k}(\Gamma) = C(T_{S^n}, \Gamma)$,
so $J(\cdot|\Gamma;W, P_{\hat{S}})$ and $J^\dagger(\cdot|\Gamma;W, P_{\hat{S}})$ are the $\limsup$ and $\liminf$ of this cdf, respectively.
\begin{theorem}[Capacity and Optimistic Capacity]  \label{thm:eps_cap}
For every $\eps \in [0,1]$, 
\begin{align}
C(\eps,\Gamma;W,P_{\hat{S}}) &= \sup\big\{R \, |\,  J(R|\Gamma;W,P_{\hat{S}})\le\eps \big\} , \label{eqn:thm_cap}
\quad \textrm{and}\\
C^\dagger(\eps,\Gamma;W,P_{\hat{S}}) &=\sup\big\{R \, |\, J^\dagger(R|\Gamma; W,P_{\hat{S}})<\eps \big\}. \label{eqn:thm_cap_dual}
\end{align}
\end{theorem}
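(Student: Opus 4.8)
The plan is to obtain both identities from the two one-shot bounds introduced in Section~\ref{sec:one-shot} (a Feinstein-type achievability bound for channels with state, and the one-shot converse of \cite{tomamicheltan12}), together with the observation that, conditioned on the state sequence $S^n=s^n$, the relevant information density concentrates sharply about the empirical capacity $C(T_{s^n},\Gamma)=\frac1n\sum_{k=1}^n C_{s_k}(\Gamma)$. Since $\cX$, $\cY$, $\cS$ are finite, the per-letter log-likelihood ratios are bounded, so conditionally on $s^n$ that information density has variance $O(1/n)$ uniformly in $s^n$, and Chebyshev's inequality lets us trade it for its conditional mean at the cost of an $o(1)$ term; after averaging over $S^n$, only the cdf $\Pr[C(T_{S^n},\Gamma)\le R]$ survives, whose $\limsup$ and $\liminf$ are precisely $J(R|\Gamma;W,P_{\hat S})$ and $J^\dagger(R|\Gamma;W,P_{\hat S})$. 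Since $J(\cdot|\Gamma;W,P_{\hat S})$ and $J^\dagger(\cdot|\Gamma;W,P_{\hat S})$ are nondecreasing, the sets on the right of \eqref{eqn:thm_cap} and \eqref{eqn:thm_cap_dual} are downward-closed, which I will use to pass freely between ``$R$ is (optimistically) achievable'' and statements about these functions just below or above $R$.

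For the direct parts, fix $R$ strictly below the claimed value and $\delta>0$ small enough that $R+\delta$ still lies in the relevant set, i.e.\ $J(R+\delta|\Gamma;W,P_{\hat S})\le\eps$ for \eqref{eqn:thm_cap}, resp.\ $J^\dagger(R+\delta|\Gamma;W,P_{\hat S})<\eps$ for \eqref{eqn:thm_cap_dual}. For each state sequence $s^n$ (known at both terminals) apply the one-shot achievability bound with the product input distribution $x^n\mapsto\prod_k P_{s_k}(x_k)$, where $P_s$ attains $C_s(\Gamma)$ over $\cP(\cB_s(\Gamma))$; with $|\cM|=\lfloor\exp(nR)\rfloor$ and threshold slack $\delta/2$ the error probability is at most $\Pr[\frac1n\sum_k i(X_k;Y_k|s_k)\le R+\delta/2]+\exp(-n\delta/2)$, which by the conditional concentration above, after averaging over $S^n$, is at most $\Pr[C(T_{S^n},\Gamma)\le R+\delta]+O(1/n)+\exp(-n\delta/2)$. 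Taking this bound as $\eps_n$, we get $\limsup_n\eps_n\le J(R+\delta|\Gamma;W,P_{\hat S})\le\eps$, so $R$ is $(\eps,\Gamma)$-achievable. For the optimistic capacity, $J^\dagger(R+\delta|\Gamma;W,P_{\hat S})<\eps$ only controls $\Pr[C(T_{S^n},\Gamma)\le R+\delta]$ along some subsequence $n_j$; using the above codes on $n_j$ and taking $\eps_n=1$ (so $M^*(1,\Gamma;W^n,P_{\hat S})=\infty$) on the remaining blocklengths yields $\liminf_n\frac1n\log M^*(\eps_n,\Gamma;W^n,P_{\hat S})\ge R$ and $\liminf_n\eps_n<\eps$, i.e.\ optimistic $(\eps,\Gamma)$-achievability. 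Letting $R$ tend to the claimed value gives ``$\ge$'' in \eqref{eqn:thm_cap} and \eqref{eqn:thm_cap_dual}.

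For the converse parts, suppose $R$ is $(\eps,\Gamma)$-achievable (resp.\ optimistically so); then for every $\eta>0$ there are, for all large $n$, codes of size $M_n\ge\exp(n(R-\eta/2))$ with error $\le\eps_n$. Apply the one-shot converse of \cite{tomamicheltan12} conditioned on $S^n=s^n$ --- equivalently with the product auxiliary output distribution $y^n\mapsto\prod_k Q_{s_k}(y_k)$, where $Q_s=P_sW_s$ is the capacity-achieving output distribution of $W_s$ at cost $\Gamma$ --- with constant slack $\eta/4$, to get $\Pr[\frac1n\sum_k j_{Q_{s_k}}(X_k;Y_k|s_k)\le R-3\eta/4]\le\eps_n+\exp(-n\eta/4)$. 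For any admissible input the conditional mean of $\frac1n\sum_k j_{Q_{s_k}}(X_k;Y_k|s_k)$ equals $\frac1n\sum_k D(W_{s_k}(\cdot|X_k)\|Q_{s_k})$, and this does not exceed $C(T_{s^n},\Gamma)$ by the Karush--Kuhn--Tucker characterization of the capacity-achieving output distribution combined with concavity of $\Gamma\mapsto C_s(\Gamma)$; the same concentration estimate then gives $\Pr[C(T_{S^n},\Gamma)\le R-\eta]\le\eps_n+o(1)$. Taking $\limsup_n$ gives $J(R-\eta|\Gamma;W,P_{\hat S})\le\eps$ for all $\eta>0$, whence $R$ is at most the right-hand side of \eqref{eqn:thm_cap}. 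In the optimistic case we instead restrict to a subsequence $n_j$ along which $\eps_{n_j}\to\liminf_n\eps_n$, obtaining $J^\dagger(R-\eta|\Gamma;W,P_{\hat S})\le\liminf_n\eps_n<\eps$ for all $\eta>0$, which gives ``$\le$'' in \eqref{eqn:thm_cap_dual}.

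The two delicate points are (i) the average cost constraint in the converse: the output distribution $Q_s$ is tied to the budget $\Gamma$, whereas a codeword only obeys $\frac1n\sum_k b_{s_k}(x_k)\le\Gamma$, so one must lean on the saddle-point/KKT properties of $C_s(\cdot)$ --- including the fact that $Q_s\gg W_s(\cdot|x)$ for every admissible $x$, so the log-likelihood ratios are finite --- and on concavity of $C_s(\cdot)$ to reallocate the per-symbol budgets and conclude $\frac1n\sum_k D(W_{s_k}(\cdot|x_k)\|Q_{s_k})\le C(T_{s^n},\Gamma)$; and (ii) making sure the $o(1)$ terms from Chebyshev's inequality are uniform in $s^n$, so that they survive the expectation over $S^n$ and the ensuing $\limsup$/$\liminf$. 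The remainder is bookkeeping: the monotonicity of $J$ and $J^\dagger$, the exact matching of $\liminf$/$\limsup$ to the (optimistic) achievability definitions, and the subsequence-plus-trivial-code device needed for the two optimistic statements.
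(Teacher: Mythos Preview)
Your proposal is correct. The direct part is essentially identical to the paper's: product input distribution achieving $C_s(\Gamma)$, state-dependent Feinstein, then Chebyshev conditionally on $s^n$ to replace the information density by $C(T_{s^n},\Gamma)$, and finally average over $S^n$.

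The converse, however, takes a genuinely different route. The paper does \emph{not} fix the capacity-achieving output law $Q_s^*$ and invoke KKT. Instead it partitions $\cS^n$ into type classes and chooses
\[
Q^{(n)}(y^n|s^n)=\frac{1}{|\cP_n(\cX|\cS)|}\sum_{T_{x^n|s^n}\in\cP_n(\cX|\cS)}\ \prod_{k=1}^n T_{x^n|s^n}W(y_k|s_k),
\]
a uniform mixture over all conditional types. Dropping all but the term matching the actual joint type of $(x^n,s^n)$ costs only $|\cX||\cS|\log(n+1)$ and yields directly $\frac{1}{n}\sum_k j(X_k;Y_k|S_k)$ with conditional mean $I(T_{x^n|s^n},W|t)\le C(t)$, after which Chebyshev and a Markov-inequality step finish. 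Your approach is arguably more elementary for first order: with $Q_s=P_s^*W_s$ the KKT identity $D(W_s(\cdot|x)\|Q_s^*)\le C_s$ for \emph{every} $x$ immediately bounds the conditional mean of the log-likelihood ratio by $C(T_{s^n})$, uniformly over codewords, and the finiteness of that bound also guarantees $Q_s^*\gg W_s(\cdot|x)$ so the variances are uniformly bounded. What the paper's type-mixture construction buys is that it is exactly the device reused in Section~\ref{sec:prf_second_conv} for the second-order converse (Proposition~\ref{lem:converse_second} and Lemma~\ref{lm:psit}), where one needs to track the joint type $T_{x^n|s^n}$ and not merely bound the mean by $C(T_{s^n})$; your KKT shortcut does not extend there.

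On your delicate point (i): you are right that the averaged cost constraint does not match the per-symbol definition of $C_s(\Gamma)$, and that with state-dependent Lagrange multipliers $\lambda_s$ one cannot conclude $\sum_k\lambda_{s_k}(b_{s_k}(x_k)-\Gamma)\le 0$ from $\frac1n\sum_k b_{s_k}(x_k)\le\Gamma$ alone. The paper sidesteps this as well (it announces it will ``simply restrict the set of inputs to those that are admissible'' and then proves the converse only for $\Gamma=\infty$), so your proposal is no less complete than the paper on this score. Your point (ii) is handled exactly as you say, by the finiteness of the alphabets.
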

Theorem~\ref{thm:eps_cap} is proved in Section~\ref{sec:prf_eps_cap}.  The case of most interest is the capacity-cost function when $\eps=0$ and the optimistic capacity-cost function when $\eps = 1$. In this case, it is easy to check from the definition of $J(R|\Gamma; W, P_{\hat{S}})$ and the $\pliminf$ that \eqref{eqn:thm_cap} reduces to 
\begin{align}
C(\Gamma; W,P_{\hat{S}})
=\pliminf_{n\to\infty} \frac{1}{n}\sum_{k=1}^n C_{S_k}(\Gamma) .\label{eqn:cap0}
\end{align}
Similarly, it is easy to verify that for $\eps=1$, the optimistic capacity in~\eqref{eqn:thm_cap_dual} reduces to 
\begin{align}
C^\dagger(\Gamma; W,P_{\hat{S}})= \plimsup_{n\to\infty} \frac{1}{n}\sum_{k=1}^n C_{S_k}(\Gamma) \label{eqn:cap1}
\end{align}
and, thus, $C(\Gamma; W, P_{\hat{S}}) \leq C^{\dagger}(\Gamma; W, P_{\hat{S}})$.
Note that both the $\eps$-capacity-cost function and its optimistic  version are expressed solely in terms of the sequence of random variables $\frac{1}{n}\sum_{k=1}^n C_{S_k}(\Gamma)$. This is because the channel $W^n$ is well-behaved; it is a memoryless and stationary channel from $\cX^n\times\cS^n$ to $\cY^n$ and thus has an effective first-order characterization in terms of  the capacity-cost functions $C_s(\Gamma), s\in\cS$. However, the state process $\hat{S}$ is general so, naturally, from information spectrum analysis~\cite{Han10, chen99}, we need to  invoke  probabilistic limit operations for the limiting cases of $\eps=0$ for the capacity and $\eps=1$ for the optimistic capacity. 

\begin{remark} \label{cor:nostate}
The converse bound in Theorem~\ref{thm:eps_cap} applies if no state information is present at the encoder and/or decoder. Furthermore, if state information is present \emph{only at the decoder}~\cite{ingber10} and if additionally 
$$\bigcap_{s \in \cS} \argmax_{P \in \cP(\cX)} I(P, W_s)$$ is non-empty, then the direct bounds in Theorem~\ref{thm:eps_cap} apply. The condition that the set above is non-empty is equivalent to saying that there exists a capacity-achieving input distribution  for each channel $P_{X|S}^*(\cdot|s)$ that is  identical across channels indexed by $s\in\cS$.
\end{remark}

The latter statement of the remark can be verified by inspecting the proof of Theorem~\ref{thm:eps_cap}. Simply note that capacity is achieved using Shannon-type i.i.d.\  random codes distributed according to the law $P_{X|S}$, where we choose $$P_{X|S}(\cdot|s) \in \argmax_{P \in \cP(\cX)} I(P, W_s)$$ for all $s \in \cS$.
If no side-information is available at the encoder, the corresponding one-shot bound still allows to choose a law $P_X \in \cP(\cX)$, independent of $s \in \cS$. This is not restrictive if $\bigcap_{s \in \cS} \argmax_{P \in \cP(\cX)} I(P, W_s)$ is non-empty.

\subsection{Strong Converse}
Uniting \eqref{eqn:cap0} and \eqref{eqn:cap1}  and recalling the definition of the strong converse property (as stated in~\cite{hayashi03}), we  immediately obtain the following:
\begin{corollary}[Strong Converse]\label{thm:str_conv}
A necessary and sufficient condition for the strong converse property to hold is 
\begin{align}
\pliminf_{n\to\infty} \frac{1}{n}\sum_{k=1}^n C_{S_k}(\Gamma) =\plimsup_{n\to\infty} \frac{1}{n}\sum_{k=1}^n C_{S_k}(\Gamma).\label{eqn:liminf_sup} 
\end{align}
\end{corollary}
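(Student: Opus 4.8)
The plan is to derive Corollary~\ref{thm:str_conv} directly from Theorem~\ref{thm:eps_cap} by specializing to the two extreme values of $\eps$. First I would recall that the strong converse property, as stated just above the corollary, is the equality $C(\Gamma; W, P_{\hat{S}}) = C^\dagger(\Gamma; W, P_{\hat{S}})$ for all $\Gamma > 0$, where $C(\Gamma; W, P_{\hat{S}}) = C(0, \Gamma; W, P_{\hat{S}})$ and $C^\dagger(\Gamma; W, P_{\hat{S}}) = C^\dagger(1, \Gamma; W, P_{\hat{S}})$. So the task is simply to evaluate the right-hand sides of \eqref{eqn:thm_cap} and \eqref{eqn:thm_cap_dual} at $\eps = 0$ and $\eps = 1$ respectively, and to identify them with the probabilistic $\liminf$ and $\limsup$ of the sequence $A_n := \frac{1}{n}\sum_{k=1}^n C_{S_k}(\Gamma)$.

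The bulk of the work is the two elementary identities \eqref{eqn:cap0} and \eqref{eqn:cap1}, which the excerpt already asserts are ``easy to check''; I would spell them out. For \eqref{eqn:cap0}: plugging $\eps = 0$ into \eqref{eqn:thm_cap} gives $C(\Gamma; W, P_{\hat{S}}) = \sup\{ R \mid J(R|\Gamma; W, P_{\hat{S}}) \le 0 \}$, and since $J(R|\Gamma; W, P_{\hat{S}}) = \limsup_{n\to\infty} \Pr[A_n \le R]$ is a probability hence nonnegative, the condition $J(R|\Gamma;\cdot) \le 0$ is equivalent to $\limsup_{n\to\infty}\Pr[A_n \le R] = 0$. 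A small subtlety worth a sentence: the definition of $\pliminf$ uses the strict inequality $\Pr[A_n < a]$ whereas $J$ uses $\Pr[A_n \le R]$; one checks that the resulting suprema coincide because $\sup\{R : \limsup_n \Pr[A_n \le R] = 0\} = \sup\{a : \limsup_n \Pr[A_n < a] = 0\}$ (if $\limsup_n \Pr[A_n \le R] = 0$ then for any $a < R$, $\Pr[A_n < a] \le \Pr[A_n \le R]$ so $\limsup_n \Pr[A_n < a] = 0$; conversely $\limsup_n \Pr[A_n \le R] \le \limsup_n \Pr[A_n < R']$ for $R' > R$). Hence the supremum equals $\pliminf_{n\to\infty} A_n$. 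For \eqref{eqn:cap1}: plugging $\eps = 1$ into \eqref{eqn:thm_cap_dual} gives $C^\dagger(\Gamma; W, P_{\hat{S}}) = \sup\{ R \mid J^\dagger(R|\Gamma;\cdot) < 1 \}$ with $J^\dagger(R|\Gamma;\cdot) = \liminf_{n\to\infty} \Pr[A_n \le R]$, so the condition is $\liminf_{n\to\infty}\Pr[A_n \le R] < 1$, i.e. $R$ fails to satisfy $\liminf_n \Pr[A_n \le R] = 1$; taking the supremum over such $R$ and comparing with the displayed formula for $\plimsup_{n\to\infty} A_n = \inf\{ a \mid \liminf_n \Pr[A_n \le a] = 1 \}$ shows the two agree (the $\sup$ of the complement of the set $\{a : \liminf_n \Pr[A_n \le a] = 1\}$ equals the $\inf$ of that set because the set is an up-set in $\mathbb{R}$, since $\Pr[A_n \le a]$ is nondecreasing in $a$).

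With \eqref{eqn:cap0} and \eqref{eqn:cap1} in hand, the strong converse equality $C(\Gamma; W, P_{\hat{S}}) = C^\dagger(\Gamma; W, P_{\hat{S}})$ for all $\Gamma > 0$ is, term by term, exactly \eqref{eqn:liminf_sup}; and since the general inequality $\pliminf_n A_n \le \plimsup_n A_n$ always holds (noted in the excerpt right after \eqref{eqn:cap1}), the equality is the substantive content. This completes the argument.

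I do not anticipate a genuine obstacle here: everything reduces to Theorem~\ref{thm:eps_cap}, which is assumed, plus bookkeeping with suprema and infima of monotone functions. The only place that needs care — and the one I would treat as the ``main point'' rather than a true difficulty — is reconciling the strict-versus-non-strict inequalities in the definitions of $J$, $J^\dagger$ on the one hand and $\pliminf$, $\plimsup$ on the other, and making sure the endpoints of the relevant suprema/infima match; this is a one-line monotonicity observation but should be stated explicitly so the reader is not left to wonder.
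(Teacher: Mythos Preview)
Your proposal is correct and matches the paper's approach exactly: the paper simply states that ``uniting \eqref{eqn:cap0} and \eqref{eqn:cap1} and recalling the definition of the strong converse property\ldots\ we immediately obtain'' the corollary, without spelling out the strict-versus-non-strict bookkeeping you carefully supply. If anything, your write-up is more detailed than the paper's, which treats the corollary as an immediate consequence of the two displayed specializations of Theorem~\ref{thm:eps_cap}.
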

In other words, for the strong converse to hold, the sequence of random variables $\frac{1}{n}\sum_{k=1}^n C_{S_k}(\Gamma)$ must converge (pointwise in probability) to $C(\Gamma) \ge 0$. Furthermore, when the strong converse holds, $C(\Gamma)$ is the capacity-cost function  which coincides with the optimistic capacity-cost function of the channel $W$ with general state $\hat{S}$. This condition is analogous to \cite[Thm.~1.5.1]{Han10} for almost-lossless source coding and   \cite[Thm.~3.5.1]{Han10} for channel coding.

While Corollary~\ref{thm:str_conv} provides a necessary and sufficient condition for the strong converse to hold, it requires the full statistics of  $\hat{S}$. Thus \eqref{eqn:liminf_sup} may be hard to verify in practice. We provide a simpler condition for the strong converse to hold that is based only on first- and second-order statistics.
\begin{corollary}[Sufficient Condition for Strong Converse]\label{cor:suff_str_conv}
The strong converse holds with capacity-cost function $C(\Gamma) \geq 0$ if the following limit  exists
\begin{align}
&\lim_{n \to \infty} \frac{1}{n} \sum_{k=1}^n\Exp[C_{S_k}(\Gamma)] = C(\Gamma) \quad \textrm{and} \label{eqn:limt_cond}\\
&\lim_{n \to \infty} \frac{1}{n^2} \sum_{k=1}^n\sum_{l=1}^n \Cov\left[C_{S_k}(\Gamma),C_{S_l}(\Gamma)\right] = 0 \label{eqn:cov_cond} 
\end{align}
\end{corollary}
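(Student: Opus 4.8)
## Proof Proposal for Corollary~\ref{cor:suff_str_conv}

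The plan is to reduce everything to Corollary~\ref{thm:str_conv}. By that corollary, the strong converse holds precisely when the sequence of random variables $A_n := \frac{1}{n}\sum_{k=1}^n C_{S_k}(\Gamma) = C(T_{S^n},\Gamma)$ satisfies $\pliminf_{n\to\infty} A_n = \plimsup_{n\to\infty} A_n$, and (as noted in the discussion following that corollary) the common value is then the capacity-cost function. So it suffices to show that, under \eqref{eqn:limt_cond}--\eqref{eqn:cov_cond}, the sequence $A_n$ converges \emph{in probability} to the constant $C(\Gamma)$; this forces $\pliminf_{n\to\infty} A_n = \plimsup_{n\to\infty} A_n = C(\Gamma)$, which is exactly \eqref{eqn:liminf_sup}.

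The first step is a second-moment estimate. One computes $\Exp[A_n] = \frac{1}{n}\sum_{k=1}^n \Exp[C_{S_k}(\Gamma)]$ and $\Var[A_n] = \frac{1}{n^2}\sum_{k=1}^n\sum_{l=1}^n \Cov[C_{S_k}(\Gamma),C_{S_l}(\Gamma)]$. Hypothesis \eqref{eqn:limt_cond} gives $\Exp[A_n]\to C(\Gamma)$ and hypothesis \eqref{eqn:cov_cond} gives $\Var[A_n]\to 0$. Hence, for any fixed $\delta>0$, Chebyshev's inequality yields
$$
\Pr\big[\,|A_n - \Exp[A_n]| \geq \delta\,\big] \;\leq\; \frac{\Var[A_n]}{\delta^2} \;\xrightarrow[n\to\infty]{}\; 0 ,
$$
and combining this with $\Exp[A_n]\to C(\Gamma)$ gives $\Pr[\,|A_n - C(\Gamma)|\geq\delta\,]\to 0$ for every $\delta>0$, i.e.\ $A_n\to C(\Gamma)$ in probability.

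The second step translates this into the required equality of probabilistic limits, using the $\sup/\inf$ characterizations from Section~\ref{sec:prelims}. For any $a<C(\Gamma)$ one has $\Pr[A_n<a]\leq\Pr[|A_n-C(\Gamma)|\geq C(\Gamma)-a]\to 0$, hence $a\leq\pliminf_{n\to\infty}A_n$; for any $a>C(\Gamma)$ one has $\Pr[A_n\leq a]\geq\Pr[|A_n-C(\Gamma)|<a-C(\Gamma)]\to 1$, hence $a\geq\plimsup_{n\to\infty}A_n$. Letting $a\to C(\Gamma)$ from either side sandwiches $C(\Gamma)\leq\pliminf_{n\to\infty}A_n\leq\plimsup_{n\to\infty}A_n\leq C(\Gamma)$, so all three agree and Corollary~\ref{thm:str_conv} applies. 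Non-negativity $C(\Gamma)\geq 0$ is automatic, since each $C_{S_k}(\Gamma)$ is a mutual information and hence non-negative, so $\Exp[A_n]\geq 0$ and the limit $C(\Gamma)\geq 0$. There is no deep obstacle here: the argument is a routine weak-law-of-large-numbers computation feeding into Corollary~\ref{thm:str_conv}, the only point needing mild care being the final unwinding of the $\pliminf$/$\plimsup$ definitions to confirm that convergence in probability to a constant pins both limits to that constant.
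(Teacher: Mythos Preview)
Your proposal is correct and follows exactly the approach the paper indicates: the paper's proof is simply the remark that \eqref{eqn:limt_cond} and \eqref{eqn:cov_cond} imply \eqref{eqn:liminf_sup} via Chebyshev's inequality, and you have spelled out precisely that argument, including the clean verification that convergence in probability to a constant forces the $\pliminf$ and $\plimsup$ to coincide at that constant.
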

It is easy to verify that~\eqref{eqn:limt_cond} and~\eqref{eqn:cov_cond} imply~\eqref{eqn:liminf_sup} using Chebyshev's inequality.
Observe that if the general source $\hat{S}$ decorrelates quickly such that  $\Cov\left[C_{S_k}(\Gamma),C_{S_l}(\Gamma)\right]$ is small for large lags $|k-l|$, then the covariance condition in~\eqref{eqn:cov_cond} is likely to be satisfied. In the following subsection, we provide some examples for which the covariance condition  either holds or is violated.

\subsection{Examples} \label{sec:examples}
In this section, we provide five examples to illustrate the generality of the model and the strong converse conditions. We assume that there are no cost constraints here.

\begin{example}[Mixed channels] \label{eg:mixed}
Let $S \leftarrow Q_S \in \cP(\cS)$ for $\cS = \{1, 2, \ldots, m\}$ and $C_1 < C_2 < \ldots < C_m$. Suppose the general source (state sequence) $\hat{S} := \{S^n = (S^{(n)}_1,\ldots, S^{(n)}_n)\}_{n=1}^{\infty}$ is such that $S^{(n)}_j= S$ for all $n\in\mathbb{N}$ and all  $1\le j\le n$.  Then, each covariance in \eqref{eqn:cov_cond} is equal to $\Var_{S\leftarrow Q_S}[C_{S}]$. If this variance is positive, then neither the sufficient condition in Proposition~\ref{cor:suff_str_conv} nor the necessary condition in~\eqref{eqn:liminf_sup} is satisfied. For such a state sequence, which corresponds to a {\em mixed channel}~\cite[Sec.~3.3]{Han10}, the $\eps$-capacity and optimistic $\eps$-capacity are given by
\begin{align*}
  C(\eps; W, P_{\hat{S}}) = \sup \{ R \, |\, \Pr [ C_S \leq R ] \leq \eps \} \quad \textrm{and} \quad C^{\dagger}(\eps; W, P_{\hat{S}}) = \inf \{ R \, |\, \Pr [ C_S \leq R ] \geq \eps \}.
\end{align*}
The two capacities coincide except at values of $\eps$ where $\sum_{s = 1}^{\ell} Q_S(s) = \eps$ for some $1\leq \ell \leq m$. There, the capacities are discontinuous and $C^{\dagger}(\eps; W, P_{\hat{S}}) = C_{\ell-1} < C(\eps; W, P_{\hat{S}}) = C_{\ell}$.\footnote{Note that $C(\eps; W, P_{\hat{S}})$ and  $C^{\dagger}(\eps; W, P_{\hat{S}})$ are  upper semi-continuous and lower semi-continuous in $\eps$ respectively.}
\end{example}

\begin{example}[State is i.i.d.]\label{eg:iid}
Suppose the source $\hat{S}$ is i.i.d.\ with common distribution $ \pi \in \cP(\cS)$. Then, $\Cov\left[C_{S_k},C_{S_l}\right]=0$ for $k\ne l$ and hence the  double sum in~\eqref{eqn:cov_cond} is simply $n \Var[C_S]$. This grows linearly in $n$ and hence, the strong converse condition holds with
\begin{align}
C(W, P_{\hat{S}}) =C(\pi)=\max_{P\in\cP(\cX|\cS)}I(P,W|\pi).  \label{eqn:capacity_iid}
\end{align} 
This recovers an elementary  and classical result by Wolfowitz~\cite[Thm.~4.6.1]{Wolfowitz}. Also see~\cite[Sec.~7.4.1]{elgamal} where the direct part is proved using multiplexing as in~\cite{Gold97}. 
\end{example}

\begin{example}[State is block i.i.d.] \label{eg:iid_mixed}
We now consider a mixture of the preceding two examples. Let $\nu \in  (0,1]$ and $\hat{S} := \{S^n = (S^{(n)}_1,\ldots, S^{(n)}_n)\}_{n=1}^{\infty}$. Define $d := \lfloor n^\nu \rfloor$ and write $n = m d+r$ where $0\le r< m$.  Partition the state sequence $S^n$ into $d+1$ subblocks where the first $d$ subblocks  are  each of length $m$ and the final subblock is of length $r$. Each subblock is independently assigned a state from some  common distribution $ \pi \in \cP(\cS)$ and that state is constant within each subblock. Thus, within each subblock (most of which have lengths  $\Theta(n^{1-\nu})$), the channel resembles a mixed channel (cf.~Example~\ref{eg:mixed}), while across the $\Theta(n^\nu)$ subblocks the channel evolves independently (cf.~Example~\ref{eg:iid}). Now the limit in \eqref{eqn:limt_cond} exists and the double sum of covariances in \eqref{eqn:cov_cond} can be computed to be $ (m^2 d + r^2  )\, \Var[C_S]$ which is of the order $\Theta( n ^{2-\nu})$. Since $\nu >0$, the covariance condition in~\eqref{eqn:cov_cond}  is satisfied and hence the strong converse condition holds with $C(W, P_{\hat{S}})=C(\pi)$ as in \eqref{eqn:capacity_iid}. Note that  Example~\ref{eg:mixed} corresponds to the limiting case $\nu\to 0$ and Example \ref{eg:iid} corresponds to the case $\nu=1$. 

\end{example}

\begin{example}[State is Markov]  \label{eg:markov}
Suppose that  the source  $\hat{S}$ evolves according to a time-homogenous, irreducible and ergodic (i.e.\ aperiodic and positive recurrent) Markov chain. Such a Markov chain admits a unique stationary distribution $\pi \in \cP(\cS)$. It is easy to check that $\abs{\Cov\left[C_{S_k},C_{S_l}\right]} \leq ae^{- b |k-l|}$ for some $a,b>0$ (see, e.g.~\cite[Thm.~3.1]{Bradley}). Also, $\frac{1}{n}\sum_{k=1}^n\Exp[C_{S_k}]\to\sum_s \pi(s) C_s$ regardless of the initial distribution. Hence, both~\eqref{eqn:limt_cond} and \eqref{eqn:cov_cond} are satisfied and this channel with Markov states admits a strong converse with  $C(W,P_{\hat{S}})=C(\pi)$ as in \eqref{eqn:capacity_iid}.

A variation of the Gilbert-Elliott channel~\cite{Gilbert, Elliott, Mush} with state information at the encoder and decoder is modeled in this way. In fact, since the above capacity can be achieved even without state information at the encoder (cf.\ Remark~\ref{cor:nostate}), our result recovers the strong converse for the regular Gilbert-Elliott channel.
\end{example}

\begin{example}[State is memoryless but non-stationary] \label{eg:weird}
The covariance condition~\eqref{eqn:cov_cond} is not sufficient in general. Consider the memoryless (but non-stationary) source $\hat{S}:= \{S^n = (S_1,\ldots, S_n)\}_{n=1}^{\infty}$ given by 
\begin{align}
S_k = \left\{ 
\begin{array}{cc}
S_{\mathrm{a}}& k\in\mathcal{J}\\
S_{\mathrm{b}} & k\notin\mathcal{J},
\end{array}
\right. \label{eqn:state_J}
\end{align}
where $\mathcal{J} :=\{ i \in\mathbb{N}: 2^{2k-1}\le i < 2^{2k}, k\in \mathbb{N}\}$.  This source is inspired by Example~3.2.3 in~\cite{Han10}. Since $\hat{S}$  is memoryless, just as in Example~\ref{eg:iid}, $\Cov\left[C_{S_k},C_{S_l}\right]=0$ for $k\ne l$. Hence, the covariance  condition is satisfied since the double sum scales as $\Theta(n)$. However, 
\begin{align*}
C(W,P_{\hat{S}})  =\pliminf_{n\to\infty} \frac{1}{n}\sum_{k=1}^n C_{S_k} =\frac{2c}{3}+\frac{d}{3},\quad\mbox{and} \quad 
C^\dagger(W,P_{\hat{S}}) =\plimsup_{n\to\infty} \frac{1}{n}\sum_{k=1}^n C_{S_k}  =\frac{ c}{3}+\frac{2d}{3}, 
\end{align*}
where the parameters $c:=\min\{\Exp[C_{S_{\mathrm{a}}}], \Exp[C_{S_{\mathrm{b}}}]\}$ and $d:=\max\{\Exp[C_{S_{\mathrm{a}}}], \Exp[C_{S_{\mathrm{b}}}]\}$.  If $ \Exp[C_{S_{\mathrm{a}}}]\ne  \Exp[C_{S_{\mathrm{b}}}]$, then $c<d$ and hence the necessary and sufficient condition in Corollary~\ref{thm:str_conv} is not satisfied and the strong converse property does not hold.   However, it can be verified  that the $\eps$-capacity for   $\eps \in [0,1)$ and optimistic $\eps$-capacity  for   $\eps \in (0,1]$ are equal to $C(W,P_{\hat{S}})$ and $C^\dagger(W,P_{\hat{S}}) $ respectively. This is a channel in which the strong converse  property does not hold but the $\eps$-capacity does not depend on $\eps\in [0,1)$ (cf.~\cite[Rmk.~3.5.1]{Han10}). 
\end{example}

\section{Results for the Second-Order Coding Rate  and  the Dispersion}\label{sec:second}

In this section, we state our main results for the second-order coding rates. We start by allowing the state sequence $\hat{S}:= \{S^n = (S^{(n)}_1,\ldots, S^{(n)}_n)\}_{n=1}^{\infty}$ to be general but discrete (as before). Subsequently, we revisit Examples~\ref{eg:mixed}--\ref{eg:weird} and derive explicit expressions for the  second-order coding rates and dispersions for these channels with states. In this section, in order not to complicate the exposition, we assume that there are no cost constraints, we use $\Ceps$ to denote $C(\eps; W, P_{\hat{S}})$, the capacity-achieving input distribution for each channel $W_s$ is unique\footnote{Let us further comment on the assumption that the  capacity-achieving input distribution for each channel $W_s$ is unique. This means that the the  set $\Pi(W_s):=\big\{ P\in\cP(\cX) \, \big|\, I(P,W_s)=C(W_s)\big\}$ is a singleton. In other words, the function $P\mapsto I(P,W_s)$ is strictly concave and so admits a unique maximum. This is easily seen to be satisfied for {\em almost all} (in a measure-theoretic sense) DMCs. More precisely, if the entries of the matrix $\{ W_s(y|x): x\in\cX,y\in\cY\}$ are randomly generated from a continuous probability distribution and then normalized so that $W_s(\cdot|x)$ sums up to one, then $\Pi(W_s)$ is a singleton and $P_s^*$ is unique almost surely. In addition,  the information dispersion $V_s$ is positive.  Canonical DMCs like binary symmetric channels, binary erasure channels, the z-channel all have unique   capacity-achieving input distributions. All input symmetric DMCs have  unique capacity-achieving input distributions that are uniform over $\cX$. In fact, all that we require in the following is that $\min\{V(P,W_s)\,|\, P\in\Pi(W_s)\}=\max\{V(P,W_s)\,|\, P\in\Pi(W_s)\}$ for all $s\in\cS$, which is  true if all the sets of  capacity-achieving input distributions $\Pi(W_s), s\in\cS$ are singletons. }
 and denoted by $P_s^*$, and  the information dispersions of all channels  are positive (i.e.,  $V_{\min}:=\min_{s\in\cS} V_s>0$).%
  
\subsection{The $(\eps, \beta)$-Optimum Second-Order Coding Rate for Channels with General States}
Let $T_{S^n} \in\cP_n(\cS)$ be the   type of the random length-$n$ state sequence $S^n=(S_1,\ldots, S_n)$. In order to state the $(\eps, \beta)$-optimum second-order coding rate in full generality, it is convenient to first define  the quantity
\begin{align}
K(r | R, \beta ; W, P_{\hat{S}}) := \limsup_{n\to\infty} \Exp\left[ \Phi\left( \frac{nR + n^\beta r -n C(T_{S^n})}{\sqrt{n V(T_{S^n})}}\right) \right],\label{eqn:def_expect_T}
\end{align}
where recall that $C(T_{S^n}) :=\sum_s T_{S^n}(s) C_s = \frac{1}{n} \sum_{k=1}^n C_{S_k}$ and $V(T_{S^n}) := \sum_s T_{S^n}(s) V_s = \frac{1}{n} \sum_{k=1}^n V_{S_k}$ are the average capacity and average information dispersion  with respect to the type $T_{S^n}$.  Note that the expectation in~\eqref{eqn:def_expect_T} is with respect to the  random type $T_{S^n}$ and since $V_{\min}>0$, the denominator  is positive.  The quantity $K(r | R,\beta ; W, P_{\hat{S}})$ plays a role that is similar to that played by $J(R|W,P_{\hat{S}})$ for the characterization of the $\eps$-capacity in Theorem~\ref{thm:eps_cap}. 

\begin{theorem}[General Second-Order Coding Rate] \label{thm:sec_gen}
For every $(\eps ,\beta)\in (0,1)\times[\frac{1}{2}, 1)$, 
\begin{equation}
\Lambda( \eps,\beta; W, P_{\hat{S}}) = \sup\left\{ r \, \big|\,  K(r |\, \Ceps,\beta ; W, P_{\hat{S}})\le\eps\right\}.\nonumber
\end{equation}
\end{theorem}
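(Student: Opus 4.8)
## Proof Plan for Theorem~\ref{thm:sec_gen}

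The plan is to prove matching achievability (direct) and converse bounds, both reduced to the one-shot results announced in Section~\ref{sec:one-shot} and then evaluated asymptotically using a Berry--Esseen argument conditioned on the realized type of the state sequence. The key structural observation is that, conditioned on $S^n = s^n$ with type $T_{s^n}$, the channel $W^n(\cdot|\cdot,s^n)$ is a memoryless but non-stationary DMC, so its finite-blocklength behaviour is governed by the empirical capacity $C(T_{s^n})$ and empirical dispersion $V(T_{s^n})$ via a non-i.i.d.\ Berry--Esseen estimate. Averaging over the random type $T_{S^n}$ then produces exactly the expectation $\Exp[\Phi(\cdot)]$ appearing in the definition of $K(r|R,\beta;W,P_{\hat S})$.

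\textbf{Direct part.} First I would fix $r < \Lambda$-candidate and pick $R = \Ceps$. Using the extension of Feinstein's lemma to channels with state (the one-shot achievability bound from Section~\ref{sec:one-shot}), applied with the i.i.d.\ input law $P^*(\cdot|s) = P^*_s(\cdot)$, one obtains for each $s^n$ a code of size $M$ over the channel $W^n(\cdot|\cdot,s^n)$ with conditional error probability bounded in terms of a tail of $\sum_k i(X_k;Y_k|s_k)$ plus a penalty term $\frac{M}{\gamma}$ (or similar). Conditioned on $s^n$, the summands $i(X_k;Y_k|s_k)$ are independent with means summing to $nC(T_{s^n})$ and variances summing to $nV(T_{s^n}) \ge nV_{\min} > 0$, and uniformly bounded third moments (finite alphabets). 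A Berry--Esseen bound then gives, for $\log M = n\Ceps + n^\beta r$,
\begin{equation*}
p_{\mathrm{err}}(s^n) \le \Phi\!\left( \frac{n\Ceps + n^\beta r - nC(T_{s^n})}{\sqrt{nV(T_{s^n})}} \right) + \frac{c}{\sqrt{n}},
\end{equation*}
for a universal constant $c$, uniformly over all $s^n$. Taking expectation over $S^n$ and $\limsup$ yields $\limsup_n \Exp[p_{\mathrm{err}}(S^n)] \le K(r|\Ceps,\beta;W,P_{\hat S})$. Hence if $K(r|\Ceps,\beta;W,P_{\hat S}) \le \eps$ then $r$ is achievable (choosing $\eps_n = \Exp[p_{\mathrm{err}}(S^n)] + o(1)$), giving $\Lambda \ge \sup\{r \mid K(r|\Ceps,\beta)\le\eps\}$. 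A subtlety I should handle: the definition of achievability requires $\limsup \eps_n \le \eps$, so strictly I take $r$ with $K(r|\ldots) < \eps$ and then pass to the supremum by a limiting/continuity argument in $r$ (the set $\{r \mid K \le \eps\}$ is an interval since $K$ is nondecreasing in $r$).

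\textbf{Converse part.} For the converse I would use the one-shot converse bound of the authors (cited as \cite{tomamicheltan12}) applied conditionally on each $s^n$ to the channel $W^n(\cdot|\cdot,s^n)$: any code of size $M$ has conditional error probability at least a hypothesis-testing-type quantity, which after a standard change of measure and choosing the output law $Q = P^*_sW_s$ per coordinate, is lower bounded by $\Phi\big((\log M - nC(T_{s^n}))/\sqrt{nV(T_{s^n})}\big) - c/\sqrt{n}$ again by Berry--Esseen, uniformly in $s^n$. (Here I need the standard argument that the converse dispersion uses the same $V(T_{s^n})$ because the minimizing output distribution coincides with $P^*_sW_s$ under the uniqueness assumption on $P^*_s$; this is exactly where the hypothesis on the $V_s$'s stated in Section~\ref{sec:second} enters.) Averaging over $S^n$: if $\log M_n = n\Ceps + n^\beta r + o(n^\beta)$ then $\liminf_n \eps_n \ge K^{-}(r|\Ceps,\beta)$ where $K^{-}$ uses $\liminf$; since for the optimum rate we only need that $r > \sup\{r' \mid K(r'|\ldots)\le\eps\}$ forces $\limsup\eps_n > \eps$, monotonicity of $K$ in $r$ and the definition via $\limsup$ close the argument. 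One must also verify that $nC(\eps;W,P_{\hat S})$ is the correct first-order normalization, i.e.\ that using $R \ne \Ceps$ cannot help — this follows because if $R > \Ceps$ the first-order term already forces the error to $1$ (by Theorem~\ref{thm:eps_cap}), and if $R < \Ceps$ the second-order rate is $+\infty$, consistent with the $K$-characterization evaluated at $\Ceps$.

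\textbf{Main obstacle.} The technical heart is the \emph{uniform} Berry--Esseen estimate over all state sequences $s^n$ simultaneously, with the $o(\cdot)$ error terms controlled independently of the type $T_{s^n}$. Since $\cS$ is finite and $V_{\min} > 0$, the third absolute moments and the variance lower bound are uniformly controlled, so the Berry--Esseen constant is uniform; the one-shot penalty terms must be chosen (e.g.\ $\gamma = \sqrt{n}$) so they contribute only $O(1/\sqrt n)$ regardless of $s^n$. The remaining delicate point is the interchange of $\limsup_n$ with the expectation $\Exp_{S^n}$: since $p_{\mathrm{err}}(S^n) \le \Phi(\cdots) + c/\sqrt n$ pointwise and $\Phi \le 1$, dominated convergence is not needed — the inequality $\limsup_n \Exp[p_{\mathrm{err}}(S^n)] \le \limsup_n \big(\Exp[\Phi(\cdots)] + c/\sqrt n\big) = K(r|\Ceps,\beta;W,P_{\hat S})$ holds directly. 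Putting the two bounds together gives equality $\Lambda(\eps,\beta;W,P_{\hat S}) = \sup\{r \mid K(r|\Ceps,\beta;W,P_{\hat S}) \le \eps\}$.
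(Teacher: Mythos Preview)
Your direct part is essentially the paper's argument: Feinstein with $P^*$, condition on $s^n$, non-i.i.d.\ Berry--Esseen (valid since $V_{\min}>0$ and third moments are uniformly bounded), then average over $S^n$. Nothing is missing there.

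The converse, however, has a real gap at the step you label ``the standard argument''. When you fix $s^n$ and apply the one-shot converse, the quantity you Berry--Esseen is $\sum_k j_Q(x_k;Y_k|s_k)$ for the \emph{codeword} $x^n$ chosen by the (unknown) encoder. Its mean and variance are governed by the conditional input type $T_{x^n|s^n}$, not by $s^n$ alone: the mean is $n\,I(T_{x^n|s^n},W|T_{s^n})$, not $nC(T_{s^n})$, and the variance is $n\,V(T_{x^n|s^n},W|T_{s^n})$ (or, with your choice $Q=P^*_sW_s$, the per-letter variances $\Var_Y[\log W(Y|x,s)/P^*_sW_s(Y)]$), not $nV(T_{s^n})$. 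So Berry--Esseen does \emph{not} hand you $\Phi\big((\log M - nC(T_{s^n}))/\sqrt{nV(T_{s^n})}\big)$ directly. Getting from the arbitrary input type to $(C(T_{s^n}),V(T_{s^n}))$ is precisely the technical heart of the converse, and it requires a two-case argument: (a) if $T_{x^n|s^n}(\cdot|s)$ is within a fixed $\mu$-ball of $P^*_s$ for all $s$ carrying non-negligible weight, use local expansions $I\le C_s-\alpha\|\cdot\|^2$ and $|V-V_s|\le\beta\|\cdot\|$ to absorb the discrepancy into an $O((\log n)/n)$ rate penalty; (b) otherwise some $s^*$ has $T_{x^n|s^n}(\cdot|s^*)$ far from $P^*_{s^*}$, the mean drops by a fixed $\delta>0$, and Chebyshev already beats the target bound. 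This case analysis must hold \emph{uniformly} over all $t\in\cP_n(\cS)$ and all admissible $f$, with constants independent of $n$; the paper packages this as a separate ``Uniform Bound'' lemma (its Lemma~\ref{lm:psit}, relying on the continuity Lemma~\ref{lm:Pi-mu}), and it is the longest part of the proof, not a one-liner.

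A secondary difference worth noting: the paper does not take $Q=P^*_sW_s$. It uses the state-dependent function converse (Proposition~\ref{thm:one-shot-conv}) with the \emph{mixture} output law $Q^{(n)}(y^n|s^n)\propto\sum_{T_{x^n|s^n}}\prod_k T_{x^n|s^n}W(y_k|s_k)$ over all conditional types, paying only a $|\cX||\cS|\log(n+1)$ penalty. This choice makes the $j_Q$-statistic reduce cleanly to $\sum_k j_{T_{X^n|S^n}}(X_k;Y_k|S_k)$, so the subsequent case analysis is about $I(T_{x^n|s^n},W|t)$ and $V(T_{x^n|s^n},W|t)$ rather than divergences against the fixed $P^*_sW_s$. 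Your choice of $Q$ can also be made to work, but it does not sidestep the case analysis; either way you must prove the analogue of Lemma~\ref{lm:psit}.
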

The proof of Theorem~\ref{thm:sec_gen}  is provided in Section~\ref{sec:prf_second}. The direct part uses a state-dependent Feinstein lemma~\cite{feinstein, Lan06} and the converse uses a recently-developed  converse technique by the  present authors~\cite{tomamicheltan12}. 

Theorem~\ref{thm:sec_gen}, which holds for any general discrete state distribution, can be interpreted as follows: Fix some blocklength $n$. If the type of $S^n = (S_1, \ldots, S_n)$  is known to be  some $t \in \cP_n(\cS)$ and we code at a rate of $C_\eps   + n^{\beta -1}r$, then it can be shown by the central limit theorem or its more quantitative variants such as the Berry-Esseen theorem~\cite[Sec.\ XVI.7]{feller} that the resultant optimal error probability is approximately 
\begin{equation*}
\Phi\left( \frac{n C_\eps    + n^\beta r -n C(t)}{\sqrt{n V(t)}}\right).
\end{equation*}
Since the state type is random, we expect that the optimal error probability as $n$ becomes large is given by $K(r | \Ceps, \beta ; W, P_{\hat{S}}) $, the limit superior of the expectation of   error probabilities conditioned on various  state types. 

\subsection{Specializations to    Various State Models}
While Theorem~\ref{thm:sec_gen} is a complete characterization of the $(\eps, \beta)$-optimum second-order coding rate, it is not computable in general as the discrete state sequence is allowed to be arbitrary. For example, it can be non-memoryless, non-stationary or non-ergodic. Thus, it is insightful to specialize this general result to various more tractable scenarios such as mixed channels, i.i.d.\ states,   and Markov states (cf.~Examples~\ref{eg:mixed}--\ref{eg:weird} in Section~\ref{sec:examples}).   The proofs of Theorems~\ref{thm:mixed}--\ref{thm:weird} are provided in Section~\ref{sec:prf_examples}.

\setcounter{example}{0}
\begin{example}[Mixed channels, continued]    Recall that here, the state random variable $S$ is identical for all  $n$. Let the state distribution be $Q_S \in \cP(\cS)$.  For simplicity, we suppose that the state  can only take on $2$ values, e.g., $\cS=\{\mathrm{a},\mathrm{b}\}$. Furthermore let $Q_S(\mathrm{a})=\alpha$ and $Q_S(\mathrm{b})=1-\alpha$ for some $\alpha \in (0,1)$. As usual, let the non-zero capacities and information dispersions of the channels $W_s$ be denoted as $C_s$ and $V_s$ respectively for $s \in \cS$. Without loss of generality, assume $C_{\mathrm{a}}\le C_{\mathrm{b}}$. Consider the following three cases:
\begin{enumerate}
\item Case I: $C_{\mathrm{a}} = C_{\mathrm{b}}$ 
\item Case II: $C_{\mathrm{a}} <  C_{\mathrm{b}}$  and $\eps< \alpha$
\item Case III: $C_{\mathrm{a}} <  C_{\mathrm{b}}$  and $\eps \ge \alpha$
\end{enumerate}
Note that in Case I, the $\eps$-capacity is the common value of the capacities $C_{\mathrm{a}}=C_{\mathrm{b}}$; in Case II, the $\eps$-capacity is $C_{\mathrm{a}}$ and finally in Case III, the $\eps$-capacity is $C_{\mathrm{b}}$ \cite[Ex.~3.4.2]{Han10}.  The specialization of Theorem~\ref{thm:sec_gen} yields a result by Polyanskiy-Poor-Verd\'u\cite[Thm.~7]{PPV11} which is stated as follows:
\begin{theorem}\label{thm:mixed}
The $(\eps, \frac{1}{2})$-optimum second-order coding rate of the mixed  channel  with two states is given as follows:
\begin{enumerate}
\item Case I: $\Lambda( \eps, \frac{1}{2}; W, P_{\hat{S}})$ is given as the solution $\Lambda$ to the following equation:
\begin{align*}
\alpha\, \Phi\left( \frac{\Lambda}{\sqrt{V_{\mathrm{a}}}}\right)+(1-\alpha) \, \Phi\left( \frac{\Lambda}{\sqrt{V_{\mathrm{b}}}}\right)=\eps.
\end{align*}
\item Case II: $\Lambda( \eps,\frac{1}{2}; W, P_{\hat{S}})$ is given as  
\begin{align*}
\Lambda \bigg( \eps,\frac{1}{2}; W, P_{\hat{S}} \bigg)=\sqrt{V_{\mathrm{a}}}\, \Phi^{-1}\left(\frac{\eps }{\alpha}\right).
\end{align*}
\item Case III: $\Lambda( \eps,\frac{1}{2}; W, P_{\hat{S}})$ is given as  
\begin{align*}
\Lambda \bigg( \eps,\frac{1}{2}; W, P_{\hat{S}} \bigg)=\sqrt{V_{\mathrm{b}}}\, \Phi^{-1}\left(\frac{\eps -\alpha}{1-\alpha}\right).
\end{align*}
\end{enumerate}
\end{theorem}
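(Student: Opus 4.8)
The plan is to derive Theorem~\ref{thm:mixed} directly from the general characterization in Theorem~\ref{thm:sec_gen}, by evaluating the functional $K(r\,|\,\Ceps,\tfrac12;W,P_{\hat S})$ explicitly in the mixed-channel setting. The crucial simplification is that here the state is \emph{constant} along each block: $S^{(n)}_j = S$ for all $j$, so the random type $T_{S^n}$ is the point mass $\delta_S$, and hence $C(T_{S^n}) = C_S$ and $V(T_{S^n}) = V_S$ with $S\leftarrow Q_S$ taking value $\mathrm{a}$ with probability $\alpha$ and $\mathrm{b}$ with probability $1-\alpha$. Thus $K(r\,|\,R,\tfrac12;W,P_{\hat S})$ collapses — no limit over $n$ of nontrivial concentration is needed, only a two-term expectation —
\begin{align*}
K\Big(r\,\Big|\,R,\tfrac12;W,P_{\hat S}\Big) = \limsup_{n\to\infty}\left[ \alpha\,\Phi\!\left(\frac{nR+\sqrt n\, r - n C_{\mathrm a}}{\sqrt{n V_{\mathrm a}}}\right) + (1-\alpha)\,\Phi\!\left(\frac{nR+\sqrt n\, r - n C_{\mathrm b}}{\sqrt{n V_{\mathrm b}}}\right)\right].
\end{align*}
So the whole proof reduces to: (i) plug in $R = \Ceps$ for the value of $\Ceps$ appropriate to each case; (ii) take the limit of each $\Phi$-term as $n\to\infty$; (iii) solve $K \le \eps$ for the supremal $r$, which is $\Lambda(\eps,\tfrac12;W,P_{\hat S})$.

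The case analysis is then driven entirely by the sign of $R - C_s$ inside each argument, since that term is multiplied by $n$ while $r$ is multiplied only by $\sqrt n$. In Case~I, $\Ceps = C_{\mathrm a} = C_{\mathrm b} =: C$, so both $\Phi$-arguments are exactly $r/\sqrt{V_s}$ with no vanishing or diverging term, and $K(r\,|\,C,\tfrac12;W,P_{\hat S}) = \alpha\,\Phi(r/\sqrt{V_{\mathrm a}}) + (1-\alpha)\,\Phi(r/\sqrt{V_{\mathrm b}})$; this is a continuous strictly increasing function of $r$ from $0$ to $1$, so $K \le \eps$ has a unique supremal solution $\Lambda$ determined by the displayed equation. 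In Case~II, $\Ceps = C_{\mathrm a}$: the $\mathrm a$-term has argument $r/\sqrt{V_{\mathrm a}}$, while the $\mathrm b$-term has argument $(n(C_{\mathrm a}-C_{\mathrm b}) + \sqrt n\, r)/\sqrt{n V_{\mathrm b}} \to -\infty$ since $C_{\mathrm a} < C_{\mathrm b}$, so its $\Phi$ contribution vanishes; hence $K(r\,|\,C_{\mathrm a},\tfrac12;\cdot) = \alpha\,\Phi(r/\sqrt{V_{\mathrm a}})$, and solving $\alpha\,\Phi(r/\sqrt{V_{\mathrm a}}) \le \eps$ gives $\Lambda = \sqrt{V_{\mathrm a}}\,\Phi^{-1}(\eps/\alpha)$, which is finite precisely because $\eps < \alpha$. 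In Case~III, $\Ceps = C_{\mathrm b}$: now the $\mathrm b$-term has argument $r/\sqrt{V_{\mathrm b}}$, while the $\mathrm a$-term has argument $(n(C_{\mathrm b}-C_{\mathrm a}) + \sqrt n\, r)/\sqrt{n V_{\mathrm a}} \to +\infty$, so that $\Phi$ contribution tends to $1$; hence $K(r\,|\,C_{\mathrm b},\tfrac12;\cdot) = \alpha + (1-\alpha)\,\Phi(r/\sqrt{V_{\mathrm b}})$, and $\alpha + (1-\alpha)\Phi(r/\sqrt{V_{\mathrm b}}) \le \eps$ yields $\Lambda = \sqrt{V_{\mathrm b}}\,\Phi^{-1}((\eps-\alpha)/(1-\alpha))$, finite because $\eps \ge \alpha$ (and the edge behavior handled by the extended definition of $\Phi^{-1}$).

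The main thing to be careful about — rather than a deep obstacle — is the interchange of the $\limsup$ with the elementary limits of the $\Phi$-terms and the justification that the supremum in $\sup\{r\,|\,K \le \eps\}$ is attained at the claimed value. Since each summand is a bounded monotone function of $n$ (for fixed $r$) converging pointwise, the $\limsup$ is just the pointwise limit, so $K$ as a function of $r$ is, in every case, a continuous nondecreasing function; one then invokes monotonicity and continuity of $\Phi$ (and $\Phi^{-1}$) to identify the supremal $r$ with $K(r\,|\,\Ceps,\tfrac12;\cdot) = \eps$, taking care at the boundary values where $\Phi^{-1}$ is $\pm\infty$. I would also note explicitly that $\Ceps$ takes the stated value in each case by appealing to Example~\ref{eg:mixed} (equivalently \cite[Ex.~3.4.2]{Han10}), so that no separate first-order argument is needed. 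Everything else is direct substitution, and the resulting expressions match \cite[Thm.~7]{PPV11}.
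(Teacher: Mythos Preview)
Your proposal is correct and follows essentially the same approach as the paper's proof: both expand the expectation defining $K$ into the two-term sum over the possible values of $T_{S^n}$, substitute the appropriate $\Ceps$ for each case, and observe which $\Phi$-term survives, vanishes, or tends to $1$ based on the sign of $\Ceps - C_s$. Your version is more detailed (explicitly justifying the $\limsup$, the monotonicity of $K$ in $r$, and the boundary behavior of $\Phi^{-1}$), whereas the paper dispatches Cases~II and~III in a single sentence each.
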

When $\eps=\alpha$,    $\Lambda( \eps, \frac{1}{2}; W, P_{\hat{S}}) = -\infty$. This corresponds to the critical point where we can code for the channel $W_{\mathrm{b}}$ (with capacity $C_{\mathrm{b}} > C_{\mathrm{a}}$) only if we tolerate zero error on the channel $W_{\mathrm{b}}$.   

To get an intuitive feel of how Theorem~\ref{thm:sec_gen} specializes to Theorem~\ref{thm:mixed}, we note that for every blocklength $n$, the state type $T_{S^n}=(T_{S^n}(\mathrm{a}), T_{S^n}(\mathrm{b})) \in \cP_n(\cS)$ can only be one of two values: Either $T_{S^n}=(1,0)$ with probability $\alpha$ or  $T_{S^n}=(0,1)$ with probability $1-\alpha$.  Theorem \ref{thm:mixed} follows by simply expanding the expectation in  \eqref{eqn:def_expect_T} and leveraging on the values of the $\eps$-capacity for the various cases to simplify the resultant limits.
\end{example}

\begin{example}[State is i.i.d., continued] 
In this example, the state sequence is i.i.d.\ with common distribution $\pi \in \cP(\cS)$. As we have seen, the strong converse property holds with the capacity given in \eqref{eqn:capacity_iid}. Define 
\begin{equation} \label{eqn:V_star}
V^*(\pi):= \Var_{S\leftarrow\pi}\left[C_S\right]
\end{equation}
to be the variance of the capacities of the constituent channels computed with respect to $\pi$. Recall also that $V(\pi):=\sum_s \pi(s)V_s > 0$ is the average of the information variances of the constituent channels. 
\begin{theorem} \label{thm:iid}
The $(\eps, \frac{1}{2})$-optimum second-order coding rate of the    channel with i.i.d.\ states is 
\begin{equation*}
\Lambda\bigg( \eps, \frac{1}{2}; W, P_{\hat{S}}\bigg) = \sqrt{\Upsilon(\eps; W , P_{\hat{S}})}\Phi^{-1}\left(\eps\right), \quad \textrm{where} \quad \Upsilon(\eps; W , P_{\hat{S}}) = V(\pi) + V^*(\pi) 
\end{equation*}
for all $\eps\in (0,1) \setminus\{ \frac{1}{2}\}$. More precisely,
\begin{align*}
  \log M^*(\eps; W^n, P_{S^n}) = n C(\pi) + \sqrt{n (V(\pi) + V^*(\pi))} \Phi^{-1}(\eps) + O(\log n) .
\end{align*}
\end{theorem}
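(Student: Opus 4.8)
The plan is to derive the formula for $\Lambda(\eps, \tfrac12; W, P_{\hat S})$ from the general characterization in Theorem~\ref{thm:sec_gen} and then obtain the sharper $O(\log n)$ expansion by returning to the one-shot bounds of Section~\ref{sec:one-shot}. For the first part, note that by Example~\ref{eg:iid} the i.i.d.\ state yields $\Ceps = C(\pi)$ for every $\eps \in (0,1)$, so by Theorem~\ref{thm:sec_gen} it suffices to evaluate the functional $K(r \mid C(\pi), \tfrac12; W, P_{\hat S})$ from \eqref{eqn:def_expect_T} and show that it equals $\Phi\bigl( r / \sqrt{V(\pi) + V^*(\pi)} \bigr)$; the supremum over $r$ with $K \le \eps$ is then exactly $\sqrt{V(\pi)+V^*(\pi)}\,\Phi^{-1}(\eps)$, as claimed.

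To evaluate $K$, write $nC(\pi) - nC(T_{S^n}) = -\sum_{k=1}^n \bigl(C_{S_k} - C(\pi)\bigr) = -\sqrt{n}\, Z_n$, where $Z_n := \tfrac1{\sqrt n}\sum_{k=1}^n \bigl(C_{S_k}-C(\pi)\bigr)$, so the argument of $\Phi$ inside the expectation in \eqref{eqn:def_expect_T} becomes $(r - Z_n)/\sqrt{V(T_{S^n})}$. The argument then proceeds in two approximation steps, exactly mirroring the heuristic \eqref{eqn:epsM}--\eqref{eqn:simplify}. First, replace the random denominator $V(T_{S^n}) = \tfrac1n\sum_k V_{S_k}$ by its almost-sure limit $V(\pi) > 0$: the law of large numbers plus a quantitative concentration bound, together with the crude estimate $0 \le \Phi(\cdot) \le 1$ on the exponentially rare event $\{|V(T_{S^n})-V(\pi)| > \delta\}$, shows the incurred error vanishes. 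Second, handle $Z_n$ via the central limit theorem: $Z_n \Rightarrow \sqrt{V^*(\pi)}\,G$ with $G \sim \mathcal{N}(0,1)$, so $\Exp[\Phi((r-Z_n)/\sqrt{V(\pi)})] \to \Exp[\Phi((r - \sqrt{V^*(\pi)}\,G)/\sqrt{V(\pi)})]$, and the Gaussian-averaging identity $\Exp[\Phi(a - bG)] = \Phi(a/\sqrt{1+b^2})$ with $a = r/\sqrt{V(\pi)}$ and $b = \sqrt{V^*(\pi)/V(\pi)}$ gives $\Phi\bigl(r/\sqrt{V(\pi)+V^*(\pi)}\bigr)$. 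This is precisely where the extra variance $V^*(\pi) = \Var_{S\leftarrow\pi}[C_S]$ appears on top of the within-state dispersion $V(\pi)$, and the $\limsup$ in $K$ is in fact an honest limit.

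For the refined expansion $\log M^*(\eps; W^n, P_{S^n}) = nC(\pi) + \sqrt{n(V(\pi)+V^*(\pi))}\,\Phi^{-1}(\eps) + O(\log n)$, I would not route through Theorem~\ref{thm:sec_gen} (which only pins down the $\sqrt n$ coefficient) but instead condition on the event $\{T_{S^n} = t\}$ for a fixed type $t \in \cP_n(\cS)$. Conditioned on this event, $W^n$ is a product of the DMCs $W_s$ each used $n\,t(s)$ times, and a non-i.i.d.\ Berry--Esseen theorem applied to the one-shot achievability and converse bounds yields a Strassen-type expansion $\log M^*(\eps_t; W^n, t) = nC(t) + \sqrt{nV(t)}\,\Phi^{-1}(\eps_t) + O(\log n)$, uniformly over $t$, where $V_{\min} > 0$ ensures non-degeneracy. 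Averaging over types with weights $\Pr[T_{S^n}=t]$ — there are only polynomially many by the type-counting bound $|\cP_n(\cS)| \le (n+1)^{|\cS|}$, costing an extra $O(\log n)$ — substituting $C(t) = C(\pi) + n^{-1/2}(\text{centered sum})$, and applying Berry--Esseen a second time to the outer average over the empirical mean of the $C_{S_k}$ reproduces the Gaussian-convolution computation above with explicit $O(n^{-1/2})$ remainders, which translate back to $O(\log n)$ in the code size.

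The main obstacle is the uniform control underlying the two-stage approximation: one must verify that fluctuations of $V(T_{S^n})$ around $V(\pi)$ and the atypical-type contributions are genuinely negligible at the $\sqrt n$ scale for the $\Lambda$-formula, and — more delicately — that the two nested Berry--Esseen estimates (inner: conditional on the type; outer: over the empirical capacity) compose with only an additive $O(\log n)$ error for the refined statement, which requires the per-type remainders to be bounded uniformly in $t$ and the number of types to be at most polynomial in $n$.
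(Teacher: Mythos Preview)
Your proposal is correct and follows essentially the same two-step strategy as the paper: first replace the random $V(T_{S^n})$ by the constant $V(\pi)$ (Lemma~\ref{lem:approx_V}), then handle the random $C(T_{S^n})$ via a Berry--Esseen bound and the Gaussian-convolution identity (Lemma~\ref{lem:approx_sum_Vs}), before invoking Theorem~\ref{thm:sec_gen}; the $O(\log n)$ refinement likewise follows from the quantitative Propositions~\ref{lem:direct_second}--\ref{lem:converse_second} combined with those two lemmas. The only technical difference is that for the $V$-replacement the paper Taylor-expands $v\mapsto\Phi(\alpha/\sqrt{v})$ and exploits $\Exp[V(T_{S^n})]=V(\pi)$ so the first-order term vanishes (yielding an $O(\log n/n)$ error via Hermite-polynomial bounds), whereas your concentration-plus-Lipschitz argument gives the coarser $O(\sqrt{\log n/n})$---but both suffice for the stated $O(\log n)$ remainder.
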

New technical tools are required to prove Theorem \ref{thm:iid}. In particular, we apply a weak form of the Berry-Esseen theorem \cite[Thm.~2.2.14]{tao12} to the expectation in $K(r \, |\, \Ceps, \beta ; W, P_{\hat{S}})$ by taking into account the statistics of   $T_{S^n}$. Thus, we are using Berry-Esseen twice; once to account for the randomness of the channel given the state type resulting in $K(r |\, C_{\eps},\beta ; W, P_{\hat{S}})$, and the second to account for the randomness of the state which gives rise to $V^*(\pi)$. The variances $V(\pi)$ and $V^*(\pi)$ are   summed using the fact that the convolution of the PDFs of two independent Gaussians is a Gaussian and the resultant means and variances are the sum of the constituent ones. 

We note that $\Upsilon(\eps; W , P_{\hat{S}})$ can also be expressed as     $\Var[ i(X;Y|S) ]$ where  $(S,X,Y)\leftarrow \pi\times P^*\times W$.  This follows from the law of total variance  and the fact that the unconditional information variance equals the conditional information variance  for all capacity-achieving input distributions~\cite[Lem.~62]{PPV10}. Thus,   $\Var[i(X;Y|S)]=V(\pi) + V^*(\pi)$.  Note that $V^*(\pi)$ and  $V (\pi)$ represent the stochasticity  of  state and the channel given the state respectively. 

The dispersion we obtain   may appear to be identical to that for the problem of the state being available {\em only} at the decoder~\cite[Thm.~3]{ingber10}. However, we note an important distinction  between the two problems. In~\cite{ingber10}, the problem essentially boils down to channel coding where the output is $(Y,S)$ (and $S$ is independent of the channel input $X$). Hence, the law of total variance readily applies. However, the problem we solve here is much more involved, especially the converse part since the encoder {\em also} has the state information. 
\end{example}

\begin{example}[State is block i.i.d., continued] 
Here, we recall that the   state sequence $S^n$ of length $n = md+r$ is partitioned into $d+1$ subblocks where the first $d=\lfloor n^\nu\rfloor$ subblocks are of length $m$ and the final one is of length $r$. Each subblock is assigned a state which is an independent sample from $\pi$. Here is where the generality in the definition of the $(\eps,\beta)$-optimum second-order coding rate comes into play.  Let $V^*(\pi)$ be as in  \eqref{eqn:V_star}.
\begin{theorem} \label{thm:subblock} 
Let $\nu \in (0,1)$. The $(\eps,1-\frac{\nu}{2})$-optimum second-order coding rate of the    channel with block i.i.d.\ states and $d=\lfloor n^\nu\rfloor$ blocks is 
\begin{equation*}
\Lambda\bigg( \eps, 1-\frac{\nu}{2} ; W, P_{\hat{S}}\bigg) = \sqrt{ V^*(\pi)}\Phi^{-1}\left(\eps\right) .
\end{equation*}
\end{theorem}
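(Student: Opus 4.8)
\emph{Proof strategy.} The plan is to invoke Theorem~\ref{thm:sec_gen} with $\beta=1-\tfrac{\nu}{2}$, which lies in $(\tfrac12,1)$ since $\nu\in(0,1)$. By Example~\ref{eg:iid_mixed} the strong converse holds for this source, so $\frac1n\sum_{k=1}^n C_{S_k}$ converges in probability to $C(\pi)$ and hence, by Theorem~\ref{thm:eps_cap}, $\Ceps=C(\pi)$ for every $\eps\in(0,1)$. It then remains to evaluate the functional $K\big(r\mid C(\pi),1-\tfrac{\nu}{2};W,P_{\hat{S}}\big)$ in~\eqref{eqn:def_expect_T} and to solve $K(r\mid\cdots)\le\eps$ for the second-order rate $r$. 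Write $n=md+\varrho$ with $d=\lfloor n^\nu\rfloor$ and $0\le\varrho<m$, and let $\Sigma_1,\dots,\Sigma_{d+1}$ be the i.i.d.\ ($\sim\pi$) states of the $d+1$ subblocks, so that $C(T_{S^n})=\tfrac1n\big(m\sum_{j=1}^d C_{\Sigma_j}+\varrho\,C_{\Sigma_{d+1}}\big)$ and likewise $V(T_{S^n})=\tfrac1n\big(m\sum_{j=1}^d V_{\Sigma_j}+\varrho\,V_{\Sigma_{d+1}}\big)$.

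First I would rewrite the numerator inside $\Phi$ in~\eqref{eqn:def_expect_T}. Since $nC(\pi)=md\,C(\pi)+\varrho\,C(\pi)$, putting $Y_j:=C(\pi)-C_{\Sigma_j}$ (i.i.d., mean zero, variance $V^*(\pi)$, bounded because the alphabets are finite) gives
\[
nC(\pi)-nC(T_{S^n})=m\sum_{j=1}^d Y_j+\varrho\,Y_{d+1}.
\]
Because $\nu\in(0,1)$ we have $d\to\infty$, $m\to\infty$, $md/n\to1$ and $n^\nu/d\to1$; hence $m\sqrt d/n^{1-\nu/2}\to1$, the remainder term obeys $|\varrho\,Y_{d+1}|=O(m)=o(n^{1-\nu/2})$, and the central limit theorem together with Slutsky's lemma yield that $W_n:=\big(nC(\pi)-nC(T_{S^n})\big)/n^{1-\nu/2}$ converges in distribution to $Z\sim\mathcal N\big(0,V^*(\pi)\big)$. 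For the denominator, the strong law of large numbers gives $V(T_{S^n})\to\sum_s\pi(s)V_s=V(\pi)$ almost surely, with $V(\pi)\ge V_{\min}>0$. Consequently the argument of $\Phi$ in~\eqref{eqn:def_expect_T} equals
\[
\frac{nC(\pi)+n^{1-\nu/2}r-nC(T_{S^n})}{\sqrt{nV(T_{S^n})}}=n^{(1-\nu)/2}\,\frac{r+W_n}{\sqrt{V(T_{S^n})}},
\]
and the deterministic prefactor $n^{(1-\nu)/2}$ diverges since $\nu<1$.

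The last step is to carry the expectation through the limit. Since $V(T_{S^n})$ converges to the constant $V(\pi)$, the pair $(W_n,V(T_{S^n}))$ converges in distribution to $(Z,V(\pi))$; passing to a coupling on which this convergence is almost sure (Skorokhod representation), on the probability-one event $\{r+Z\ne0\}$ the displayed argument tends to $+\infty$ when $r+Z>0$ and to $-\infty$ when $r+Z<0$ (using $\sqrt{V(T_{S^n})}\to\sqrt{V(\pi)}>0$ and $\operatorname{sgn}(r+W_n)=\operatorname{sgn}(r+Z)$ eventually). Hence $\Phi(\cdot)\to\mathbf 1\{Z>-r\}$ almost surely, and bounded convergence gives
\[
K\big(r\mid C(\pi),1-\tfrac{\nu}{2};W,P_{\hat{S}}\big)=\Pr[Z>-r]=1-\Phi\!\Big(\tfrac{-r}{\sqrt{V^*(\pi)}}\Big)=\Phi\!\Big(\tfrac{r}{\sqrt{V^*(\pi)}}\Big),
\]
the $\limsup$ in~\eqref{eqn:def_expect_T} being a genuine limit here. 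Since $\Phi$ is continuous and strictly increasing, $\Lambda(\eps,1-\tfrac{\nu}{2};W,P_{\hat{S}})=\sup\{r:\Phi(r/\sqrt{V^*(\pi)})\le\eps\}=\sqrt{V^*(\pi)}\,\Phi^{-1}(\eps)$, as claimed. (If $V^*(\pi)=0$ then $C_{\Sigma_j}=C(\pi)$ a.s., $W_n\equiv0$, and the same computation gives $K(r\mid\cdots)=\mathbf 1\{r>0\}$ for $r\ne0$, whence $\Lambda=0=\sqrt{V^*(\pi)}\,\Phi^{-1}(\eps)$; so one may assume $V^*(\pi)>0$.)

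I expect the main obstacle to be this last interchange of limit and expectation: $\Phi$ is evaluated at the product of a \emph{deterministic} sequence diverging to $+\infty$ and a \emph{random} quantity whose limiting law straddles zero, so one needs pointwise sign control, which the joint Slutsky/Skorokhod argument supplies, together with the fact---guaranteed by $V_{\min}>0$ and the SLLN---that the random denominator $\sqrt{V(T_{S^n})}$ stays bounded away from $0$. The remaining bookkeeping (the floor function in $d=\lfloor n^\nu\rfloor$, the block-remainder term $\varrho\,Y_{d+1}$, the ratio $m\sqrt d/n^{1-\nu/2}\to1$, and the continuity and monotonicity needed to invert $\Phi$) is routine for $\nu\in(0,1)$. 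This is precisely where the analysis departs from the i.i.d.\ case of Theorem~\ref{thm:iid}: there the relevant scaling is $\sqrt n$, the $\Phi$-argument converges to a \emph{finite} Gaussian-type variable rather than to $\pm\infty$, and the channel-variance $V(\pi)$ survives via a convolution-of-Gaussians identity, whereas here the faster scaling $n^{1-\nu/2}$ washes out $V(\pi)$ and leaves only the state variance $V^*(\pi)$.
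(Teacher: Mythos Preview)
Your argument is correct and reaches the same conclusion, but by a genuinely different route than the paper. The paper mirrors its treatment of the i.i.d.\ case: it proves block-i.i.d.\ analogues of Lemmas~\ref{lem:approx_V} and~\ref{lem:approx_sum_Vs}, obtaining the quantitative approximation
\[
\Exp\!\left[\Phi\!\left(\sqrt{n}\cdot\frac{x-C(T_{S^n})}{\sqrt{V(T_{S^n})}}\right)\right]
=\Phi\!\left(\sqrt{n}\cdot\frac{x-C(\pi)}{\sqrt{V(\pi)+\tau_n V^*(\pi)}}\right)+O\!\left(n^{-\nu/2}\right),\qquad \tau_n\sim n^{1-\nu},
\]
via a Berry--Esseen step applied to the block variables $\tilde E_j$, and then lets $\tau_n V^*(\pi)$ swamp $V(\pi)$ in the limit. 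Your approach instead works purely with weak-convergence tools (CLT, Slutsky, Skorokhod, bounded convergence): you observe that the argument of $\Phi$ factors as a deterministic diverging prefactor $n^{(1-\nu)/2}$ times a random quantity converging in law to $(r+Z)/\sqrt{V(\pi)}$ with $Z\sim\mathcal N(0,V^*(\pi))$, so $\Phi(\cdot)\to\mathbf 1\{Z>-r\}$ and the expectation converges to $\Phi\big(r/\sqrt{V^*(\pi)}\big)$. This is more elementary and makes transparent \emph{why} $V(\pi)$ disappears at scale $n^{1-\nu/2}$; the price is that you do not recover the finer expansion $\log M^*=nC(\pi)+\sqrt{nV(\pi)+n^{2-\nu}V^*(\pi)}\,\Phi^{-1}(\eps)+o(\sqrt n)$ that the paper's quantitative estimates yield.

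One small point: you invoke the strong law for $V(T_{S^n})\to V(\pi)$, but since the block structure (and hence the underlying random variables) changes with $n$, only convergence in probability is immediately available from the weak law applied to $\frac1d\sum_{j=1}^d V_{\Sigma_j}$. This is harmless in your argument, because you only feed this convergence into Slutsky's lemma to obtain joint convergence in distribution of $(W_n,V(T_{S^n}))$, for which convergence in probability of the second coordinate suffices; the Skorokhod step then furnishes the a.s.\ coupling you use downstream.
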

In fact, the proof  demonstrates that the logarithm of the size of largest codebook for this channel   scales as
\begin{align*}
\log M^*(\eps;W^n,P_{S^n})=nC(\pi)+ \sqrt{nV(\pi)+n^{2-\nu}V^*(\pi)}\Phi^{-1}(\eps) + o(\sqrt{n}).
\end{align*}
Hence, the dominant second-order term is the one involving $n^{2-\nu}V^*(\pi)$ since $\nu<1$. From this expression, we recover the i.i.d.\ case in Theorem~\ref{thm:iid} in which $\nu = 1$. The intuition behind the term  $n^{2-\nu}V^*(\pi)$ is the following. The variance of the sum of random variables $\Var[\sum_{k=1}^n C_{S_k}]$ can be alternatively written as   $\Var[\sum_{j=1}^{d+1} E_{j}]$ where $E_j$ for $1\le j \le d$ represents the sum of $m$ identical copies of  $C_{S}$ and $E_{d+1}$ represents the sum of $r$ identical copies of $C_{S}$. Since $E_j$ for $1\le j \le d+1$  are independent random variables, $\Var[\sum_{k=1}^n C_{S_k}]$  is   equal to $(m^2  d + r^2 ) V^*(\pi)$ which is of the order $n^{2-\nu}V^*(\pi)$. We see  that by varying  $\nu \in (0,1)$, we can construct channels with state for which the second-order term in the expansion of $\log M^*(\eps ;W^n,P_{S^n})$ scales as $n^\beta$ for arbitrary $\beta \in [\frac{1}{2},1)$.  In previous works on   second-order coding rates and dispersions (e.g., \cite{Hayashi08,Hayashi09,PPV10,PPV11,strassen,Kot97,tomamichel12}), the second-order term always scales as $\sqrt{n}$ (except for exotic channels and $\eps > \frac{1}{2}$~\cite[Thm.~48]{PPV10}).
\end{example}

\begin{example}[State is Markov, continued] 
We revisit the example in which the state is governed by an irreducible, ergodic Markov chain that is given by a time-homogeneous transition kernel $M$ of size $|\cS|\times |\cS|$. The strong converse property holds with the capacity given in \eqref{eqn:capacity_iid}. We define 
\begin{align*}
V^{**}(M):=\Var_{S \leftarrow \pi} \left[C_S\right]+ 2   \sum_{j=1}^{\infty} \Cov_{S_1 \leftarrow \pi} \left[C_{S_1},C_{S_{1 +j}} \right]  
\end{align*}
as the analogue of $V^*(\pi)$ for the i.i.d.\ setting.   Here $\pi$ is the (unique) stationary distribution of $M$ and $S_1 \rightarrow S_2 \rightarrow \ldots \rightarrow S_n$ forms a Markov chain governed by the transition kernel $M$. We assume that the chain is started from the stationary distribution for simplicity (i.e., $S_1\leftarrow\pi$). When the state sequence is i.i.d., $V^{**}(M)=V^*(\pi)$ because the covariance terms vanish. Intuitively, the covariance terms quantify the amount of mixing in the Markov chain.
\begin{theorem} \label{thm:markov}
The $(\eps, \frac{1}{2})$-optimum second-order coding rate of the channel with Markov states is 
\begin{equation*}
\Lambda\bigg( \eps, \frac{1}{2}; W, P_{\hat{S}}\bigg) = \sqrt{\Upsilon(\eps; W , P_{\hat{S}})}\Phi^{-1}\left(\eps\right), \quad \textrm{where} \quad
\Upsilon(\eps; W , P_{\hat{S}}) = V(\pi) +V^{**}(M)
\end{equation*}
for all $\eps\in (0,1) \setminus\{ \frac{1}{2}\}$.
More precisely,  
\begin{equation*}
\log M^*(\eps; W^n, P_{S^n}) = n C(\pi) + \sqrt{n (V(\pi) + V^{**}(M))} \Phi^{-1}(\eps) + O(\log n) .
\end{equation*}
\end{theorem}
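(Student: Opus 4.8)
The plan is to mirror the argument behind Theorem~\ref{thm:iid}, replacing every appeal to the classical central limit and Berry--Esseen theorems by their counterparts for additive functionals of a stationary, irreducible, ergodic Markov chain. Since Example~\ref{eg:markov} already establishes the strong converse with $C(W,P_{\hat{S}})=C(\pi)$, we have $\Ceps=C(\pi)$ for all $\eps\in(0,1)$, so by Theorem~\ref{thm:sec_gen} it suffices to evaluate $K(r\mid C(\pi),\tfrac12;W,P_{\hat{S}})$ and take the supremum of those $r$ for which this quantity is at most $\eps$; the sharper expansion with an $O(\log n)$ remainder is obtained separately from the one-shot bounds of Section~\ref{sec:one-shot}.

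First I would dispose of the denominator inside $K$. Ergodicity forces $T_{S^n}\to\pi$ in probability with deviations of order $n^{-1/2}$, so $V(T_{S^n})\to V(\pi)$ in probability with the error small enough (recall $V_{\min}>0$) that Lemma~\ref{lem:approx_V} permits replacing $\sqrt{nV(T_{S^n})}$ by $\sqrt{nV(\pi)}$ at the cost of an $o(1)$ term in the expectation. What remains is to evaluate
\[
\limsup_{n\to\infty}\Exp\!\left[\Phi\!\left(\frac{\sqrt{n}\,r-\sum_{k=1}^{n}\bigl(C_{S_k}-C(\pi)\bigr)}{\sqrt{n\,V(\pi)}}\right)\right].
\]
The numerator features the additive functional $\sum_{k=1}^{n}(C_{S_k}-C(\pi))$ of the chain; normalized by $\sqrt{n}$ it converges in distribution, by the Markov-chain central limit theorem, to $\mathcal{N}(0,V^{**}(M))$, whose asymptotic variance is exactly $V^{**}(M)=\Var_{\pi}[C_S]+2\sum_{j\ge1}\Cov_{\pi}[C_{S_1},C_{S_{1+j}}]$, the series converging by the exponential mixing bound $|\Cov[C_{S_k},C_{S_l}]|\le ae^{-b|k-l|}$ already recorded in Example~\ref{eg:markov}. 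Writing the limiting expectation as $\Exp_Z\bigl[\Phi((r-Z)/\sqrt{V(\pi)})\bigr]$ with $Z\sim\mathcal{N}(0,V^{**}(M))$ and invoking the Gaussian-convolution identity used for $V^*(\pi)$ in Theorem~\ref{thm:iid} --- the sum of an independent $\mathcal{N}(0,V(\pi))$ variable and $Z$ is $\mathcal{N}(0,V(\pi)+V^{**}(M))$ --- this limit equals $\Phi\bigl(r/\sqrt{V(\pi)+V^{**}(M)}\bigr)$. Setting it $\le\eps$ yields $\Lambda(\eps,\tfrac12;W,P_{\hat{S}})=\sqrt{V(\pi)+V^{**}(M)}\,\Phi^{-1}(\eps)$, and $\Upsilon(\eps;W,P_{\hat{S}})=V(\pi)+V^{**}(M)$ then follows from the definition of the $\eps$-dispersion.

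For the refined expansion I would argue at the one-shot level. For achievability, apply the state-dependent Feinstein lemma of Section~\ref{sec:one-shot} with the product input law $\prod_{k}P^*_{s_k}(x_k)$, which is admissible since the state is known at the encoder; the governing information density is $\sum_{k}i(X_k;Y_k|S_k)$, whose mean under the full joint law is $nC(\pi)$ and whose variance, by the law of total variance, is $\Exp\bigl[\sum_k V_{S_k}\bigr]+\Var\bigl[\sum_k C_{S_k}\bigr]=n\bigl(V(\pi)+V^{**}(M)\bigr)+O(1)$, the $O(1)$ arising from the summable discrepancy between $\Var[\sum_k C_{S_k}]$ and $nV^{**}(M)$. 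Conditionally on $S^n$ the summands are independent, so $\sum_k i(X_k;Y_k|S_k)$ is an additive functional of the Markov-modulated process $(S_k,X_k,Y_k)$; a Markov-chain Berry--Esseen bound (of the type used for the Gilbert--Elliott channel in~\cite{PPV11}), with rate $O(n^{-1/2})$, applied to this functional together with an appropriate choice of the Feinstein threshold yields a code of size $\log M\ge nC(\pi)+\sqrt{n(V(\pi)+V^{**}(M))}\,\Phi^{-1}(\eps)-O(\log n)$ and error at most $\eps$. For the converse, apply the one-shot converse bound of~\cite{tomamicheltan12} with the state-dependent output reference given by the product of the capacity-achieving output distributions $P^*_{s_k}W_{s_k}$; using $D(W(\cdot|x,s)\|P^*_sW_s)\le C_s$ for every $x$, together with the standard change-of-measure truncation that controls the variance of the corresponding log-likelihood ratio uniformly over all codewords and the same Markov Berry--Esseen estimate, produces the matching upper bound $\log M\le nC(\pi)+\sqrt{n(V(\pi)+V^{**}(M))}\,\Phi^{-1}(\eps)+O(\log n)$.

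The step I expect to be the main obstacle is this quantitative Markov-chain Berry--Esseen estimate: one needs an $O(n^{-1/2})$ normal approximation for the additive functional $\sum_k i(X_k;Y_k|S_k)$ of the modulated chain that is uniform enough to survive both the outer expectation over the state sequence and the change-of-measure argument in the converse, and one must check that the asymptotic variance $V(\pi)+V^{**}(M)$ is strictly positive (non-degeneracy), which follows from $V_{\min}>0$. A secondary point is verifying that the $O(1)$ gap between $\Var[\sum_k C_{S_k}]$ and $nV^{**}(M)$, and more generally the error in replacing $V(T_{S^n})$ by $V(\pi)$ inside the Gaussian cdf, is controlled finely enough to feed only into the $O(\log n)$ term --- which is again where exponential mixing of the chain enters.
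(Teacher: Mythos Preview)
Your derivation of $\Lambda(\eps,\tfrac12;W,P_{\hat{S}})$ via Theorem~\ref{thm:sec_gen} is essentially the paper's argument: replace $V(T_{S^n})$ by $V(\pi)$ using a Markov analogue of Lemma~\ref{lem:approx_V} (the paper uses the concentration bound of Lemma~\ref{lem:markov_conc} in place of Hoeffding), then replace the i.i.d.\ Berry--Esseen step inside Lemma~\ref{lem:approx_sum_Vs} by a Berry--Esseen theorem for $\alpha$-mixing sequences. The paper invokes Tikhomirov's theorem (Theorem~\ref{thm:modified_BE}) for this, which gives a rate $O(\log n/\sqrt{n})$; your appeal to a generic ``Markov-chain CLT'' is fine for the limit but would not by itself give the quantitative control you need later.

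Where you diverge from the paper is in the treatment of the $O(\log n)$ expansion. The paper does \emph{not} introduce a separate one-shot argument for this: the finite-$n$ Propositions~\ref{lem:direct_second} and~\ref{lem:converse_second} already condition on $T_{S^n}$ and apply ordinary (i.i.d.) Berry--Esseen \emph{inside} the expectation, and the adapted Lemmas~\ref{lem:approx_V} and~\ref{lem:approx_sum_Vs} (the latter now with Tikhomirov's $O(\log n/\sqrt{n})$ rate and with $V^{**}_n(M)\to V^{**}(M)$) then control the expectation over the state. Tracking these remainders through the inversion yields the $O(\log n)$ third-order term directly. In particular, no ``Markov Berry--Esseen on the modulated chain $(S_k,X_k,Y_k)$'' is ever needed: the only place weak dependence enters is in the normal approximation for $\sum_k C_{S_k}$.

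Your alternative route is plausible for achievability but has a genuine gap on the converse side. With the product reference $Q^n(y^n|s^n)=\prod_k P^*_{s_k}W_{s_k}(y_k)$, the KKT identity $D(W(\cdot|x,s)\|P^*_sW_s)\le C_s$ bounds the \emph{mean} of $j_{Q^*}(x;Y|s)$ for every $x$, but its \emph{variance} is not controlled by $V_s$ uniformly over $x$; hence the second-order term you extract need not equal $V(\pi)+V^{**}(M)$ for an arbitrary code. The phrase ``standard change-of-measure truncation'' does not correspond to an existing argument that repairs this. The paper's Proposition~\ref{lem:converse_second} circumvents the issue by taking $Q^{(n)}$ to be a uniform mixture over all conditional types $T_{x^n|s^n}$ and then invoking the two-case analysis of Lemma~\ref{lm:psit} (close-to-CAID versus far-from-CAID input types), which is exactly what forces the variance back to $V(t)$. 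If you want the $O(\log n)$ statement, you should either reuse that machinery or supply the missing uniform variance control explicitly.
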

The proof of this result only requires slight modifications from the i.i.d.\ case, requiring a more general concentration bound of the   type to the stationary distribution~\cite{Kont05} and a Berry-Esseen theorem for weakly-dependent processes such as Markov processes \cite{Tik}.  

In fact,  $V^{**}(M)$ can be simplified as we  show  in  Lemma~\ref{lm:markov-simplified} in  Appendix~\ref{app:markov}. In particular,  we consider the state evolving like that for a Gilbert-Elliott channel~\cite{Gilbert, Elliott, Mush}, where  $\cS = \{ 0, 1\}$ is binary and the transition kernel is given, for $0 < \tau < 1$, by 
\begin{align*}
  M = \left( \begin{array}{cc} 1-\tau & \tau \\ \tau & 1-\tau \end{array} \right) \qquad \textrm{and} \qquad V^{**}(M) = \frac{1-\tau}{4\tau}(C_0 - C_1)^2 
\end{align*}
is a simple closed-form expression. 

We note that the $\eps$-dispersion for the Gilbert-Elliott channel with state information only at the decoder is $V(\pi) +V^{**}(M)$ \cite[Thm.~4]{PPV11}.  What our results in Theorem~\ref{thm:markov} and Lemma~\ref{lm:markov-simplified} show  is that the $\eps$-dispersion cannot be improved even when the state information is available at the encoder. 

This is unsurprising in light of the discussion in Remark~\ref{cor:nostate} and the fact that the capacity achieving input distribution is uniform for all $s \in \cS$. More generally, if $P_{s_1}^* = P_{s_2}^*$ for all $s_1, s_2 \in \cS$ then Theorems~\ref{thm:iid} and~\ref{thm:markov} also hold if the state information is only available at the decoder by the same argument used to justify Remark~\ref{cor:nostate}.
\end{example}

\begin{example}[State is memoryless but non-stationary, continued]
Finally, we revisit the example in which the state is memoryless but non-stationary and specifically given by \eqref{eqn:state_J}. We consider the simplification  that $S_{\mathrm{a}}=0$ and $S_{\mathrm{b}}=1$ with probability one. Also, assume $C_0 <C_1$. Then,  the $\eps$-capacity equals $\frac{2 C_0 }{3} + \frac{C_1 }{ 3}$ for all $\eps \in [0,1)$.  In this case, application of Theorem~\ref{thm:sec_gen} yields:
\begin{theorem} \label{thm:weird}
The $(\eps, \frac{1}{2})$-optimum second-order coding rate of the    channel with   states in Example~\ref{eg:weird} is  
\begin{equation*}
\Lambda\bigg( \eps, \frac{1}{2}; W, P_{\hat{S}}\bigg) = \sqrt{ \Upsilon(\eps; W , P_{\hat{S}})}\Phi^{-1}\left(\eps\right),
\quad \textrm{where} \quad \Upsilon(\eps; W , P_{\hat{S}}) = \frac{2V_0 }{3}+\frac{V_1}{3}
\end{equation*}
for all $\eps\in (0,1) \setminus\{ \frac{1}{2}\}$.
\end{theorem}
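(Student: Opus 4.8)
The plan is to specialize Theorem~\ref{thm:sec_gen}, exploiting the fact that in Example~\ref{eg:weird} the state process is \emph{deterministic}: $S_k = 0$ when $k \in \mathcal{J}$ and $S_k = 1$ otherwise, so the type $T_{S^n}$ is a fixed element of $\cP_n(\cS)$ for every $n$ and the expectation in the definition~\eqref{eqn:def_expect_T} of $K$ collapses. Writing $t_n := T_{S^n}(0) = \frac1n\,|\{k \le n : k \in \mathcal{J}\}|$, so that $C(T_{S^n}) = t_n C_0 + (1-t_n) C_1$ and $V(T_{S^n}) = t_n V_0 + (1-t_n) V_1$, we are left with the deterministic limit superior
\begin{equation*}
K\Big(r \,\Big|\, \Ceps, \tfrac12; W, P_{\hat S}\Big) = \limsup_{n \to \infty} \Phi\!\left( \frac{\sqrt{n}\,\big(\Ceps - C(T_{S^n})\big) + r}{\sqrt{V(T_{S^n})}} \right).
\end{equation*}
Example~\ref{eg:weird} (with $C_0 < C_1$) already gives $\Ceps = \tfrac{2C_0}{3} + \tfrac{C_1}{3} = C_1 - \tfrac23(C_1 - C_0)$ for every $\eps \in [0,1)$, hence $\Ceps - C(T_{S^n}) = (C_1 - C_0)\big(t_n - \tfrac23\big)$.

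The combinatorial core is an accounting of $\mathcal{J}$. Splitting $\{1, \ldots, 2^{2k}-1\}$ into its $2k$ maximal runs --- alternating strings of $1$'s and $0$'s of lengths $2^0, 2^1, \ldots, 2^{2k-1}$ --- one checks that $t_n \le \tfrac23$ for \emph{every} $n$, with equality exactly along the subsequence $n_k := 2^{2k}-1$ (there $t_{n_k} = \tfrac{2(4^k-1)}{3(4^k-1)} = \tfrac23$, so $C(T_{S^{n_k}}) = \Ceps$ and $V(T_{S^{n_k}}) = \tfrac23 V_0 + \tfrac13 V_1$, both exactly). A short estimate on the $0$-run $[2^{2k-1}, n_k]$ and the subsequent $1$-run $[n_k+1, 2^{2k+1}-1]$ shows $\tfrac23 - t_n = \Theta\big(|n - n_k|/4^k\big)$ there, so $\sqrt{n}\,\big(\Ceps - C(T_{S^n})\big) = -\sqrt{n}\,(C_1 - C_0)\big(\tfrac23 - t_n\big)$ diverges to $-\infty$ whenever $|n - n_k|/\sqrt{n_k} \to \infty$ --- that is, unless $n$ lies in a window of width $\Theta(\sqrt{n_k})$ around some $n_k$. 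On all $n$ outside these windows the argument of $\Phi$ tends to $-\infty$ and contributes $0$ to the $\limsup$; on the windowed $n$ one has $t_n \to \tfrac23$, hence $V(T_{S^n}) \to \tfrac23 V_0 + \tfrac13 V_1$, and since the first term of the $\Phi$-argument is non-positive, $\Phi(\cdot) \le \Phi\!\big(r/\sqrt{\tfrac23 V_0 + \tfrac13 V_1}\big) + o(1)$. Evaluating at $n = n_k$ furnishes the matching lower bound, so
\begin{equation*}
K\Big(r \,\Big|\, \Ceps, \tfrac12; W, P_{\hat S}\Big) = \Phi\!\left( \frac{r}{\sqrt{\tfrac23 V_0 + \tfrac13 V_1}} \right).
\end{equation*}

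Plugging this into Theorem~\ref{thm:sec_gen} and solving $\Phi\big(r/\sqrt{\tfrac23 V_0 + \tfrac13 V_1}\big) \le \eps$ for the supremal $r$ --- using $V_{\min} > 0$ so that the denominator is bounded away from zero and $\Phi$ is strictly increasing there --- yields $\Lambda(\eps, \tfrac12; W, P_{\hat S}) = \sqrt{\tfrac23 V_0 + \tfrac13 V_1}\,\Phi^{-1}(\eps)$, i.e.\ $\Upsilon(\eps; W, P_{\hat S}) = \tfrac{2V_0}{3} + \tfrac{V_1}{3}$, as claimed. The one delicate point --- where a careless argument would go astray --- is the interaction of the two quantities that both vanish inside the $\Phi$-argument near $n_k$: one must verify that the window on which $\sqrt{n}\,(\Ceps - C(T_{S^n}))$ fails to diverge to $-\infty$ has length $o(n)$, so that on it $T_{S^n}$ has already settled to its limiting value $(\tfrac23, \tfrac13)$, leaving the $\limsup$ governed entirely by the critical subsequence $n_k = 2^{2k}-1$. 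The remaining estimates are elementary manipulations of geometric sums.
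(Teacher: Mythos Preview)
Your proof is correct and follows the same route as the paper's: collapse the expectation to a single $\Phi$ term (since the state sequence is deterministic), identify $n_k = 2^{2k}-1$ as the subsequence along which the type hits $(\tfrac23,\tfrac13)$ exactly, and show that all other subsequential limits are dominated. Your treatment is in fact more careful than the paper's --- the paper simply asserts that subsequences with a different limiting type give $\Phi\to 0$, whereas you correctly isolate the near-critical windows of width $O(\sqrt{n_k})$ around $n_k$ and close that gap via the key observation $t_n \le \tfrac23$ for all $n$, which forces the first term in the $\Phi$-argument to be non-positive everywhere.
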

The intuition  behind Theorem~\ref{thm:weird} is that, given the deterministic nature of $S_{\mathrm{a}}$ and $S_{\mathrm{b}}$, there is only one type that is active in the expectation defining $K(r|\, \Ceps, \beta; W, P_{\hat{S}})$ for every blocklength $n$. The subsequence that attains the $\limsup$ in  $K(r| \, \Ceps, \beta; W, P_{\hat{S}})$ corresponds to the type $(\frac{2}{3},\frac{1}{3})$. Thus, both the $\eps$-capacity and $\eps$-dispersion correspond to the $(\frac{2}{3},\frac{1}{3})$-linear combination of $(C_0,C_1)$ and $(V_0,V_1)$ respectively. For random  $S_{\mathrm{a}}$ and $S_{\mathrm{b}}$, the determination of optimum second-order coding rates is much more difficult because $\frac{1}{\sqrt{n}}\sum_{k=1}^n C_{S_k}$ does not converge in distribution to a Gaussian unlike in the preceding examples. 
\end{example}

\section{Statements and Proofs of One-Shot Bounds} \label{sec:one-shot}

For the proofs of the direct parts, we require a state-dependent generalization of Feinstein's bound~\cite{feinstein,  Lan06}.

\begin{proposition}[State-Dependent Feinstein Bound]
 \label{thm:feinstein-direct}
 Let $\Gamma > 0$ and let $P \in \cP(\cX|\cS)$ be any input distribution. Then, for any $\eta > 0$ and $m \in \mathbb{N}$, there exists a code $\cC = \{ \cM, e, d\}$ with $|\cM| = m$ such that
\begin{align*}
  p_{\max}(\cC; W, P_S) &\leq \Pr [ i(X;Y|S) \leq \log |\cM| + \eta ] + \exp(-\eta) + \Pr[ b_S(X) > \Gamma] .
 \end{align*}
\end{proposition}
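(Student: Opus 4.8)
The plan is to exploit the fact that, since the state is known at both encoder and decoder, a code $\cC=\{\cM,e,d\}$ for $W$ is nothing but a family $\{\cC_s\}_{s\in\cS}$ of ordinary channel codes for the constituent channels $W_s=W(\cdot|*,s)$ sharing the common message set $\cM$: upon observing $s$, the encoder uses the codebook of $\cC_s$ and the decoder uses its decoding rule, so that $p_{\max}(\cC;W,P_S)=\max_{m}\sum_{s}P_S(s)\Pr[\,\cC_s\text{ errs on message }m\,]$. It therefore suffices to construct, separately for each $s\in\cS$, a code of size $|\cM|$ for $W_s$ whose codewords all lie in $\cB_s(\Gamma)$ and whose maximum error probability is at most $\lambda_s:=\Pr[\,i(X;Y|s)\le\log|\cM|+\eta\mid S=s\,]+\Pr[\,b_s(X)>\Gamma\mid S=s\,]+\exp(-\eta)$, and then to average these per-state guarantees against $P_S$.

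For a fixed $s$, I would run Feinstein's greedy maximal-coding construction restricted to admissible inputs. Set $\tau:=\log|\cM|+\eta$ and, for each $x\in\cB_s(\Gamma)$, let $E_x:=\{y\in\cY:i(x;y|s)>\tau\}$ be its preliminary decoding set. Build codewords $x_1,x_2,\dots\in\cB_s(\Gamma)$ one at a time, keeping the refined decoding regions $D_j:=E_{x_j}\setminus\bigcup_{\ell<j}D_\ell$ pairwise disjoint, and --- having placed $x_1,\dots,x_k$ --- adding any admissible $x$ with $W_s(E_x\setminus\bigcup_{\ell\le k}D_\ell\mid x)\ge 1-\lambda_s$ as $x_{k+1}$. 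Every codeword placed this way has conditional error probability at most $\lambda_s$ by construction, so the only thing to verify is that the procedure cannot get stuck before $|\cM|$ codewords have been placed; this is the crux. If it were stuck after $N<|\cM|$ codewords, then, writing $D:=\bigcup_{\ell\le N}D_\ell$, we would have $W_s(E_x\setminus D\mid x)<1-\lambda_s$ for every $x\in\cB_s(\Gamma)$; averaging this over $x\sim P(\cdot|s)$ (only admissible $x$ contribute) and adding back the term $PW(D|s)$ gives $\Pr[\,i(X;Y|s)>\tau,\ b_s(X)\le\Gamma\mid S=s\,]<1-\lambda_s+PW(D|s)$. On the other hand the defining inequality of $E_x$ forces $PW(y|s)<e^{-\tau}W(y|x,s)$ on $E_x$, whence $PW(E_x|s)\le e^{-\tau}$ and, by disjointness, $PW(D|s)\le Ne^{-\tau}=N\exp(-\eta)/|\cM|$. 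Combining the two estimates with $1-\Pr[\,i(X;Y|s)>\tau,\ b_s(X)\le\Gamma\mid S=s\,]\le\Pr[\,i(X;Y|s)\le\tau\mid S=s\,]+\Pr[\,b_s(X)>\Gamma\mid S=s\,]$ and substituting the definition of $\lambda_s$ yields $N>|\cM|$, a contradiction. Hence at least $|\cM|$ codewords fit; keep exactly $|\cM|$ of them, call the resulting code $\cC_s$, and write $x^{(s)}_m$ and $D^{(s)}_m$ for its codewords and decoding regions.

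Finally, assemble the code $\cC$ with $e(m,s):=x^{(s)}_m$ and $d(y,s):=$ the unique $m'$ with $y\in D^{(s)}_{m'}$ (declaring an error otherwise). Admissibility $e(m,s)\in\cB_s(\Gamma)$ holds by construction, and for every $m$,
\[
\Pr[M\ne M'\mid M=m]=\sum_{s}P_S(s)\,W_s\bigl(\overline{D^{(s)}_m}\,\big|\,x^{(s)}_m\bigr)\le\sum_{s}P_S(s)\,\lambda_s,
\]
which, after expanding $\lambda_s$ and recombining the conditional probabilities into unconditional ones over $(S,X,Y)\leftarrow P_S(s)P(x|s)W(y|x,s)$, is exactly the claimed right-hand side; maximizing over $m$ controls $p_{\max}(\cC;W,P_S)$. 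I expect the only genuinely delicate point to be the bookkeeping in the halting step --- keeping every probability correctly conditioned on $S=s$ and isolating the $\Pr[b_s(X)>\Gamma\mid S=s]$ contribution when the halting inequality is averaged over admissible $x$ only; everything else is routine. An equivalent alternative is to invoke a known one-shot Feinstein bound with cost constraint for each $W_s$ and then average against $P_S$, but carrying out the greedy argument in place keeps the proof self-contained.
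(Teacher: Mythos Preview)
Your proposal is correct and follows essentially the same approach as the paper: build a Feinstein-type code $\cC_s$ for each constituent channel $W_s$ separately, then assemble and average over $P_S$ (using that $\max_m \sum_s P_S(s)\,[\cdot] \le \sum_s P_S(s)\max_m[\cdot]$). The only difference is that the paper simply cites the one-shot Feinstein bound with cost constraints from~\cite{Lan06} for each $s$, whereas you spell out the greedy maximal-coding construction in full --- an option you yourself note at the end as the ``equivalent alternative''.
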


\begin{proof}
Feinstein's theorem with cost constraints~\cite{Lan06} applied to each $s\in\cS$ states that to every $P_s \in \cP(\cX)$ and $m_s\in\mathbb{N}$, there exists a code $\cC_s = \{ \cM_s, e_s, d_s\}$ with $|\cM_s|=m_s$ for channel $W_s$ that satisfies
\begin{align*}
 p_{\max}(\cC; W_s)&\leq \Pr [ i(X;Y|s) \leq \log |\cM_s| + \eta ] + \exp(-\eta) + \Pr[ b_s(X) > \Gamma] ,
\end{align*}
where $ p_{\max}(\cC; W_s)$ is the maximum error probability for channel $W_s$ using the code $\cC_s$.  Now for all states $s\in\cS$, set $\cM_s=\cM$. Take the expectation of the preceding bound with respect to $S$ to get 
\begin{align*}
\Exp[ p_{\max}(\cC; W_S)]&\leq \Pr [ i(X;Y|S) \leq \log |\cM | + \eta ]  + \exp(-\eta) + \Pr[ b_S(X) > \Gamma] ,
\end{align*}
where $(S, X) \leftarrow P(x|s) P_S(s)$. Let $M':=d_S(Y)$ be the estimate of the message. Now we can lower bound the expectation as follows:
\begin{align*}
\Exp[ p_{\max}(\cC; W_S)]&	= \sum_{s\in\cS}P_S(s) \max_{m \in \cM}  \Pr[ M'\ne M|M=m,S=s] \\
&\ge\max_{m \in \cM}\sum_{s\in\cS}P_S(s) \Pr[ M'\ne M|M=m,S=s] = p_{\max}(\cC; W, P_S).
\end{align*}
This completes the proof. 
\end{proof}

We prove a generalization of our one-shot converse in~\cite[Prop.~6]{tomamicheltan12}, which is known to be tight in third-order for discrete memoryless channels with a judiciously chosen output distribution. 
This can be viewed as a state-dependent generalization of the meta-converse by Polyanskiy-Poor-Verd\'u~\cite[Thm.~28]{PPV10} (specifically, Eq.~\eqref{eq:yuri} in the proof) and the information spectrum converse by Verd\'u-Han~\cite[Thm.~4]{VH94},\cite[Lem.~3.2.2]{Han10} (specifically, Eq.~\eqref{eq:han} in the proof).

\begin{proposition}[State-Dependent Function Converse]
  \label{thm:one-shot-conv}
 Let $0 \leq \eps \leq 1$ and let $\Gamma > 0$. 
 Moreover, let $\{ \cS_t \}_{t \in \cT}$ be a partitioning of $\cS$ into mutually disjoint subsets and let $T_S$ be the variable indicating the random partition $S$ belongs to.
 Then, for any $\delta > 0$,
  \begin{align*}
    &\log M^*(\eps, \Gamma; W, P_{S}) \leq 
    \inf_{ Q \in \cP(\cY|\cS) }\ \sup_{ f:\, \cT \to \cX \times \cS}\ \sup \big\{ R \, \big| \Pr 
    \big[ j_Q(X;Y|S) \leq R \,\big|\, (X, S) = f(T_S) \big] \leq \eps + \delta \big\} - \log \delta.
  \end{align*}
  where $f: \cT \to \cX \times \cS$ is any function such that $f(t) = (x, s)$ satisfies $s \in \cS_t$ and $x \in \cB_s(\Gamma)$ 
  for all $t \in \cT$. 
\end{proposition}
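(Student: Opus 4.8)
\emph{Proof plan.} The plan is to lift the Polyanskiy--Poor--Verd\'u meta-converse together with the Verd\'u--Han information-spectrum bound to the state-aware setting, and then to strip the dependence on the particular code by localizing the error analysis to the cells $\{\cS_t\}$ of the given partition. Concretely, I would fix a code $\cC = \{\cM,e,d\}$ with $|\cM| = M^*(\eps,\Gamma;W,P_S)$ and $p_{\mathrm{avg}}(\cC;W,P_S)\le\eps$, and an arbitrary auxiliary output kernel $Q \in \cP(\cY|\cS)$. Writing $P_{MSXY}$ for the law induced by the code ($M$ uniform on $\cM$, $S\leftarrow P_S$, $X=e(M,S)$, $Y\leftarrow W(\cdot|X,S)$) and $\hat P_{MSXY}$ for the surrogate in which the last step is replaced by $Y\leftarrow Q(\cdot|S)$, the decoder's success event $\{d(Y,S)=M\}$ is a binary test distinguishing these two laws: under $P_{MSXY}$ it has probability $1-p_{\mathrm{avg}}(\cC;W,P_S)\ge 1-\eps$, while under $\hat P_{MSXY}$, since $\{\,y:d(y,s)=m\,\}_{m\in\cM}$ partitions $\cY$ for each $s$, it has probability exactly $1/|\cM|$. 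Hence $\beta_{1-\eps}(P_{MSXY},\hat P_{MSXY})\le 1/|\cM|$, and since the two laws agree on $(M,S,X)$ with $\log(\mathrm dP_{MSXY}/\mathrm d\hat P_{MSXY}) = j_Q(X;Y|S)$, the elementary lower bound $\beta_{1-\eps}(P,\hat P)\ge e^{-R}\big(\Pr_P[\log(\mathrm dP/\mathrm d\hat P)\le R]-\eps\big)$ gives, for every $R\in\mathbb{R}$,
\[
\frac{1}{|\cM|}\;\ge\; e^{-R}\Big(\Pr_{P_{MSXY}}\!\big[\,j_Q(X;Y|S)\le R\,\big]-\eps\Big).
\]

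The crux --- and the only genuinely delicate step --- is to replace $\Pr_{P_{MSXY}}[j_Q(X;Y|S)\le R]$ by a code-independent lower bound. For this I would use that the $S$-marginal of $P_{MSXY}$ is exactly $P_S$ and that admissibility forces every realized pair into $\{(s,x):x\in\cB_s(\Gamma)\}$; conditioning on the partition index $T_S$ and bounding each cell's contribution by its worst admissible input--state pair then yields
\[
\Pr_{P_{MSXY}}\!\big[\,j_Q(X;Y|S)\le R\,\big]\;\ge\;\min_{f}\ \Pr\big[\,j_Q(X;Y|S)\le R\,\big|\,(X,S)=f(T_S)\,\big],
\]
the minimum running over the admissible functions $f$ of the statement; it is attained because $\cX,\cS$ are finite and it separates across cells. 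I would note in passing that finer partitions give tighter bounds and coarser ones are easier to evaluate, which is why the partition is kept as a free parameter.

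Finally I would combine the two displays --- any $R$ with $\min_f\Pr[\cdots]>\eps$ obeys $\log|\cM|\le R-\log(\min_f\Pr[\cdots]-\eps)$ --- and pass to the quantile form: fixing $\delta>0$, every $R$ lying outside the set $\{R:\min_f\Pr[\,j_Q(X;Y|S)\le R\mid(X,S)=f(T_S)\,]\le\eps+\delta\}$ satisfies $\log|\cM|\le R-\log\delta$, so letting $R$ decrease to the supremum $R^\star$ of that set gives $\log|\cM|\le R^\star-\log\delta$. Since that set is the union over $f$ of the single-$f$ sets, $R^\star=\sup_f\sup\{R:\Pr[\,j_Q(X;Y|S)\le R\mid(X,S)=f(T_S)\,]\le\eps+\delta\}$; as $Q$ was arbitrary, taking the infimum over $Q\in\cP(\cY|\cS)$ completes the proof. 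The residual points --- the right-continuity argument justifying the passage to $R^\star$, and the convention for handling $Q$ that do not dominate $W(\cdot|x,s)$ (for which $j_Q$ may take the value $+\infty$ but the inequalities above are unchanged) --- are routine.
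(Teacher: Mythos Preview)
Your proposal is correct and follows essentially the same route as the paper: a state-aware meta-converse with an auxiliary output kernel $Q(\cdot|s)$, an information-spectrum relaxation yielding $\log|\cM|\le R-\log\delta$ whenever $\Pr[j_Q(X;Y|S)\le R]>\eps+\delta$, and then the partition-and-minimize-over-cells step to eliminate the code dependence. The only presentational difference is that the paper packages the first two steps through the $\eps$-hypothesis-testing divergence $D_h^\eps$ and its data-processing inequality before relaxing, whereas you build the decoder-as-test and the Verd\'u--Han lower bound on $\beta_{1-\eps}$ directly; both arrive at the same intermediate inequality (the paper's Eq.~\eqref{eq:han}) and proceed identically thereafter.
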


\begin{remark}
For i.i.d.\ repetitions of the channel on blocks of length $n$, the natural partitioning is into type classes $\cS_t := \{s^n\in\cS^n \,|\, T_{s^n} = t\}$. Let  $\cT = \cP_{n}(\cS)$ be the set of all types   and let $T_{S^n}$  denote the random type of $S^n$. 
\end{remark}

For the proof, we will need the following quantity~\cite{wang09,WangRenner}. Let $\eps \in (0,1)$ and let $P, Q \in \cP(\cZ)$. We consider binary (probabilistic) hypothesis tests $\xi : \cZ \to [0,1]$ and define the \emph{$\eps$-hypothesis testing divergence}
\begin{align*}
 D_h^{\eps}(P \| Q) := \sup \bigg\{ R  \,\bigg|\, \exists\ \xi :\ 
   \sum_{z \in \cZ} Q(z) \xi(z) \leq (1-\eps) \exp(-R)\ \land\ \sum_{z \in \cZ} P(x) \xi(z) \geq 1-\eps  \bigg\}.
\end{align*}
Note that $D_h^{\eps}(P \| Q)=-\log \frac{\beta_{1-\eps}(P,Q)}{1-\eps}$  where $\beta_\alpha$ is defined in PPV~\cite[Eq.\ (100)]{PPV10}. It is easy to see that $D_h^{\eps}(P\|Q)\geq 0$, where the lower bound is achieved if and only if $P = Q$ and $D_h^{\eps}(P \| Q)$ diverges if $P$ and $Q$ are singular measures. It satisfies a data-processing inequality~\cite[Lem.~1]{wang09}
\begin{align*}
   D_h^{\eps}(P \| Q) \geq D_h^{\eps}(PW \| QW) 
   \qquad \textrm{for all channels $W$ from $\cZ$ to $\cZ'$} .
\end{align*}

\begin{proof}[Proof of Prop.~\ref{thm:one-shot-conv}]
  We consider a general code $\{ \cM, e, d\}$, where the encoder is such that
  $e(s,m) \in \mathcal{B}_s(\Gamma)$ for all $m \in \cM$, $s \in \cS$. Moreover, let $Q \in \cP(\cY|\cS)$ be arbitrary for the moment. 
  
  Due to the data-processing inequality for the $\eps$-hypothesis testing divergence and the relation between random variables in Fig.~\ref{fig:states}\footnote{Note, in particular, that $(M,S) \leftrightarrow (X, S) \leftrightarrow (Y, S) \leftrightarrow M'$ forms a Markov chain.}, we have
  \begin{align*}
    D_h^{\eps}\big(P_{XYS} \| Q_{XYS} \big) \geq 
       D_h^{\eps}\big(P_{MYS} \| P_{M} \times Q_{YS} \big) \geq  D_h^{\eps}\big(P_{MM'} \| P_M \times Q_M\big) .
  \end{align*}  
  Here, $P_{XYS} = W(y|x,s) P(x|s) P_S(s)$ is induced by the encoder applied to $S \leftarrow P_S$ and a uniform $M$. In particular, note that $P(x|s) = 0$ if $x \notin \mathcal{B}_s(\Gamma)$ and that the encoding operation $E(x,s'|m,s) = 1\{s \!=\! s'\} e(x|m,s)$ can be inverted probabilistically. In contrast, $Q_{XYS}$ is an alternative hypothesis of the form
  \begin{align*}
    Q_{XYS}(x,y,s) = Q(y|s) P(x|s) P_S(s),
  \end{align*}
  where the channel output does not depend on the channel input $x$, but does depend on its state $s$. After employing the argument in~\cite[Lem.~3]{wang09} and~\cite[Prop.~6]{tomamicheltan12} to find $D_h^{\eps}(P_{MM'} \| P_M \times Q_M) \geq \log |\cM| + \log (1-\eps)$, this directly yields a generalization of the meta-converse~\cite[Thm.~28]{PPV10} to channels with state
  \begin{align}
   \log M^*(\eps,\Gamma; W, P_S) \leq \sup_{P \in \cP(\cX|\cS)} \inf_{Q \in \cP(\cY|\cS)} D_h^{\eps}(P_{XYS} \| Q_{XYS} ) + \log \frac{1}{1-\eps} ,
    \label{eq:yuri}
  \end{align}
  where the maximization is only over input distributions $P \in \cP(\cX|\cS)$ that satisfy $P(x|s) = 0$ for all $x \notin \cB_s(\Gamma)$.
  
  Instead, we are interested in a relaxation of this bound following the lines of~\cite{tomamicheltan12}. Also see \cite[Eq.~(102)]{PPV10}.
  Let $\delta > 0$ be arbitrary. We can further upper-bound $M^*(\eps;\Gamma; W,P_{S})$ in terms of the information spectrum (see, e.g.~\cite[Lem.~2]{tomamicheltan12}) to find
  \begin{align}
    \log M^*(\eps,\Gamma; W, P_S) \leq \sup_{P \in \cP(\cX|\cS)} \inf_{Q \in \cP(\cY|\cS)} \sup \big\{ R \, \big| \Pr 
    \big[ j_Q(X;Y|S) \leq R \big] \leq \eps + \delta \big\} .
    - \log \delta \label{eq:han} 
  \end{align}
    We may expand $\Pr[ j_Q(X;Y|S) \leq R]$ as follows:
  \begin{align*}
    \Pr[ j_Q(X;Y|S) \leq R] &= \sum_{t \in \cT} \Pr [ T_S = t ] \Pr[ j_Q(X;Y|S) \leq R \,|\, T_S = t ]\\
    &= \sum_{t \in \cT} \Pr [ T_S = t ] 
    \sum_{s \in \cS_t}\, \sum_{x \in \cB_s(\Gamma)} \!\!\!
    \Pr [ X = x, S = s \,|\, T_S = t ]
      \Pr \big[ j_Q(X;Y|S) \leq R \,\big|\, X = x, S = s\big] .
  \end{align*}
Clearly, for every $t \in \cT$, there exist symbols $s_Q^*(t)$ and $x_Q^*(t)$ with
\begin{align*}
 \big( s_Q^*(t), x_Q^*(t) \big) \in \argmin_{ \{ (s, x) \in \cS_t \times \cX \,|\, x \in \cB_s(\Gamma) \}}
\Pr[ j_Q(X;Y|S) \leq R \,|\, X=x, S=s]
\end{align*}
such that
$\Pr[ j_Q(X;Y|S) \leq R \,|\, X = x_Q^*(t), S = s_Q^*(t) ] \leq
   \Pr[ j_Q(X;Y|S) \leq R \,|\, T_S = t ] .$
Hence, we can relax the condition on $R$ in the supremum to get
\begin{align*}
     &\sup \big\{ R \, \big| \Pr \big[ j_Q(X;Y|S) \leq R \big] \leq \eps + \delta \big\} \\
    &\qquad \leq \sup \big\{ R \,\big| \Pr \big[ j_Q(X;Y|S) \leq R \,\big|\, X = x_Q^*(T_S), S = s_Q^*(T_S) \big] \leq \eps+ \delta \big\}  -\log \delta \\
    &\qquad \leq \sup_{f:\, \cT \to \cX \times \cS}\ \sup \big\{ R \,\big| \Pr \big[ j_Q(X;Y|S) \leq R \,\big|\, (X, S) = f(T_S) \big] \leq \eps+ \delta \big\} 
    -\log \delta ,
  \end{align*}
  where the function $f$ is of the form described in the statement of the lemma. Note that the last expression is independent of the input distribution $P \in \cP(\cX|\cS)$ induced by the code. Thus, substituting into~\eqref{eq:han}, the outer supremum over $P$ can  be dropped, concluding the proof.
\end{proof}

\section{Proofs of First-Order Results and Strong Converse}
\label{sec:prf_eps_cap}

We ignore cost constraints in the remainder to make the presentation more concise. We note that for the direct part, we can handle cost constraints by using the concavity and hence continuity of $\Gamma\mapsto C(\eps, \Gamma; W, P_{\hat{S}})$ and $\Gamma\mapsto C^\dagger(\eps, \Gamma; W, P_{\hat{S}})$.  For the converse part, we simply restrict the set of inputs to those that are admissible. For the proof of Theorem~\ref{thm:eps_cap}, we consider direct and converse bounds separately.

We recall that the conditional and unconditional  information variances~\cite{PPV10} for a given input distribution $P \in\cP(\cX)$ and a discrete memoryless channel $W\in\cP(\cY|\cX)$ are respectively defined as  
\begin{align*}
V(P,W) & := \sum_x P(x) \sum_y \bigg[ \log\frac{W(y|x)}{PW(y)}- D(W(\cdot|x) \| PW) \bigg]^2,\qquad\mbox{and} \\
U(P,W) & := \sum_x P(x) \sum_y  \bigg[  \log\frac{W(y|x)}{PW(y)}-C\bigg]^2  ,
\end{align*}
where $C = \max_P I(P,W)$ is the capacity of the channel $W \in \cP(\cY|\cX)$.  In particular, uniting~\cite[Lem.~62]{PPV10} and~\cite[Rmk.~3.1.1]{Han10}, the following uniform bounds hold
\begin{align}
  V(P,W) \leq U(P,W) \leq V^+:= 2.3 |\cY|. \label{eq:uni}
\end{align}
Furthermore, $U(P,W)=V(P,W)$ for all $P$ satisfying $I(P,W)=C$.

In the proofs, we initially fix $n \in \mathbb{N}$ and consider the $n$-fold memoryless channel $W^{n}\in \cP( \cY^n|\cX^{n} \times \cS^{n} )$ and a state sequence $S^n$ governed by the distribution $P_{S^n} \in \cP(\cS^{n})$. 

The limiting cases $C(1; W, P_{\hat{S}}) = \infty$ and $C^{\dagger}(0; W, P_{\hat{S}}) = -\infty$ follow immediately from the definition and we exclude them in the following.

\subsection{Direct Part for $(\eps,\Gamma)$-Capacity and Optimistic $(\eps,\Gamma)$-Capacity}

In the following, we show that, for $\eps \in [0, 1]$,
\begin{align}
 C(\eps; W, P_{\hat{S}}) &\geq \sup\{R \, | \,  J(R|W, P_{\hat{S}}) \leq \eps \},  \qquad \textrm{for } \eps \in [0, 1), \qquad \textrm{and} \label{d1}\\
 C^{\dagger}(\eps; W, P_{\hat{S}}) &\geq \sup\{R \, |\,  J^{\dagger}(R|W, P_{\hat{S}}) < \eps \}, \qquad\hspace{-0.15cm} \textrm{for } \eps \in (0, 1] . \label{d2}
\end{align}

\begin{proof}
   Consider an input distribution $P^* \in \cP(\cX|\cS)$ that satisfies $P^*(\cdot|s) \in \argmax_{P \in \cP(\cX)} I(P, W_s)$. We now apply Proposition~\ref{thm:feinstein-direct} to $W^n$ with input distribution $P(x^n|s^n) = \prod_{k=1}^n P^*(x_k|s_k)$.  For  any $R > 0$, we find that there exists a code $\cC = \{\cM_n, e, d\}$, with $|\cM_n| =\lfloor \exp(n R)\rfloor$,  that satisfies
  \begin{align*}
    p_{\max}(\cC; W^n, P_{S^n}) \leq \underbrace{\Pr\bigg[ \frac{1}{n} i(X^n; Y^n| S^n) \leq R + \eta \bigg]}_{=:\, \textrm{p}(R)} \!+ \exp(-n \eta) ,
  \end{align*}
for any $\eta > 0$.  The probability above can be bounded as follows
  \begin{align}
    \textrm{p}(R) &= \sum_{s^n \in \cS^n} \Pr[S^n = s^n]
    \Pr \bigg[ \frac{1}{n} \sum_{k=1}^n i(X_k; Y_k| s_k) \leq R + \eta \,\bigg|\, S^n = s^n \bigg] \nonumber\\
    &\leq \sum_{s^n \in \cS^n} \Pr[S^n = s^n] \Bigg( 1 \bigg\{ \frac{1}{n} \sum_{k=1}^n C_{s_k} \leq R + 2 \eta\bigg\}  + \Pr \bigg[ \frac{1}{n} \sum_{k=1}^n i(X_k; Y_k| s_k) < \frac{1}{n} \sum_{k=1}^n C_{s_k} - \eta \,\bigg|\, S^n = s^n \bigg] \Bigg) \nonumber\\
    &\leq \Pr  \bigg[ \frac{1}{n} \sum_{k=1}^n C_{S_k} \leq R + 2 \eta\bigg] + 
    \frac{V^+}{n \eta^2} , \label{d-all}
  \end{align}
  where we employed Chebyshev's inequality and~\eqref{eq:uni}.
  
  To show~\eqref{d1}, set $R^* := \sup \{ R | J(R|W,P_{\hat{S}}) \leq \eps\} - 3\eta$ and note that
  $R^*$ is $\eps$-achievable since~\eqref{d-all} implies
  \begin{align}
    &\limsup_{n \to \infty} p_{\max}(\cC; W^n, P_{S^n}) \leq \limsup_{n \to \infty} \Pr\! \bigg[ \frac{1}{n} \sum_{k=1}^n C_{S_k} \leq R + 2 \eta\bigg] \leq \eps , \label{eq:d-end}
  \end{align}
   where the last inequality is by definition of $J(R|W,P_{\hat{S}})$. Since this holds for all $\eta > 0$,
   \eqref{d1} follows. Equation~\eqref{d2} follows analogously by defining $R^* := \sup \{ R | J^{\dagger}(R|W,P_{\hat{S}}) < \eps\} - 3\eta$   and taking a $\liminf$ in~\eqref{eq:d-end}.
\end{proof}

\subsection{Converse Part for $(\eps,\Gamma)$-Capacity and Optimistic $(\eps,\Gamma)$-Capacity} 

In the following, we show that, 
\begin{align}
 C(\eps; W, P_{\hat{S}}) &\leq \sup\{R  \, |\,  J(R|W, P_{\hat{S}}) \leq \eps \}, \qquad \textrm{for } \eps \in [0,1), \qquad \textrm{and} \label{co1}\\
 C^{\dagger}(\eps; W, P_{\hat{S}}) &\leq \sup\{R \, |\,  J^{\dagger}(R|W, P_{\hat{S}}) < \eps \}, \qquad\hspace{-0.15cm} \textrm{for } \eps \in (0,1]. \label{co2}
\end{align}

\begin{proof}
For $n$ repetitions of the channel, we partition $\cS^{n}$ into type classes such that $\cT = \cP_n(\cS)$ and  $T_{S^n}$ denotes the random type of $S^n$. 
We define the \emph{joint type}~\cite{Csi97} of $x^n$ and $s^n$ as
\begin{align}
  T_{x^n,s^n}(x,s) := \frac{1}{n} \sum_{k=1}^n 1 \{s_k=s\} 1 \{x_k=x\} = T_{x^n|s^n}(x|s) T_{s^n}(s) \, . \label{eq:inducedtype}
\end{align}
Note that $\sum_x T_{x^n}(x,s) = T_{s^n}(s)$ is the type of $s^n$ and $T_{x^n,s^n}(x|s)$ is well-defined if we set $T_{x^n|s^n}(x|s)$ to uniform for all $s \in \cS$ such that $T_{s^n}(s) = 0$. We collect all these conditional types in a set $\cP_n(\cX|\cS)$ satisfying $\abs{\cP_n(\cX|\cS)} \leq (n+1)^{|\cX| |\cS|}$. We also define $T_{X^n,S^n}$ and $T_{X^n|S^n}$ as the corresponding random joint and conditional types.

We then apply Proposition~\ref{thm:one-shot-conv} for $\eps_n \in [0, 1]$, which yields
\begin{align*}
  &\log M^*(\eps_n; W^n, P_{S^n})  \leq\sup_{f:\, \cT \to \cX^n \times \cS^n}  \sup \big\{ R \, \big| \Pr 
    \big[ j_{Q^{(n)}}(X^n;Y^n|S^n) \leq R \,\big|\, (X^n, S^n) = f(T_{S^n}) \big] \leq \eps_n + \delta \big\} - \log \delta, 
\end{align*}
where we employed the following convex combination of conditional distributions:
\begin{align*}
Q^{(n)}(y^n|s^n) := \frac{1}{\abs{ \cP_n(\cX|\cS) }} \sum_{T_{x^n|s^n} \in \cP_n(\cX|\cS)}\, \prod_{k=1}^n T_{x^n|s^n}W(y_k|s_k) \,. 
\end{align*}
We also recall that $f$ is a function mapping $t \in \cP_n(\cS)$ to an element of $\big\{ (x^n, s^n) \in \cX^{n} \times \cS^{n} \,\big|\, T_{s^n} = t \big\}$.

The inner probability can be bounded as follows.
\begin{align*}
  &\Pr \big[ j_{Q^{(n)}}(X^n;Y^n|S^n) \leq R \,\big|\, (X^n,S^n) = f(T_{S^n}) \big] \\
  &\qquad  = \sum_{t \in \cT} \Pr[ T_{S^n} = t ] \Pr \Bigg[ \log \frac{\prod_{k=1}^n W(Y_k|X_k,S_k)}{Q^{(n)}(Y^n|S^n)} \leq R \,\Bigg|\, (X^n,S^n) = f(t) \Bigg] \\
  &\qquad  \geq \sum_{t \in \cT} \Pr[ T_{S^n} = t ] \Pr \Bigg[ \sum_{k=1}^n \log \frac{W(Y_k|X_k,S_k)}{ T_{X^n|S^n}W(Y_k|S_k)} \leq  R - \log \big|\cP_n(\cX|\cS)\big| \,\Bigg|\, (X^n,S^n) = f(t) \Bigg] .
\end{align*}
Here, we chose the conditional type $T_{X^n|S^n}$ depending on the joint type of $X^n$ and $S^n$.

Thus, choosing $\delta = 1/\sqrt{n}$ hereafter, we find the following bound:
\begin{align}
  &\qquad \frac{1}{n} \log M^*(\eps_n; W^n, P_{S^n}) \leq \sup_{f:\, \cT \to \cX^n \times \cS^n} \textrm{cv}(f; n,\eps_n) + \frac{\log n}{2n} + |\cX||\cS|\,\frac{\log (n + 1)}{n}, \label{eq:tobeusedagain}
\end{align}
and we introduced the quantities $\textrm{cv}(f;n,\eps_n) := \sup \big\{  R \,\big|\, \Exp \big[ \Xi_{f}(R; n, T_{S^n}) \big] \leq \eps_n + \delta \big\}$ and
\begin{align}
  \Xi_{f}(R; n, t) &:= \Pr \bigg[ \frac{1}{n} \sum_{k=1}^n j_{T_{X^n|S^n}}(X_k;Y_k|S_k) \leq R \,\Bigg|\, (X^n,S^n) = f(t) \Bigg] . \label{eqn:def_Xi}
\end{align}

We next analyze $\cv(f;n,\eps_n)$ for a fixed function $f$. First, note that $\Xi_f(R; n,t)$ is the cumulative distribution function of a sum of independent random variables since $S^n$ and $X^n$ are fixed. 
By Chebyshev's inequality, we find
\begin{align*}
  \Xi_{f}(R; n, t) \geq 1 - \frac{ V(T_{x^n|s^n}, W| t) }{n \big( R - I(T_{x^n|s^n}, W| t) \big)^2} \geq 1 - \frac{V^+}{n (R - C(t) )^2} \qquad \textrm{if} \ R > C(t) .
\end{align*}
Here, the expectation and variance are defined for $(x^n,s^n) = f(t)$ as
\begin{align*}
  \Exp \Bigg[ \frac{1}{n} \sum_{k=1}^n  j_{T_{X^n|S^n}}(X_k; Y_k|S_k) \,\bigg| \,(X^n,S^n) = f(t) \Bigg] &= I \big(T_{x^n|s^n}, W \big| t \big), \\
  \Var \Bigg[ \frac{1}{n} \sum_{k=1}^n j_{T_{X^n|S^n}}(X_k; Y_k|S_k)\,\bigg|\, (X^n,S^n) = f(t) \Bigg] &= \frac{V \big(T_{x^n|s^n}, W \big| t \big)}{n} ,
\end{align*}
and $V(P,W|t)=\sum_s t(s) V(P_s,W_s)$. The second inequality follows since
\begin{align*}
  I\big(T_{x^n|s^n}, W \big| t \big) &= \sum_{s \in \cS} t(s) I\big(T_{x^n|s^n}(\cdot|s), W_{s} \big) 
  \leq \sum_{s \in \cS} t(s) C_{s} = C(t) = \frac{1}{n} \sum_{k=1}^n C_{s_k} 
\end{align*}
and $V$ is bounded using~\eqref{eq:uni}.

We employ Markov's inequality which states that $\Exp [ \Xi_f(R; n, T_{S^n}) ] \geq \gamma \Pr [ \Xi_f(R; n, T_{S^n}) \geq \gamma ]$ for any $\gamma > 0$. Thus, choosing $\gamma = 1 - 1/\sqrt{n}$, we find
\begin{align}
  \cv(f; n, \eps_n) &\leq \sup \bigg\{ R\, \bigg|\, \Pr [ \Xi_f(R; n, T_{S^n}) \geq \gamma ] \leq \frac{\eps_n+\delta}{\gamma} \bigg\} 
  \nonumber\\
  &\leq \sup \bigg\{ R \, \bigg|\, \Pr \bigg[ \frac{1}{n} \sum_{k=1}^n C_{S_k} \leq R \bigg] \leq \frac{\eps_n+\delta}{\gamma} \bigg\} 
  + \sqrt{\frac{V^+}{\sqrt{n}}}\ , \label{a-bound}
\end{align}
which is independent of $f$.

Finally, the asymptotics~\eqref{co1} and~\eqref{co2} can be shown as follows. Due to the above, any $\eps$-achievable rate $R$ satisfies
\begin{align*}
  R &\leq \liminf_{n \to \infty} \frac{1}{n} \log M^*(\eps_n; W^n, P_{\hat{S}}) \leq \liminf_{n \to \infty} \sup \bigg\{ R \,\bigg|\,  \Pr \bigg[ \sum_{k=1}^n C_{S_k}\le R \bigg] 
  \leq \eps_n' \bigg\}
\end{align*}
for some sequence $\hat{\eps} = \{ \eps_n \}_{n=1}^\infty$ with $\limsup_{n \to \infty} \eps_n \leq \eps$
and $\hat{\eps}'$ defined via $\eps_n' = (\eps_n+\delta)/\gamma$ with $\delta = 1/n$ and $\gamma = 1 - 1/\sqrt{n}$ such that the limits of the sequences coincide. Hence, for any $\xi > 0$, there exists a constant $N_\xi \in \mathbb{N}$ such that
\begin{align*}
  \forall\,  n \geq N_\xi:\ \ \Pr \bigg[\frac{1}{n} \sum_{k=1}^n C_{S_k} \leq R - \xi \bigg] \leq \eps_n' .
\end{align*}
Thus, in particular,
\begin{align}
 J(R-\xi|\Gamma; W,P_{\hat{S}})= \limsup_{n\to\infty} \Pr \bigg[\frac{1}{n} \sum_{k=1}^n C_{S_k} \leq R - \xi \bigg] \leq 
  \limsup_{n\to\infty}\eps_n' \leq \eps \,. \label{eq:implied}
\end{align}
Now define 
\begin{align*}
R_0:=\sup\big\{ R \, \big| \, J(R|\Gamma; W,P_{\hat{S}}) \le\eps \big\},
\end{align*}
and suppose, to the contrary, that $R>R_0$. This means that we can find a $\xi>0$ such that $R-\xi> R_0$. Choose this $\xi$ for the above argument. Thus,  $R-\xi> \sup\big\{ R \, \big| \, J(R|\Gamma; W,P_{\hat{S}}) \le\eps \big\}$ which means that $J(R-\xi|\Gamma; W,P_{\hat{S}})>\eps$, contradicting \eqref{eq:implied}.  This means that we must have $R\le R_0= C(\eps,\Gamma;W,P_{\hat{S}})$.

The second statement~\eqref{co2} follows analogously by choosing a sequence~$\hat{\eps}$ with $\liminf_{n \to \infty} \eps_n < \eps$ and taking a $\liminf$ in~\eqref{eq:implied}.
\end{proof}

\section{Proof  of the General Second-Order Result in Theorem~\ref{thm:sec_gen}}
\label{sec:prf_second}

In this section, we prove Theorem~\ref{thm:sec_gen}. Before we start, let 
\begin{align*}
L(P,W):=\sum_x P(x) \sum_y W(y|x) \left|\log \frac{W(y|x)}{PW(y)} - D(W(\cdot|x ) \| PW) \right|^3
\end{align*}
be the third absolute moment of the log-likelihood ratio between $W(\cdot|x) , PW \in\cP(\cY)$. We will often leverage on the following auxiliary result which follows by a straightforward modification of~\cite[Lem.~46]{PPV10}.
%
\begin{lemma}[Uniform Bound on Third Moment] \label{lem:third}
If $|\cX|$ and $|\cY|$ are finite, 
\begin{align}
L(P,W)\le L^+:= |\cY| \big(9 e^{-1} \log e\big)^3 \label{eqn:boundL}
\end{align}
for all $P\in\cP(\cX)$ and $W\in\cP(\cY|\cX)$.
\end{lemma}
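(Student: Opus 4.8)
The plan is to adapt the argument behind~\cite[Lem.~46]{PPV10} to the per-input (conditional) centering present in $L(P,W)$. It suffices to sum only over $x$ with $P(x)>0$ and $y$ with $PW(y)>0$, since the remaining terms vanish under the usual conventions; on this support set $i(x;y):=\log\frac{W(y|x)}{PW(y)}$, which is well defined since $PW(y)\ge P(x)W(y|x)$, and note that $D(W(\cdot|x)\|PW)=\sum_y W(y|x)\,i(x;y)$ is exactly the mean of $i(x;Y)$ for $Y\sim W(\cdot|x)$. Thus $L(P,W)=\sum_x P(x)\,\Exp_{Y\sim W(\cdot|x)}\big[\,\big|i(x;Y)-D(W(\cdot|x)\|PW)\big|^3\,\big]$ is a $P(x)$-weighted average of \emph{conditional centered} third absolute moments. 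The two moves will be: (i) replace the centered moment by the uncentered one at the cost of an absolute constant, and (ii) bound the uncentered object by splitting the $(x,y)$-sum on the sign of $i(x;y)$.

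For (i), I would use that for any real random variable $Z$ with mean $\mu$, from $|Z-\mu|\le|Z|+|\mu|$, expanding the cube, and then using $|\mu|\le\Exp|Z|$ together with $\Exp|Z|^2\le(\Exp|Z|^3)^{2/3}$ (both by Jensen), one obtains $\Exp|Z-\mu|^3\le 8\,\Exp|Z|^3$. Applying this for each $x$ yields $L(P,W)\le 8\,\widetilde L(P,W)$, where $\widetilde L(P,W):=\sum_x P(x)\sum_y W(y|x)\,|i(x;y)|^3$ is the uncentered analogue.

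For (ii), I would split according to whether $W(y|x)\ge PW(y)$ (so $i(x;y)\ge 0$) or not. When $W(y|x)<PW(y)$, use $|i(x;y)|=\log\frac{PW(y)}{W(y|x)}\le\log\frac1{W(y|x)}$ (as $PW(y)\le 1$); then for each fixed $x$, $\sum_y W(y|x)\log^3\frac1{W(y|x)}\le|\cY|\,\kappa$ with $\kappa:=\max_{0<t\le1}t\log^3\frac1t$, and the $P(x)$-average over $x$ stays at $|\cY|\,\kappa$. When $W(y|x)\ge PW(y)$, use $i(x;y)\le\log\frac1{PW(y)}$ (as $W(y|x)\le1$), and---this is the crucial point---sum over $x$ first: for each fixed $y$,
\begin{align*}
  \sum_{x:\,W(y|x)\ge PW(y)}P(x)\,W(y|x)\,|i(x;y)|^3\;\le\;\log^3\!\frac1{PW(y)}\sum_x P(x)\,W(y|x)\;=\;PW(y)\log^3\!\frac1{PW(y)}\;\le\;\kappa,
\end{align*}
so summing over $y$ gives $|\cY|\,\kappa$ again, whence $\widetilde L(P,W)\le 2\,|\cY|\,\kappa$. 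Finally, a one-variable optimization (write $t\log^3\frac1t=(\log e)^3\,t(\ln\frac1t)^3$, differentiate, obtain the maximizer $t=e^{-3}$) gives $\kappa=27\,e^{-3}(\log e)^3=(3e^{-1}\log e)^3$, so $L(P,W)\le 16\,|\cY|\,(3e^{-1}\log e)^3\le|\cY|\,(9e^{-1}\log e)^3=L^+$, the last inequality because $16\le 27=(9/3)^3$.

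The hard part will be the positive-sign portion of step (ii): there is no bound on $|i(x;y)|$ uniform in $x$ (the ratio $W(y|x)/PW(y)$ can be as large as $1/P(x)$), so one cannot argue termwise in $x$ and must instead exploit the identity $\sum_x P(x)W(y|x)=PW(y)$ with the $P(x)$ weights kept inside the $x$-sum. The remaining pieces---the centered-to-uncentered reduction and the scalar maximization---are routine, and the slack between $16$ and $27$ shows the stated constant $9$ is generous rather than tight.
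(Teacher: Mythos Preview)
Your argument is correct and is precisely the ``straightforward modification of~\cite[Lem.~46]{PPV10}'' that the paper invokes without further detail: you pass from the conditionally centered third moment to the uncentered one via the elementary bound $\Exp|Z-\mu|^3\le 8\,\Exp|Z|^3$, split on the sign of $i(x;y)$, and use the scalar maximum of $t\log^3(1/t)$, exactly as in the original PPV proof. The only point that needed care---summing in $x$ first on the positive-sign part so that $\sum_x P(x)W(y|x)=PW(y)$ absorbs the potentially large $\log\frac{1}{PW(y)}$---you handled correctly.
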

Also recall that $W$ satisfies $V_s \ge V_{\min}>0$ for all $s\in\cS$, which allows us to define a universal Berry-Esseen constant $B := 6 L^+/ V_{\min}^{3/2}$ that will be used frequently in the following.

 Theorem~\ref{thm:sec_gen} follows from the following two results   characterizing the channel's performance for block-length $n$ in terms of $C(T_{S^n})$  and $V(T_{S^n})$. 
\begin{proposition}[Direct Part] \label{lem:direct_second}
For each $n \in \mathbb{N}$, there exists a length-$n$ code $\cC_n = \{ \cM, e, d\}$ with
\begin{equation}
p_{\max}(\cC_n; W^n, P_{S^n}) \le\Exp  \left[ \Phi \left( \sqrt{n}\cdot\frac{R  - C(T_{S^n})  }{\sqrt{  V(T_{S^n}) }}\right) \right] + \frac{D_1 \log n}{\sqrt{n}} + \frac{B+1}{\sqrt{n}}  \label{eqn:lemma_direct}
\end{equation}
where   $R:=\frac{1}{n}\log |\cM|$ 
and $D_1 := (2 \sqrt{2\pi V_{\min}} )^{-1}$. 
\end{proposition}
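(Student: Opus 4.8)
The plan is to combine the state-dependent Feinstein bound of Proposition~\ref{thm:feinstein-direct} with a Berry-Esseen estimate applied \emph{conditionally} on the state sequence $S^n$. Since cost constraints have been dropped, I would first apply Proposition~\ref{thm:feinstein-direct} to the $n$-fold channel $W^n$ with the product input distribution $P(x^n|s^n)=\prod_{k=1}^n P_{s_k}^*(x_k)$, where $P_s^*$ is the (unique) capacity-achieving input distribution of $W_s$, using an arbitrary message set size $|\cM|=m$ and the choice $\eta=\tfrac12\log n$, so that $\exp(-\eta)=1/\sqrt n$ and the cost term vanishes. Writing $R=\tfrac1n\log|\cM|$, this produces a code $\cC_n$ with
\begin{align*}
  p_{\max}(\cC_n; W^n, P_{S^n}) \le \Pr\Big[ i(X^n;Y^n|S^n) \le nR + \tfrac12\log n \Big] + \frac{1}{\sqrt n},
\end{align*}
where $i(X^n;Y^n|S^n)=\sum_{k=1}^n i(X_k;Y_k|S_k)$ because $P^*$ is a product distribution.

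The second step is to condition on $S^n=s^n$ and identify the law of the information density. Under $(X^n,Y^n)\leftarrow\prod_{k=1}^n P_{s_k}^*(x_k)W(y_k|x_k,s_k)$ the summands $i(X_k;Y_k|s_k)$ are independent (though not identically distributed). Using the Kuhn-Tucker conditions characterizing capacity-achieving inputs, namely $D(W(\cdot|x,s)\|P_s^*W_s)=C_s$ for every $x$ with $P_s^*(x)>0$, one gets $\Exp[i(X_k;Y_k|s_k)]=C_{s_k}$ and, almost surely, $i(X_k;Y_k|s_k)-C_{s_k}=\log\frac{W(Y_k|X_k,s_k)}{P_{s_k}^*W_{s_k}(Y_k)}-D(W(\cdot|X_k,s_k)\|P_{s_k}^*W_{s_k})$; hence each summand has variance $V(P_{s_k}^*,W_{s_k})=V_{s_k}\ge V_{\min}>0$ and third absolute central moment $L(P_{s_k}^*,W_{s_k})\le L^+$ by Lemma~\ref{lem:third}. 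The Berry-Esseen theorem for sums of independent, non-identical summands, together with $\sum_k C_{s_k}=nC(T_{s^n})$, $\sum_k V_{s_k}=nV(T_{s^n})\ge nV_{\min}$ and $\sum_k L(P_{s_k}^*,W_{s_k})\le nL^+$, then gives for every $s^n$
\begin{align*}
  \Pr\Big[ \sum_{k=1}^n i(X_k;Y_k|s_k) \le nR + \tfrac12\log n \,\Big|\, S^n=s^n \Big]
  \le \Phi\bigg( \frac{nR + \tfrac12\log n - n C(T_{s^n})}{\sqrt{n\,V(T_{s^n})}} \bigg) + \frac{B}{\sqrt n}.
\end{align*}

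The final step removes the $\tfrac12\log n$ shift and averages over $S^n$. Using $\Phi(u+h)-\Phi(u)\le h/\sqrt{2\pi}$ for $h\ge 0$ together with $V(T_{s^n})\ge V_{\min}$, the shift costs at most $\frac{1}{\sqrt{2\pi}}\cdot\frac{\log n}{2\sqrt{nV_{\min}}}=\frac{D_1\log n}{\sqrt n}$, while $\frac{nR-nC(T_{s^n})}{\sqrt{nV(T_{s^n})}}=\sqrt n\,\frac{R-C(T_{s^n})}{\sqrt{V(T_{s^n})}}$. Taking $\Exp_{S^n}$ of the resulting conditional bound and collecting the $B/\sqrt n$, $D_1\log n/\sqrt n$ and $1/\sqrt n$ contributions yields exactly~\eqref{eqn:lemma_direct}. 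The only genuinely delicate points, which I would spell out carefully, are the moment identification of the middle step --- that the conditional information density has mean $nC(T_{s^n})$ (the \emph{empirical} capacity) and variance $nV(T_{s^n})$ (the \emph{empirical} dispersion), which hinges on the identity $U(P,W)=V(P,W)$ at capacity-achieving inputs and on the vanishing of $D(W(\cdot|x,s)\|P_s^*W_s)-C_s$ on the support of $P_s^*$ --- and the constant bookkeeping that makes $\eta=\tfrac12\log n$ balance $\exp(-\eta)$ against the $\Phi$-shift penalty so that the stated constants $B$ and $D_1$ emerge; I do not anticipate any conceptual obstacle beyond these.
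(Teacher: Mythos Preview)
Your proposal is correct and follows essentially the same route as the paper: apply the state-dependent Feinstein bound with $\eta=\tfrac12\log n$ and the product input $(P^*)^{\times n}$, condition on the state sequence, apply Berry--Esseen to the independent (non-identical) information density summands, and absorb the $\tfrac12\log n$ shift using the Lipschitz bound on $\Phi$. The only cosmetic difference is that the paper first groups state sequences by type before invoking Berry--Esseen and cites~\cite[Lem.~62]{PPV10} for $U(P_s^*,W_s)=V(P_s^*,W_s)$, whereas you condition on $s^n$ directly and derive the same moment identification from the KKT conditions; these are equivalent.
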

\begin{proposition}[Converse Part] \label{lem:converse_second}
Let $0 < \eps < 1$. Then, for $n \geq N_0$ we have
\begin{align*}
  \frac{1}{n} \log M^*\big(\eps; W^n, P_{S^n}\big) \leq \sup \Bigg\{ R \,\Bigg|\, \Exp \Bigg[ \Phi\Bigg( \sqrt{n} \cdot \frac{R - C(T_{S^n})}{\sqrt{V(T_{S^n})}} \Bigg) \Bigg] \leq \eps +  \frac{B+1}{\sqrt{n}}\Bigg\} + \frac{D_2\log n}{n}  .
\end{align*}
where  $N_0$ and $D_2 > 0$   only depend on the parameters of $W$ (cf.~Lemmas~\ref{lm:Pi-mu} and~\ref{lm:psit}) and the cardinalities $|\cX|$, $|\cY|$ and $|\cS|$.
\end{proposition}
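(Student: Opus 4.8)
The plan is to start from the one-shot converse of Proposition~\ref{thm:one-shot-conv}, specialized to $n$ uses of the channel with the partition into type classes $\cT = \cP_n(\cS)$, and then to refine the crude Chebyshev estimate used in the proof of Theorem~\ref{thm:eps_cap} into a Berry--Esseen estimate. Concretely, I would begin exactly as in the derivation of~\eqref{eq:tobeusedagain}: choose the output distribution $Q^{(n)}$ to be the uniform mixture over conditional types $T_{x^n|s^n}$, obtain the bound
\begin{align*}
\tfrac1n \log M^*(\eps; W^n, P_{S^n}) \leq \sup_{f}\ \cv(f;n,\eps) + \tfrac{\log n}{2n} + |\cX||\cS|\tfrac{\log(n+1)}{n},
\end{align*}
and recall that $\Xi_f(R;n,t)$ is the cdf of a normalized sum of $n$ independent (given $f(t)$) bounded random variables with mean $I(T_{x^n|s^n},W|t) \le C(t)$ and variance $V(T_{x^n|s^n},W|t)/n$. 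The key improvement over Theorem~\ref{thm:eps_cap} is that, instead of Chebyshev, I would apply the Berry--Esseen theorem to $\Xi_f(R;n,t)$: since $V_{\min} > 0$ (so the variance is bounded below) and the third absolute moment is uniformly bounded by $L^+$ via Lemma~\ref{lem:third}, we get
\begin{align*}
\Xi_f(R;n,t) \ge \Phi\!\left(\sqrt{n}\cdot\frac{R - I(T_{x^n|s^n},W|t)}{\sqrt{V(T_{x^n|s^n},W|t)}}\right) - \frac{B}{\sqrt n} \ge \Phi\!\left(\sqrt{n}\cdot\frac{R - C(t)}{\sqrt{V(t)}}\right) - \frac{C'\log n + B}{\sqrt n},
\end{align*}
where the second inequality replaces $I$ by the (possibly larger) $C(t)$ and $V(T_{x^n|s^n},W|t)$ by $V(t)$ at a cost controlled using the Lipschitz/monotonicity behaviour of $\Phi$ together with the continuity estimates on how the capacity-achieving input distribution and its dispersion respond to perturbations of the input distribution — this is where Lemmas~\ref{lm:Pi-mu} and~\ref{lm:psit} and the constant $D_2$ come in, and why $N_0$ and $D_2$ depend only on $W$ and the alphabet sizes.

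Having lower-bounded $\Xi_f(R;n,t)$ by a quantity \emph{independent of $f$} (call it $\Psi(R;n,t) := \Phi(\sqrt n (R-C(t))/\sqrt{V(t)})$ minus an $O(\log n/\sqrt n)$ slack), I would take expectations over the random type $T_{S^n}$ and use the definition of $\cv(f;n,\eps)$ together with Markov's inequality (as in the passage leading to~\eqref{a-bound}) — or more directly, since the lower bound is deterministic in $t$, simply observe that $\Exp[\Xi_f(R;n,T_{S^n})] \le \eps + 1/\sqrt n$ forces $\Exp[\Psi(R;n,T_{S^n})] \le \eps + (B+1)/\sqrt n$ after absorbing the uniform slack. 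Solving for the supremal $R$ and folding the $O(\log n/n)$ and $O(\log(n+1)/n)$ terms from the type-counting overhead and the choice $\delta = 1/\sqrt n$ into a single $D_2 \log n / n$ term yields the claimed bound.

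The main obstacle I anticipate is the transition from $\Exp[\Phi(\sqrt n(R - C(T_{S^n}))/\sqrt{V(T_{S^n})})]$ with the \emph{genuine} empirical capacity and empirical dispersion to a clean statement: one must control (i) the gap $C(t) - I(T_{x^n|s^n},W|t)$, which can be $\Theta(1)$ for badly chosen conditional types but is harmless because it only makes $\Xi_f$ larger, hence the converse stronger; and (ii) the substitution $V(T_{x^n|s^n},W|t) \to V(t)$ inside $\Phi$, where one needs that whenever the conditional type $T_{x^n|s^n}(\cdot|s)$ is close to the (unique) capacity achiever $P_s^*$ the dispersion is close to $V_s$, and otherwise the first-order term $R - C(t)$ is bounded away from zero so the Gaussian is exponentially flat — this dichotomy is precisely what Lemmas~\ref{lm:Pi-mu} and~\ref{lm:psit} are designed to quantify, and carrying it through uniformly in $n$ is the technically delicate part. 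Everything else is bookkeeping of lower-order $\log n$ terms.
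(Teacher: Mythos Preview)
Your plan is essentially the paper's proof: start from~\eqref{eq:tobeusedagain} with $\delta=1/\sqrt{n}$ and the mixture output distribution $Q^{(n)}$, then upgrade Chebyshev to Berry--Esseen via the continuity/dichotomy machinery of Lemmas~\ref{lm:Pi-mu} and~\ref{lm:psit}. Two points of care, however.

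First, your displayed Berry--Esseen step with constant $B=6L^+/V_{\min}^{3/2}$ is not valid unconditionally: $V_{\min}=\min_s V(P_s^*,W_s)$ bounds the dispersion only at the \emph{capacity-achieving} inputs, whereas $V(T_{x^n|s^n},W|t)$ can be arbitrarily small for a badly chosen conditional type. The paper therefore does not apply Berry--Esseen globally but works with the inverse $\Xi_f^{-1}$ and splits into the two cases you describe at the end; in Case~b) (some $s^*$ with $T_{x^n|s^n}(\cdot|s^*)\notin\Pi_\mu^{s^*}$ and $t(s^*)\ge\xi$) it falls back to Chebyshev, using the $\Theta(1)$ first-order gap $C(t)-I(T_{x^n|s^n},W|t)\ge\delta\xi$. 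You identify this dichotomy correctly, but it must come \emph{before} the Berry--Esseen display, not after.

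Second, your additive slack $C'\log n/\sqrt{n}$ is coarser than what Proposition~\ref{lem:converse_second} asserts and cannot be ``folded into'' $D_2\log n/n$. The paper obtains the sharper form by bounding $\Xi_f^{-1}(\eps;n,t)$ directly (Lemma~\ref{lm:psit}): in Case~a), the quadratic capacity gap $-\alpha\zeta_s^2$ from Lemma~\ref{lm:Pi-mu}(b) is traded against the linear variance perturbation $\beta\zeta_s\sqrt{\log n/n}$, and optimising over $\zeta_s$ yields a correction of order $\log n/n$ \emph{as a shift in $R$}, not $\log n/\sqrt{n}$ as an additive slack on $\Phi$. Inverting via Lemma~\ref{lm:inverse} then gives $\Xi_f(R;n,t)\ge\Phi\big(\sqrt{n}(R-C(t)-D\tfrac{\log n}{n})/\sqrt{V(t)}\big)-B/\sqrt{n}$ on the relevant range, from which the stated form with $(B+1)/\sqrt{n}$ inside and $D_2\log n/n$ outside follows directly. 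Your version would still suffice for Theorem~\ref{thm:sec_gen}, but not for the proposition as stated.
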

\begin{remark}
  Proposition~\ref{lem:converse_second} can be restated as follows. Every length-$n$ code $C_n = \{\cM, e, d\}$ for $n \geq N_0$ satisfies
  \begin{align*}
    p_{\mathrm{avg}}(\cC_n; W^n, P_{S^n}) \geq \Exp  \left[ \Phi \left( \sqrt{n}\cdot\frac{R  - C(T_{S^n})  }{\sqrt{  V(T_{S^n}) }}\right) \right] - \frac{D_3 \log n}{\sqrt{n}} - \frac{B+1}{\sqrt{n}} ,
  \end{align*}
  where   $R:=\frac{1}{n}\log |\cM|$ and $D_3 := D_2/\sqrt{2\pi}$.
\end{remark}

\subsection{Direct Part of $(\eps,\beta)$-Optimum Second-Order Coding Rates}

\begin{proof}[Proof of Proposition~\ref{lem:direct_second}]
We fix $n \in \mathbb{N}$. Also we let $\eps_n:=p_{\max}(\cC_n; W^n, P_{S^n})$, the maximum probability of error. Starting from the state-dependent Feinstein's lemma without cost constraints (Proposition~\ref{thm:feinstein-direct}), 
\begin{align}
\eps_n  \le\Pr\left[ \frac{1}{n} \log\frac{W^n(Y^n|X^n,S^n)}{ P_{X^n|S^n}W^n (Y^n|S^n)} \le\frac{1}{n}\log M + \gamma\right]+ \exp(-n\gamma)   \nonumber ,
\end{align}
where $\eps_n$ is the maximum error probability. 
We are going to let $\gamma :=\frac{\log n}{2n}$ throughout. Hence, the above bound becomes
\begin{align}
\eps_n \le\sum_{t\in\cP_n(\cS)}\Pr[T_{S^n}=t] \Pr\left[ \frac{1}{n} \log\frac{W^n(Y^n|X^n,S^n)}{ P_{X^n|S^n}W^n(Y^n|S^n) }\le\frac{1}{n}\log M + \gamma\,\Big|\, T_{S^n}=t\right]  + \frac{1}{\sqrt{n}}. \nonumber
\end{align}
Note that we have the freedom to choose the input distribution $P_{X^n|S^n}$ in the above expression. We are going to pick the most obvious choice $P_{X^n|S^n} = (P^*)^{\times n}$, i.e., 
$
P_{X^n|S^n}(x^n |s^n)=\prod_{k=1}^n  P^* (x_k|s_k).
$
Then,  by using the memorylessness of the channel $W^n$, the bound on the error becomes
\begin{equation}
\eps_n \le \sum_{t\in\cP_n(\cS)}\Pr[T_{S^n}=t] \underbrace{ \Pr\left[ \frac{1}{n}\sum_{k=1}^n\log \frac{W(Y_k|X_k, S_k)}{Q^*(Y_k|S_k)}\le R+\gamma\, \bigg|\, T_{S^n}=t\right] }_{=:\, \Xi^*(R; n, t)} + \frac{1}{\sqrt{n}},\nonumber
\end{equation}
where $Q^*(y|s):=\sum_x W(y|x,s) P^*(x|s)$. 

Now, we further condition on individual sequences within the type class $\cS_t := \{s^n\in\cS^n \,|\, T_{s^n} = t\}$.   We have,
\begin{align*}
\Xi^*(R; n, t) &= \sum_{s^n \in  \cS_t } \Pr[S^n =s^n|T_{S^n}=t] \Pr\left[ \frac{1}{n}\sum_{k=1}^n \log\frac{W(Y_k|X_k, S_k)}{Q^*(Y_k|S_k)}\le R+\gamma\, \bigg|\, S^n=s^n\right] \\
&= \Pr\left[ \frac{1}{n}\sum_{k=1}^n \log\frac{W(Y_k|X_k, S_k)}{Q^*(Y_k|S_k)}\le R+\gamma\, \bigg|\, S^n= (s^n)^* \right] ,
\end{align*}
where, since the inner probability is independent of $s^n \in \cS_t$, we chose an arbitrary representant $(s^n)^*\in \cS_t$. Furthermore, all the summands in the probability are independent random variables having the following statistics:
\begin{align}
\Exp\Bigg[\frac{1}{n}\sum_{k=1}^n\log \frac{W(Y_k|X_k, S_k)}{Q^*(Y_k|S_k)} \,\Bigg|\, S^n =(s^n)^* \Bigg] &=I(P^*, W|t)\nonumber\\
\Var\Bigg[\frac{1}{n}\sum_{k=1}^n \log\frac{W(Y_k|X_k, S_k)}{Q^*(Y_k|S_k)}\,\Bigg|\, S^n = (s^n)^*\Bigg]&= \frac{U(P^*, W|t)}{n} \nonumber ,
\end{align}
where $U(P^*,W|t)=\sum_s t(s) U(P^*_s , W_s)$.  Now,  by \cite[Lem.~62]{PPV10},  we know that the unconditional information variance coincides with the conditional information variance when evaluated at any capacity-achieving input distribution. Hence, 
\begin{equation}
U(P^*_s, W_s) = V(P^*_s, W_s),\qquad\forall\, s\in\cS,\nonumber
\end{equation}
since $P^*_s \in\cP(\cX)$ achieves capacity for channel $W_s\in\cP(\cY|\cX)$. Hence,  $ U(P^*, W|t)= V(P^*, W|t)$.  

As such, by the Berry-Esseen theorem~\cite[Sec.\ XVI.7]{feller}, 
\begin{equation}
\Xi^*(R; n, t)\le\Phi\left( \sqrt{n}\cdot \frac{R+\gamma-I(P^*, W|t) }{\sqrt{V(P^*, W|t)}} \right) + \frac{6\, L^+}{V_{\min}^{3/2}\, \sqrt{n}}. \nonumber
\end{equation}
Now using the fact that $\Phi(a+\eta)\le \Phi(a)+\eta/\sqrt{2\pi}$ for all $0 < a < 1- \eta \leq 1$ and the definition of  $\gamma=\frac{\log n}{2n}$, we have 
\begin{equation}
\Xi^*(R; n, t)\le\Phi\left( \sqrt{n}\cdot \frac{R-I(P^*, W|t) }{\sqrt{V(P^*, W|t)}} \right) + \frac{\log n}{2\sqrt{2\pi nV_{\min}}} + \frac{B}{\sqrt{n}}. \nonumber
\end{equation}
Combining all bounds, we have  \eqref{eqn:lemma_direct} as desired.  
\end{proof}

\subsection{Converse Part of $(\eps,\beta)$-Optimum Second-Order Coding Rates}\label{sec:prf_second_conv}

We will need the following  auxiliary lemmas. Recall the assumption that the channels $W_s$ have positive information dispersion, i.e.\ $V_s \geq V_{\min} > 0$.

\subsubsection*{Uniform Bound}

We will need this lemma which quantifies continuity properties for distributions around the unique capacity-achieving input distributions, $P_s^* \in \cP(\cX)$ for $s\in\cS$. We also write $P^*(x|s) = P_s^*(x)$. 

\begin{lemma}
  \label{lm:Pi-mu}
  Define $\Pi_{\mu}^s := \big\{ P \in \cP(\cX) \,\big|\, \twonorm{P - P^*(\cdot|s)} \leq \mu \big\}$ for small $\mu>0$.
  Then there exists a $\mu > 0$, as well as finite constants $\alpha > 0$ and  $\beta > 0$ such that the following holds.
  For all $s \in \cS$ and for all $P \in \Pi_{\mu}^s$, we have
  \begin{enumerate}
    \item[a)] $V(P, W_s) > \frac{V_{\min}}{2} >0$,
    \item[b)] $I(P, W_s) \leq C_s - \alpha  \twonorm{P - P^*(\cdot|s)}^2$,
    \item[c)] $ \big| V(P,W_s) - V(P^*(\cdot|s), W_s)\, \big| \leq \beta \twonorm{P - P^*(\cdot|s)}$. 
  \end{enumerate}
\end{lemma}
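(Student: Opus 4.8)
The plan is to treat the three claims as local regularity statements about the smooth functions $P \mapsto I(P, W_s)$ and $P \mapsto V(P, W_s)$ near the unique maximizer $P_s^*$, uniformly over the finite set $s \in \cS$. Since $\cS$ is finite, it suffices to establish each claim for a single fixed $s$ with constants depending on $s$, and then take the worst case (the maximal $\alpha^{-1}, \beta$ and minimal admissible $\mu$) over $s \in \cS$; I will carry this out for fixed $s$ and suppress $s$ where convenient, writing $W = W_s$, $P^* = P^*(\cdot|s)$, $C = C_s$.

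\emph{Claim a)} is the easiest: $V(P, W)$ is a continuous function of $P$ on the compact simplex $\cP(\cX)$ (it is a polynomial in the entries of $P$ composed with the continuous map $P \mapsto PW$, and $PW(y) > 0$ is bounded away from zero in a neighborhood of $P^*$ provided $P^*W$ has full support — if some output symbol has zero probability it can be deleted from $\cY$ without changing any quantity). Since $V(P^*, W) = V_s \geq V_{\min}$, continuity gives a radius $\mu_a > 0$ such that $\twonorm{P - P^*} \leq \mu_a$ implies $V(P,W) > V_{\min}/2$. \emph{Claim c)} follows from the same smoothness: on the compact ball $\Pi_{\mu_a}^s$ the gradient of $P \mapsto V(P,W)$ is bounded in norm by some $\beta_s < \infty$, so the mean value inequality gives $|V(P,W) - V(P^*,W)| \leq \beta_s \twonorm{P - P^*}$ for all $P \in \Pi_{\mu_a}^s$.

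\emph{Claim b)} is the substantive one and the main obstacle. The inequality asserts a \emph{quadratic} upper bound on the mutual information near its maximum, i.e.\ that the maximizer is nondegenerate to second order. Here I would use the uniqueness assumption on $P^*$ crucially. The function $g(P) := I(P,W)$ is concave and twice continuously differentiable on the relative interior of the simplex; at the unique maximizer $P^*$ one has, by the Karush-Kuhn-Tucker conditions for mutual information (the classical characterization: $D(W(\cdot|x)\|P^*W) = C$ for $x$ in the support of $P^*$ and $\leq C$ off the support), that the first-order directional derivatives of $g$ at $P^*$ in feasible directions are $\leq 0$. I would then invoke a local strong-concavity / quadratic-growth property: because $P^*$ is the \emph{unique} maximizer of the concave function $g$ over the compact polytope $\cP(\cX)$, there exist $\alpha_s > 0$ and $\mu_b > 0$ with $g(P) \leq C - \alpha_s \twonorm{P-P^*}^2$ for all $P$ with $\twonorm{P-P^*} \leq \mu_b$. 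This is a standard fact, but one must be slightly careful if $P^*$ lies on the boundary of the simplex — then directions pointing outward are infeasible, and the quadratic bound is only needed, and only claimed, for $P \in \cP(\cX)$, so the relevant statement is quadratic growth of $g$ restricted to the polytope. The cleanest route is: (i) on the face of the simplex containing $P^*$ in its relative interior, the Hessian of $g$ is negative definite because $P^*$ is a unique maximizer there, giving the quadratic bound in the tangential directions by Taylor's theorem; (ii) in the transversal directions (decreasing the support), the first-order derivative of $g$ is strictly negative (by the strict KKT inequality $D(W(\cdot|x)\|P^*W) < C$ off the support, which holds because $P^*$ is the unique maximizer), so $g$ decreases at least linearly, hence certainly at least as fast as $-\alpha_s\twonorm{\cdot}^2$ for small displacements; (iii) a compactness argument on the unit sphere of directions combines these to a single $\alpha_s, \mu_b$. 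Finally set $\mu := \min_s \min\{\mu_a(s), \mu_b(s)\}$, $\alpha := \min_s \alpha_s$, $\beta := \max_s \beta_s$, which are all strictly positive and finite because $\cS$ is finite, completing the proof.
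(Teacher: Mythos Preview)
Your proposal is substantially more detailed than the paper's own proof, which consists of a single sentence citing the point-to-point channel results of Strassen, Polyanskiy--Poor--Verd\'u~\cite[Lem.~49]{PPV10}, and the authors' earlier~\cite[Lem.~7]{tomamicheltan12}, followed by taking worst-case constants over the finite set $\cS$. Your treatment of claims a) and c) via continuity and a Lipschitz bound is correct and standard, and your overall strategy for b) (KKT conditions at the maximizer, second-order Taylor expansion on the face, separate handling of transversal directions, compactness over the unit sphere of feasible directions, then worst case over $s$) is exactly the route those cited references take.

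There is one soft spot worth flagging. The assertion that ``the Hessian of $g$ is negative definite because $P^*$ is a unique maximizer there'' is false for general smooth concave functions (take $g(x)=-x^4$). For mutual information it \emph{is} true, but only because of its special structure: the Hessian of $P\mapsto I(P,W)$ on the face is $-W^{\!\top} \mathrm{diag}(1/P^*W)\, W$ (up to the affine constraint), so a null direction $v$ satisfies $Wv=0$, along which $I(\cdot,W)$ is in fact \emph{affine}; since the first-order term vanishes at the interior maximizer, $I$ would be constant along $v$, contradicting uniqueness. That missing sentence is precisely the content of the lemmas the paper cites. Your claim that uniqueness forces the \emph{strict} KKT inequality $D(W(\cdot|x)\|P^*W)<C$ off the support is similarly plausible but not immediate; in the references this case is absorbed into the same second-order analysis rather than argued separately. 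With these small additions your sketch would be a complete expansion of the paper's one-line citation.
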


\begin{proof}
For a fixed $s\in\cS$, the Lemma holds due to the point-to-point channel coding results by Strassen~\cite{strassen} and Polyanskiy-Poor-Verd\'u~\cite[Lem.~49]{PPV10}. Also see   \cite[Lemma~7]{tomamicheltan12}. Now choose $\mu$ small enough (and $\alpha$ minimal, $\beta$ maximal) so that the claim holds for all $s\in\cS$.
\end{proof}

Recall now the definition of $\Xi_f(R; t)$ given in~\eqref{eqn:def_Xi}, which we repeat here as follows:
\begin{align*}
\Xi_{f}(R; n, t) = \Pr \Bigg[ \sum_{k=1}^n j_{T_{X^n|S^n}}(X_k;Y_k|S_k) \leq R \,\bigg|\, (X^n,S^n) = f(t) \Bigg] .
\end{align*}

We also define its generalized inverse $\Xi_f^{-1}(\eps; n, t) := \sup \{ R \,|\, \Xi_f(R; n, t) \leq \eps \}$ in the sense of Appendix~\ref{app:inverse}.

\begin{lemma}[Uniform Bound]
 \label{lm:psit}
  There exist finite constants $D  > 0$ and $N_0 \in \mathbb{N}$ such that the following statement holds. 
  For all $n \geq N_0$, $t \in \cP_n(\cS)$ and $\eps \in [0,1-(B+1)/\sqrt{n}]$, we have that
  \begin{align}
    \Xi_f^{-1}(\eps; n, t) \leq C(t) + \sqrt{\frac{V(t)}{n}} \Phi^{-1}\bigg(\eps + \frac{B}{\sqrt{n}} \bigg) + \frac{D \log n}{n}  \label{eq:toshow}
  \end{align}
   holds for all $f: \cT \to \cX^n \times \cS^n$ with $(x^n,s^n) = f(t)$ satisfying $s^n \in \cS_t$.
\end{lemma}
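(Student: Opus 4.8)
The plan is to fix $n$, a type $t\in\cP_n(\cS)$, a level $\eps\in[0,1-(B+1)/\sqrt n]$, and an admissible $f$ with $f(t)=(x^n,s^n)$ and $T_{s^n}=t$, and then to prove that the right-hand side of~\eqref{eq:toshow}, call it $R^*$, already satisfies $\Xi_f(R^*;n,t)>\eps$; since $\Xi_f(\cdot;n,t)$ is a nondecreasing distribution function, this yields $\Xi_f^{-1}(\eps;n,t)\le R^*$ in the sense of Appendix~\ref{app:inverse}. Writing $P:=T_{x^n|s^n}$ for the conditional type induced by $(x^n,s^n)$, the quantity in~\eqref{eqn:def_Xi}, conditioned on $(X^n,S^n)=(x^n,s^n)$, is the normalized sum $\frac1n\sum_{k=1}^n j_P(x_k;Y_k|s_k)$ of \emph{independent} terms with average mean $\bar I:=I(P,W|t)=\sum_s t(s)\,I(P(\cdot|s),W_s)\le C(t)$, average variance $\bar V:=V(P,W|t)$, and third absolute central moments bounded by $L^+$ via Lemma~\ref{lem:third}. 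The difficulty is that the adversarially chosen $P$ need not be capacity-achieving for the channels $W_s$, so $\bar I$ may fall strictly below $C(t)$ and, worse, $\bar V$ has no a priori positive lower bound, which rules out a direct Berry--Esseen estimate.

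The key device is a good/bad split of the state symbols based on Lemma~\ref{lm:Pi-mu}: call a symbol $s$ with $t(s)>0$ \emph{good} if $P(\cdot|s)\in\Pi_\mu^s$ and \emph{bad} otherwise, let $\cS_{\mathrm{g}},\cS_{\mathrm{b}}$ be the respective symbol sets, and set $w:=\sum_{s\in\cS_{\mathrm{b}}}t(s)$ and $\Delta:=\sum_{s\in\cS_{\mathrm{g}}}t(s)\,\twonorm{P(\cdot|s)-P^*(\cdot|s)}^2$. Since $\cP(\cX)$ is compact and each $P_s^*$ is the \emph{unique} maximizer of $P\mapsto I(P,W_s)$, there is a constant $\delta_0>0$ with $I(P,W_s)\le C_s-\delta_0$ whenever $\twonorm{P-P_s^*}>\mu$. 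Combining Lemma~\ref{lm:Pi-mu}(a)--(c) with the uniform bounds~\eqref{eq:uni} and a Cauchy--Schwarz estimate against the weights $t(s)$ then yields
\begin{align*}
  C(t)-\bar I &\ge w\delta_0+\alpha\Delta, \\
  \bar V &\ge (1-w)V_{\min}/2, \\
  \bigl|\bar V-V(t)\bigr| &\le \beta\sqrt\Delta+2V^+w .
\end{align*}

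I would then distinguish two regimes. If $w>1/2$, the mean gap satisfies $C(t)-\bar I\ge\delta_0/2$, so $R^*-\bar I\ge\delta_0/4$ for $n$ large (because $\sqrt{V(t)/n}\,\bigl|\Phi^{-1}(\eps+B/\sqrt n)\bigr|=O(\sqrt{\log n/n})\to0$); Chebyshev's inequality, using only $\bar V\le V^+$, then gives $\Xi_f(R^*;n,t)\ge1-16V^+/(\delta_0^2n)>1-(B+1)/\sqrt n\ge\eps$ once $n\ge N_0$. If $w\le1/2$, then $\bar V\ge V_{\min}/4>0$, so the Berry--Esseen theorem~\cite[Sec.~XVI.7]{feller} applies to the sum and gives $\Xi_f(R^*;n,t)\ge\Phi\bigl(\sqrt n\,(R^*-\bar I)/\sqrt{\bar V}\bigr)-B/\sqrt n$ (the Berry--Esseen constant being absorbed into $B=6L^+/V_{\min}^{3/2}$). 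It then suffices to verify $\sqrt n\,(R^*-\bar I)/\sqrt{\bar V}\ge\Phi^{-1}(\eps+B/\sqrt n)$, equivalently $(C(t)-\bar I)+D\log n/n\ge\tfrac{1}{\sqrt n}\Phi^{-1}(\eps+B/\sqrt n)\bigl(\sqrt{\bar V}-\sqrt{V(t)}\bigr)$. Bounding $\bigl|\sqrt{\bar V}-\sqrt{V(t)}\bigr|=\bigl|\bar V-V(t)\bigr|/\bigl(\sqrt{\bar V}+\sqrt{V(t)}\bigr)\le(\beta\sqrt\Delta+2V^+w)/\sqrt{V_{\min}/4}$ and $\bigl|\Phi^{-1}(\eps+B/\sqrt n)\bigr|\le c_1\sqrt{\log n}$ for a universal $c_1$ (valid since $B/\sqrt n\le\eps+B/\sqrt n\le1-1/\sqrt n$), and inserting $C(t)-\bar I\ge w\delta_0+\alpha\Delta$, this reduces, with $c_2:=c_1/\sqrt{V_{\min}/4}$, to $w\delta_0+\alpha\Delta+D\log n/n\ge c_2\sqrt{\log n/n}\,(\beta\sqrt\Delta+2V^+w)$. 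The $w$-terms compare directly once $\delta_0\ge 2c_2V^+\sqrt{\log n/n}$ (i.e.\ for $n\ge N_0$); for the $\Delta$-terms one invokes the elementary trade-off that $\alpha\Delta$ dominates $c_2\beta\sqrt{\log n/n}\,\sqrt\Delta$ when $\Delta\ge(c_2\beta/\alpha)^2\log n/n$, while otherwise $c_2\beta\sqrt{\log n/n}\,\sqrt\Delta\le(c_2\beta)^2\alpha^{-1}\log n/n$ is absorbed into $D\log n/n$ as soon as $D\ge(c_2\beta)^2/\alpha$. Taking $D$ slightly larger (to make the inequality strict) and $N_0$ large enough — all depending only on the constants of Lemma~\ref{lm:Pi-mu}, on $\delta_0,V_{\min},V^+,L^+,B$, and on $|\cX|,|\cY|,|\cS|$ — completes the argument.

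The step I expect to be the real obstacle is precisely this control of the conditional-type variance: there is no universal positive lower bound on $V(P,W_s)$ over all $P\in\cP(\cX)$, so Berry--Esseen cannot be used uniformly, and on the event where it does apply the mismatch $\bigl|\sqrt{\bar V}-\sqrt{V(t)}\bigr|=O(\sqrt\Delta)$ is multiplied by $\bigl|\Phi^{-1}(\eps+B/\sqrt n)\bigr|=O(\sqrt{\log n})$ and must be reabsorbed, either by the quadratic capacity defect $\alpha\Delta$ supplied by Lemma~\ref{lm:Pi-mu}(b) or by the $D\log n/n$ slack already present in~\eqref{eq:toshow}; the quadratic-versus-$\sqrt\Delta$ comparison is the crux. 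The remaining ingredients — the Berry--Esseen and Chebyshev estimates, the compactness argument producing $\delta_0$, and the generalized-inverse bookkeeping — are routine.
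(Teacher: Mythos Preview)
Your proposal is correct and follows essentially the same strategy as the paper's own proof: a good/bad split of the state symbols via Lemma~\ref{lm:Pi-mu}, Berry--Esseen when enough mass sits on good symbols and Chebyshev otherwise, with the quadratic capacity defect $\alpha\Delta$ absorbing the $O(\sqrt{\Delta})$ variance mismatch against the $O(\sqrt{\log n/n})$ factor coming from $|\Phi^{-1}|$. The only differences are cosmetic --- you split on total bad weight $w\gtrless 1/2$ whereas the paper splits on the existence of a single bad symbol with weight at least some fixed $\xi<1/|\cS|$, and you control $\sqrt{\bar V}-\sqrt{V(t)}$ by factoring $(\bar V-V(t))/(\sqrt{\bar V}+\sqrt{V(t)})$ whereas the paper uses a two-term Taylor bound on $\sqrt{a+x}$; both you and the paper are equally casual about the Berry--Esseen constant, which with variance floor $V_{\min}/4$ (or the paper's $V_-$) is a fixed multiple of the pre-defined $B$ rather than $B$ itself.
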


\begin{proof}
  Apply Lemma~\ref{lm:Pi-mu} to get $\mu$, $\alpha$ and $\beta$. Moreover, we define 
  $$C_s^{>\mu} := \sup_{P \in \cP(\cX) \setminus \Pi_{\mu}^s} I(P, W_s) \quad \textrm{and note that} \quad C_s^{>\mu} < C_s
  \quad \textrm{such that} \quad  \delta := \min_s \{C_s - C_s^{>\mu}\} > 0.$$
  We also need the uniform bounds $V(P, W) \leq V^+$ and $L(P,W) \leq L^+$.
 
  Let us fix $n$, $t$ and $\eps$ for the moment. We consider two cases for $(x^n,s^n) = f(t)$ as follows. Fix $0 < \xi < \frac{1}{|\cS|}$.
  \begin{itemize}
    \item[a)] For all $s \in \cS$, we have $T_{x^n|s^n}(\cdot|s) \in \Pi_{\mu}^s$ or $t(s) < \xi$.
    \item[b)] There exists an $s^* \in \cS$ with  $T_{x^n|s^n}(\cdot|s^*) \notin \Pi_{\mu}^s$ and $t(s^*) \geq \xi$.
  \end{itemize}
  
  \paragraph*{Case a)} We find that $V ( T_{x^n|s^n}, W | t ) \geq \frac{V_{\min}}{2} ( 1 - |\cS|\xi ) =: V_{-} > 0$ is uniformly bounded away from zero and thus do not hesitate to apply Berry-Esseen. This yields
  \begin{align}
    \Xi_f^{-1}(\eps; n, t) \leq I(T_{x^n|s^n}, W | t) + \sqrt{\frac{V(T_{x^n|s^n}, W | t)}{n}} \Phi^{-1} \bigg( \eps + \frac{B}{\sqrt{n}} \bigg) . \label{eq:berry}
  \end{align}
  Here, the constant $B = 6L^+/V_-^{3/2}$ can be chosen uniformly because $L(T_{x^n|s^n}, W|t) \leq L^+$. From the restriction $\eps \in [0,1-(B+1)/\sqrt{n}]$, $B \geq 1$ and $n \geq 4$, we get
  \begin{align*}
    \Bigg| \Phi^{-1}\bigg( \eps + \frac{B}{\sqrt{n}} \bigg) \Bigg| \leq -\Phi^{-1}\bigg(\frac1{\sqrt{n}}\bigg) \leq \sqrt{\log \frac{n}{4}} \leq \sqrt{\log n} \,,
  \end{align*}
  where we used the Chernoff bound $\Phi(x) \leq \frac{1}{2} \exp (-x^2/2)$ for $x < 0$.
   We partition $\cS = \cS_1 \cup \cS_2$ into two disjoint sets with $\cS_1$ containing all $s$ with $T_{x^n|s^n}(\cdot|s) \in \Pi_{\mu}^s$.
   We define $\zeta_s = \| T_{x^n|s^n}(\cdot|s) - 
  P^*(\cdot|s) \|$ for all $s \in \cS_1$. Then, from the continuity properties of Lemma~\ref{lm:Pi-mu}, we get
  \begin{align*}
    I(T_{x^n|s^n}, W | t) - C(t) &\leq - \alpha \sum_{s \in \cS_1} t(s) \zeta_s^2 - \sum_{s \in \cS_2} t(s) (C_s - C_s^{>\mu}) \leq - \alpha \sum_{s \in \cS_1} t(s) \zeta_s^2 - \sum_{s \in \cS_2} t(s) \delta \\
    \big| V(T_{x^n|s^n}, W | t) - V(t) \big| &\leq \beta \sum_{s \in \cS_1} t(s) \zeta_s + V^+ \sum_{s \in \cS_2} t(s)
  \end{align*}
  We may use that $x/(2\sqrt{a}) - x^2/(2\sqrt{a})^3 \leq \sqrt{a+x} - \sqrt{a} \leq x/(2\sqrt{a})$ for $a > 0$, $x > -a$ to bound
  $$\sqrt{a + x} \cdot  c \leq \sqrt{a} \cdot c + x \bigg( \frac{1}{2\sqrt{a}} + \frac{b}{(2\sqrt{a})^3} \bigg) \cdot |c| \qquad \forall a > 0, b > 0, c \in \mathbb{R}\quad \textrm{and}\quad -a < x < b .$$
  This is now applied to~\eqref{eq:berry}, to find
  \begin{align*}
     &\Xi_f^{-1}(\eps; n, t) - C(t) - \sqrt{\frac{V(t)}{n}} \Phi^{-1}\bigg( \eps + \frac{B}{\sqrt{n}} \bigg) \\
     &\quad \leq - \alpha \sum_{s \in \cS_1} t(s) \zeta_s^2 - \sum_{s \in \cS_2} t(s) \delta + \sqrt{\frac{\log n}{n}}\bigg( \frac{1}{2\sqrt{V_{\min}}} + \frac{V^+}{( 2 \sqrt{V_{\min}})^3} \bigg) \Bigg( \beta \sum_{s \in \cS_1} t(s) \zeta_s + V^+ \sum_{s \in \cS_2} t(s) \Bigg) \\
     &\quad = \sum_{s \in \cS_1} t(s) \Big( -\alpha \zeta_s^2 +  \gamma \beta \zeta_s \, \sqrt{\frac{\log n}{n}}  \Big) + \sum_{s \in \cS_2} t(s) \Big( - \delta +  \gamma V^+ \sqrt{\frac{\log n}{n}} \Big) ,
  \end{align*}
  where the shorthand $\gamma := 1/(2\sqrt{V_{\min}}) + V^+/(2\sqrt{V_{\min}})^3$ was introduced.
  
  Now, we first note that for the terms with $s \in \cS_2$ vanish asymptotically, since 
  \begin{align*}
    \gamma V^+ \sqrt{\frac{\log n}{n}} \leq \delta   
  \end{align*}
  for $n \geq N_0$ when $N_{0}$ is chosen appropriately large such that $\log N_0 \leq \delta^2/(\gamma V^+)^2 N_0$. 
    
  To analyze the terms with $s \in \cS_1$, we define $\zeta_s' = \sqrt{n}\zeta_s$ and write them as
  \begin{align*}
    \frac{1}{n} \big( -\alpha \zeta_s'^2 + \gamma\beta \zeta_s' \sqrt{\log n}\big) \leq \frac{\log n}{n} \bigg( \frac{\gamma \beta}{2 \sqrt{\alpha}} \bigg)^2,
  \end{align*}
  where the last inequality simply follows by maximizing the polynomial in $\xi_s'$. Thus, it suffices to choose $D := |\cS| (\gamma\beta)^2/(4\alpha)$, which implies~\eqref{eq:toshow}.
  
  \paragraph*{Case b)} Let $s^*$ be any element satisfying $T_{x^n|s^n}(\cdot|s) \notin \Pi_{\mu}^s$ and $t(s^*) > \xi$. Employing Chebyshev's inequality, we find
  \begin{align*}
    \Xi_f^{-1}(\eps; n, t) 
    &\leq I(T_{x^n|s^n}, W | t) + \sqrt{\frac{V(T_{x^n|s^n}, W | t)}{n(1-\eps)}} \\
    &\leq \sum_{s \neq s^*} t(s) C_s + t(s^*) I\big(T_{x^n|s^n}(\cdot|s^*), W(*|\cdot,s^*) \big) + \sqrt{\frac{V^+}{\sqrt{n}(B+1)}} \\
    &\leq C(t) - \delta \xi + \sqrt{\frac{V^+}{\sqrt{n}(B+1)}}\ .
  \end{align*}
  Thus, since $\delta > 0$ and $\xi > 0$, we find that~\eqref{eq:toshow} holds for all $n \geq N_0$, for $N_0$ chosen sufficiently large.
  
  Summarizing the analysis, we find that the proposition holds for $B$ and $D$ as defined above and any $N_0$ satisfying
  \begin{align*}
    N_0 \geq 4, \quad N_0 \geq \bigg( \frac{V^+}{\xi^2\delta^2(B+1)} \bigg)^2 \quad \textrm{and} \quad 
    \log N_0 \leq \frac{\delta^2}{(\gamma V^+)^2} N_0 \,.
  \end{align*}
 This concludes the proof.
\end{proof}

Finally, we proceed to prove Proposition~\ref{lem:converse_second}.

\begin{proof}[Proof of Proposition~\ref{lem:converse_second}]
  Fix $n \geq N_0$, where $N_0$ is taken from Lemma~\ref{lm:psit}.
  We recall Eq.~\eqref{eq:tobeusedagain}, which for every $\eps \in [0,1]$ gives the converse bound
\begin{align*}
    \frac{1}{n} \log M^*(\eps; W^n, P_{S^n}) \leq \sup_{f:\, \cT \to \cX^n \times \cS^n} \textrm{cv}(f; \eps, n) + \frac{\log n}{2n} + |\cX||\cS|\,\frac{\log (n+1)}{n} , 
\end{align*}
where we chose $\delta = 1/\sqrt{n}$ and
\begin{align*}
  \textrm{cv}(f; \eps, n) = \sup \Big\{  R \,\Big|\, \Exp \big[ \Xi_{f}(R; n, T_{S^n}) \big] \leq \eps + \delta \Big\}  .
\end{align*}

  We take the inverse of Lemma~\ref{lm:psit} using Lemma~\ref{lm:inverse} in Appendix~\ref{app:inverse}, which yields
  \begin{align*}
    \Xi_f(R; n, t) \geq
    \begin{cases}
       \Phi\Bigg( \frac{R - C(t) - D \frac{\log n}{n}}{\sqrt{ {V(t)}/{n}}} \Bigg) - \frac{B}{\sqrt{n}} & \textrm{if } R < R_0(t,n) \\
       1 - \frac{1}{\sqrt{n}} & \textrm{if } R \geq R_0(t,n)
    \end{cases} , 
  \end{align*}
 for all $n\ge N_0$,  where
    $R_0(t,n) := C(t) + \sqrt{\frac{V(t)}{n}} \Phi^{-1}\big(1 - \frac{1}{\sqrt{n}}\big) + \frac{D}{n}$. In particular, we may write
    \begin{align*}
      \Exp \big[\Xi_{f}(R; n, T_{S^n})  \big] &\geq \Exp \Bigg[ \Phi\Bigg( \sqrt{n} \cdot \frac{R - C(T_{S^n}) - D \frac{\log n}{n}}{\sqrt{V(T_{S^n})}} \Bigg) \Bigg] - \frac{B}{\sqrt{n}} - \frac{1}{\sqrt{n}} .
  \end{align*}
  Combining these results yields the claimed result.
\end{proof}

\subsection{Proof of Theorem~\ref{thm:sec_gen}}

\subsubsection{Direct part of Theorem~\ref{thm:sec_gen}}
Here we leverage on Proposition~\ref{lem:direct_second} to prove the direct part of  Theorem~\ref{thm:sec_gen}.
\begin{proof}
We set $r^*:=\sup\big\{r\, |\, K(r|\, \Ceps,\beta;W,P_{\hat{S}})\le\eps\big\}-\eta$ for some small $\eta>0$ and  we prove that $r^*$ is an $(\eps,\beta)$-achievable second-order coding rate. Clearly, 
\begin{equation}
K(r^* |\, \Ceps,\beta;W,P_{\hat{S}})\le\eps .  \label{eqn:K_smaller_eps} 
\end{equation}
For each $n \in \mathbb{N}$, set the number of codewords $M$ to be $\lfloor\exp( nC_\eps+ n^\beta r^* )\rfloor$. Trivially, 
\begin{align*}
\liminf_{n\to\infty}\frac{1}{n^{\beta}}\left[ \log M-n C_\eps \right]\ge r^*.
\end{align*}
Furthermore, Proposition~\ref{lem:direct_second} guarantees the existence of a length-$n$ code $\cC_n$ with $M$ codewords   satisfying
\begin{align}
p_{\max}(\cC_n; W^n, P_{S^n}) \le\Exp  \left[ \Phi \left( \sqrt{n}\cdot\frac{ C_\eps + n^{\beta-1} r^*  - C(T_{S^n})}{\sqrt{V( T_{S^n})}}\right) \right] +   \frac{D_1 \log n}{\sqrt{n}} + \frac{B+1}{\sqrt{n}} . \nonumber
\end{align}
Invoking the definition of $K(r  |\, \Ceps,\beta;W,P_{\hat{S}})$ in~\eqref{eqn:def_expect_T}  and taking the $\limsup$ on both sides yields
\begin{align*}
\limsup_{n\to\infty}\, p_{\max}(\cC_n; W^n, P_{S^n})\le K(r^* |\, \Ceps,\beta;W,P_{\hat{S}}) ,
\end{align*}
which together with \eqref{eqn:K_smaller_eps} yields $\limsup_{n\to\infty}p_{\max}(\cC_n; W^n, P_{S^n})\le\eps$. 
\end{proof}

\subsubsection{Converse part of Theorem~\ref{thm:sec_gen}}
Here we leverage on Proposition~\ref{lem:converse_second} to prove the converse part of Theorem~\ref{thm:sec_gen}. 
\begin{proof}
Let $\xi >0$  be any small positive constant.  We note from the definition of $(\eps,\beta)$-achievable second-order coding rate (cf.~Section~\ref{sec:defs_sec}) that there exists an integer $N_\xi$ such that any such  rate $r$ satisfies
\begin{equation*}
\frac{1}{n^\beta} \big[\log M^*(\eps; W, P_{S^n})-n C_\eps \big] \ge r-\xi
\end{equation*}
for all $n> N_\xi $. 
As a result, from Proposition~\ref{lem:converse_second}, 
\begin{align*}
r  -\xi &\le\liminf_{n\to\infty}\frac{1}{n^\beta} \big[\log M^*(\eps; W, P_{\hat{S}})-n C_\eps \big] \\
 &\le\liminf_{n\to\infty} \sup\Bigg\{ r \,\bigg|\, \Exp \Bigg[ \Phi\Bigg( \frac{n C_\eps + n^{\beta} r  - nC(T_{S^n})}{\sqrt{n V(T_{S^n})}} \Bigg) \Bigg] \leq \eps_n'\Bigg\}  ,
\end{align*}
where $\{\eps_n' \}_{n=1}^{\infty}$ is some sequence satisfying $\limsup_{n\to\infty}\eps_n'\le \eps$.   Then, by the definition of the $\liminf$ and the supremum, it is true that there exists another integer  $N_\xi'$ such that for all $n> \max\{ N_\xi , N_\xi'\}$, 
\begin{equation*}
\Exp \Bigg[ \Phi\Bigg( \frac{n C_\eps + n^{\beta}(r -2\xi)  - nC(T_{S^n})}{\sqrt{n V(T_{S^n})}} \Bigg) \Bigg] \leq \eps_n'.
\end{equation*}
Taking the $\limsup$ on both sides yields
\begin{equation}
K(r -2\xi|\, \Ceps,\beta; W, P_{\hat{S}})=\limsup_{n\to\infty}\Exp \Bigg[ \Phi\Bigg( \frac{n C_\eps + n^{\beta}(r -2\xi)  - nC(T_{S^n})}{\sqrt{n V(T_{S^n})}} \Bigg) \Bigg] \leq \eps . \nonumber
\end{equation}
The proof is then concluded by appealing to the same argument as that for the first-order case. See the argument succeeding \eqref{eq:implied}.
\end{proof}

\section{Proofs of the Specializations of   Theorem~\ref{thm:sec_gen} to   Various State Models}\label{sec:prf_examples}
In this section, we prove Theorems~\ref{thm:mixed}--\ref{thm:weird} using the general result in Theorem~\ref{thm:sec_gen}.
 
 \subsection{Mixed Channels}
\begin{proof}[Proof of Theorem~\ref{thm:mixed}]
In this case, for every $n$, the type $T_{S^n} \in\cP_n(\cS)$ can only take on two values, namely $T_{S^n}=(1,0)$ or $T_{S^n}=(0,1)$.  Set $\beta=\frac{1}{2}$. As such,
\begin{align}
  \Exp\left[ \Phi\left( \frac{n C_\eps + \sqrt{n} r -n C(T_{S^n})}{\sqrt{n V(T_{S^n})}}\right) \right]   =\alpha \Phi\left( \frac{nC_\eps + \sqrt{n} r -n C_{\mathrm{a}}}{\sqrt{n V_{\mathrm{a}}}}\right) + (1-\alpha) \Phi\left( \frac{nC_\eps + \sqrt{n} r -n C_{\mathrm{b}} }{\sqrt{n V_{\mathrm{b}}}}\right) \label{eqn:split_mixed}. 
\end{align}
In Case I, $C_\eps =C_{\mathrm{a}}=C_{\mathrm{b}}$ so the result follows directly. In Case II,  $C_\eps=C_{\mathrm{a}}<C_{\mathrm{b}}$ so the argument of the second term in \eqref{eqn:split_mixed} is negative and of the order $\Theta(\sqrt{n})$. This tends to $-\infty$ and so the second term vanishes as $n$ becomes large. Hence, the result. Case III follows similarly. 
\end{proof}

 \subsection{Independent and Identically Distributed (i.i.d.) States} \label{sec:prf_iid}
Before we begin the proof of Theorem~\ref{thm:iid}, we recall the setup and state and prove some auxiliary lemmas. Here, $S^n=(S_1, \ldots, S_n)$ is a sequence of i.i.d.\ random variables with common distribution $\pi\in\cP(\cS)$.  As usual, we assume that $V_{\min}:=\min_s V_s>0$ throughout. We first state and prove the following lemmas. 
\begin{lemma}[Approximating $V(T_{S^n})$ with $V(\pi)$] \label{lem:approx_V}
The following holds uniformly in $x\in\mathbb{R}$:
\begin{equation}
 \Exp \left[\Phi\left(\sqrt{n}\cdot\frac{x-C(T_{S^n})}{\sqrt{V(T_{S^n})}} \right)\right]-\Exp \left[\Phi\left(\sqrt{n}\cdot\frac{x-C(T_{S^n})}{\sqrt{V(\pi)}} \right)\right] = O\left(\frac{\log n}{n}\right).
\end{equation}
\end{lemma}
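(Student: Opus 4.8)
The plan is to freeze the state sequence $S^n$ and view the two normal cumulative distribution functions in the statement as the values, at $v=V(T_{S^n})$ and at $v=V(\pi)$, of the single smooth function
\[
  g_n(v) := \Phi\Bigl( \sqrt{n}\cdot \tfrac{x-C(T_{S^n})}{\sqrt{v}} \Bigr), \qquad v>0 .
\]
The decisive structural fact is that $V(t)=\sum_{s}t(s)V_s$ is a convex combination of the numbers $V_s\in[V_{\min},V^+]$ with $V_{\min}>0$, so $V(T_{S^n})$, $V(\pi)$, and the whole segment joining them lie in $[V_{\min},V^+]$ — this is what makes everything uniform in $x$ and in the sample. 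Differentiating, $g_n'(v)=-\frac{\sqrt{n}(x-C(T_{S^n}))}{2v^{3/2}}\phi(\cdot)$, with a comparable expression for $g_n''$, and the two elementary inequalities $\sup_{t}|t\phi(t)|=(2\pi e)^{-1/2}$ and $\sup_{t}|t^3\phi(t)|=3\sqrt{3}\,e^{-3/2}/\sqrt{2\pi}$ convert them into $|g_n'(v)|\le c_0/(2v)$ and $|g_n''(v)|\le c_1/v^2$ for all $v\ge V_{\min}$ and \emph{all} $x$, with absolute constants $c_0,c_1$. The point of invoking precisely these inequalities is that they absorb the dangerous $\sqrt{n}$-factor carried by $x-C(T_{S^n})$.

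By the mean value theorem this gives, pointwise in the sample and uniformly in $x$,
\[
  \bigl| g_n(V(T_{S^n}))-g_n(V(\pi)) \bigr| \le \frac{c_0}{2V_{\min}}\,\bigl| V(T_{S^n})-V(\pi) \bigr| ,
\]
so after taking expectations the lemma reduces to a concentration estimate for the empirical mean $V(T_{S^n})=\frac1n\sum_{k=1}^n V_{S_k}$ of bounded i.i.d.\ variables about its mean $V(\pi)$, which already yields the bound $O(n^{-1/2})$. To sharpen it to $O(\log n/n)$ I would expand one order further,
\[
  g_n(V(T_{S^n}))-g_n(V(\pi)) = g_n'(V(\pi))\,(V(T_{S^n})-V(\pi)) + \tfrac12 g_n''(\zeta)\,(V(T_{S^n})-V(\pi))^2 ,
\]
and bound the quadratic remainder by $\frac{c_1}{2V_{\min}^2}\Exp[(V(T_{S^n})-V(\pi))^2]=\frac{c_1}{2V_{\min}^2}\cdot\frac{\Var_{S\leftarrow\pi}[V_S]}{n}=O(1/n)$, uniformly in $x$.

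The real work is the linear term $\Exp[\,g_n'(V(\pi))\,(V(T_{S^n})-V(\pi))\,]$. Since $\Exp[V(T_{S^n})]=V(\pi)$, it would vanish outright if $g_n'(V(\pi))$ were deterministic; it is not, but it depends on the sample only through the empirical mean $C(T_{S^n})=\frac1n\sum_k C_{S_k}$, namely $g_n'(V(\pi))=-\frac{1}{2V(\pi)}\psi\bigl(\sqrt{n}(x-C(T_{S^n}))/\sqrt{V(\pi)}\bigr)$ with $\psi(u):=u\phi(u)$. Thus the linear term equals, up to the constant $-1/(2V(\pi))$, the covariance $\Cov\bigl(\psi(\sqrt{n}(x-C(T_{S^n}))/\sqrt{V(\pi)}),\,V(T_{S^n})\bigr)$ between a bounded smooth function of one empirical mean and another empirical mean of the \emph{same} i.i.d.\ vectors $(C_{S_k},V_{S_k})$. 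I would control it by a leave-one-out expansion: replace the $\psi$-argument by its value on the sample with the $k$-th coordinate deleted (an $O(n^{-1/2})$ correction), expand $\psi$ to first order in that correction so that independence of the $k$-th summand from the rest exposes the covariance structure, and bound the residual expectations of $\psi'$ via a Berry--Esseen / local-limit estimate uniform in $x$; finally I would peel off the atypical event $\{\|T_{S^n}-\pi\|_\infty>\sqrt{c\log n/n}\}$, whose probability is $O(n^{-c'})$ by Hoeffding and where the integrand is at most $1$, to produce the $\log n$. Pinning down this covariance — i.e.\ quantifying how the fluctuation of the numerator $C(T_{S^n})$ couples to that of the denominator $V(T_{S^n})$ — is the main obstacle; the mean value step and the concentration bounds are routine.
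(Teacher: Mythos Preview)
Your decomposition matches the paper's in spirit: the paper also Taylor-expands $v\mapsto\Phi(\alpha/\sqrt v)$ about $v=V(\pi)$ (in fact to all orders, not just two), restricts to the typical set $\{\|T_{S^n}-\pi\|_\infty<\sqrt{\log n/n}\}$, and bounds the $k\ge 2$ terms via uniform estimates on the Hermite functions $\phi^{(2k-1)}$, summing a geometric series to obtain $O(\log n/n)$. The decisive difference is the linear term. The paper writes the $k$-th coefficient as $\phi^{(2k-1)}\bigl((x-C(\pi))/\sqrt{V(\pi)}\bigr)$, a deterministic number, and therefore concludes that the $k=1$ contribution vanishes after taking expectations since $\Exp[V(T_{S^n})]=V(\pi)$. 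You correctly observe that the derivative is actually evaluated at $\alpha/\sqrt{V(\pi)}=\sqrt n\,(x-C(T_{S^n}))/\sqrt{V(\pi)}$, so the coefficient is random and the linear term is precisely the covariance you describe.

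The gap is that your leave-one-out scheme does not recover $O(\log n/n)$. After decoupling $S_1$ and expanding $\psi$ to first order you arrive at
\[
  -\frac{1}{\sqrt{nV(\pi)}}\,\Exp\bigl[\psi'(A_n^{(-1)})\bigr]\,\Cov_\pi[C_S,V_S]\;+\;O(1/n),
\]
and you hope to make $\Exp[\psi'(A_n^{(-1)})]$ small via a Berry--Esseen or local-limit estimate. It is not small: with $\psi'(a)=(1-a^2)\phi(a)$ and $A_n^{(-1)}\Rightarrow \mu-\sigma Z$ in law ($Z$ standard normal, $\sigma^2=V^*(\pi)/V(\pi)$), at the critical choice $x=C(\pi)$ one has $\mu=0$ and a direct Gaussian integral gives $\Exp[\psi'(\sigma Z)]=(2\pi)^{-1/2}(1+\sigma^2)^{-3/2}>0$. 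Hence whenever $\Cov_\pi[C_S,V_S]\neq 0$ the linear contribution is \emph{exactly} of order $n^{-1/2}$, and neither Berry--Esseen nor the typical-set truncation can beat it. Your argument therefore proves the lemma with $O(n^{-1/2})$ in place of $O(\log n/n)$; that weaker rate is in fact all that is needed downstream (it is absorbed by the $O(n^{-1/2})$ of Lemma~\ref{lem:approx_sum_Vs}), but the stated $O(\log n/n)$ does not follow from your sketch.
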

\begin{lemma}[Approximating $C(T_{S^n})$ with $C(\pi)$] \label{lem:approx_sum_Vs}
The following holds uniformly in $x\in\mathbb{R}$:
\begin{equation}
  \Exp \left[\Phi\left(\sqrt{n}\cdot\frac{x-C(T_{S^n})}{\sqrt{V(\pi)}} \right)\right] - \Phi\left(\sqrt{n}\cdot\frac{x-C(\pi)}{\sqrt{V(\pi)+V^*(\pi)}} \right)  = O\left( \frac{1}{\sqrt{n}} \right),
\end{equation}
where $V^*(\pi)$ is defined in \eqref{eqn:V_star}. 
\end{lemma}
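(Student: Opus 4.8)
The plan is to exploit that $C(T_{S^n}) = \frac1n\sum_{k=1}^n C_{S_k}$ is the empirical mean of the i.i.d.\ random variables $C_{S_1}, C_{S_2}, \ldots$, each bounded (since $\cS$ is finite), with mean $C(\pi)$ and variance $V^*(\pi)$. If $V^*(\pi) = 0$ then $C_S$ is almost surely constant, so $C(T_{S^n}) = C(\pi)$ a.s.\ and the two terms in the lemma are literally equal; so in the following I assume $V^*(\pi) > 0$.

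The conceptual crux is a ``Gaussian smoothing'' identity. Let $G$ be a standard normal random variable independent of $S^n$, and write $U_n := \sqrt{n}\big(C(T_{S^n}) - C(\pi)\big)$, a normalized centered sum of i.i.d.\ terms. Since $\Phi(u) = \Pr[G \le u]$,
\begin{align*}
  \Exp\!\left[\Phi\!\left(\sqrt{n}\cdot\frac{x - C(T_{S^n})}{\sqrt{V(\pi)}}\right)\right]
  = \Pr\!\Big[ \sqrt{V(\pi)}\,G + U_n \le \sqrt{n}\,(x - C(\pi))\Big].
\end{align*}
If $U_n$ were replaced by $U_n^{\circ} \sim \mathcal{N}(0, V^*(\pi))$, then $\sqrt{V(\pi)}\,G + U_n^{\circ}$ would be exactly $\mathcal{N}\big(0, V(\pi)+V^*(\pi)\big)$-distributed, and the right-hand side would equal $\Phi\big(\sqrt{n}(x-C(\pi))/\sqrt{V(\pi)+V^*(\pi)}\big)$ --- precisely the second term in the lemma. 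Thus the entire quantity to be bounded comes from the non-Gaussianity of $U_n$, and in particular this is where the variance $V^*(\pi)$ gets \emph{added} to $V(\pi)$: the inner $\Phi$ already supplies an independent Gaussian of variance $V(\pi)$ that convolves with the near-Gaussian fluctuation of $C(T_{S^n})$.

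To make this quantitative I would invoke the Berry--Esseen theorem for i.i.d.\ sums: with $\rho := \Exp_{S\leftarrow\pi}|C_S - C(\pi)|^3 < \infty$, the cdf $F_{U_n}$ of $U_n$ and the cdf $F_{U_n^{\circ}}$ of $\mathcal{N}(0,V^*(\pi))$ satisfy $\sup_{t\in\mathbb{R}} |F_{U_n}(t) - F_{U_n^{\circ}}(t)| \le c_0\,\rho\,V^*(\pi)^{-3/2}\,n^{-1/2}$ for an absolute constant $c_0$. Writing $c := \sqrt{n}(x-C(\pi))$ and conditioning on $G$,
\begin{align*}
  \left|\, \Exp\!\Big[\Phi\!\Big(\sqrt n \tfrac{x - C(T_{S^n})}{\sqrt{V(\pi)}}\Big)\Big] - \Phi\!\left(\sqrt{n}\cdot\frac{x - C(\pi)}{\sqrt{V(\pi)+V^*(\pi)}}\right)\right|
  &= \left|\, \Exp_G\!\big[ F_{U_n}(c - \sqrt{V(\pi)}\,G) - F_{U_n^{\circ}}(c - \sqrt{V(\pi)}\,G) \big]\right| \\
  &\le \sup_{t\in\mathbb{R}} |F_{U_n}(t) - F_{U_n^{\circ}}(t)| = O(1/\sqrt{n}).
\end{align*}
Crucially this bound is independent of $x$ (equivalently of $c$), which is exactly the uniformity claimed.

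I do not expect a genuine obstacle here; the work is essentially bookkeeping once the smoothing identity is in place: disposing of the degenerate case $V^*(\pi)=0$, noting $\rho < \infty$ because $\cS$ is finite, and observing that the Kolmogorov-distance form of Berry--Esseen gives uniformity in $x$ automatically. An alternative route that avoids the auxiliary variable $G$ is to write the difference as a Riemann--Stieltjes integral $\int \Phi\big((c-z)/\sqrt{V(\pi)}\big)\,\mathrm{d}\big(F_{U_n} - F_{U_n^{\circ}}\big)(z)$, integrate by parts (the boundary terms vanish since $F_{U_n}-F_{U_n^{\circ}}\to 0$ at $\pm\infty$), and bound the result by $\sup_t|F_{U_n}(t)-F_{U_n^{\circ}}(t)|\cdot\int \tfrac{1}{\sqrt{V(\pi)}}\phi\big((c-z)/\sqrt{V(\pi)}\big)\,\mathrm{d}z = O(1/\sqrt n)$; I would present the $G$-version since it makes the addition of variances transparent.
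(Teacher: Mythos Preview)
Your proposal is correct and takes essentially the same approach as the paper: both rewrite the expectation so that Berry--Esseen replaces the normalized sum $U_n$ (the paper's $J_n$) by a Gaussian, and then a Gaussian convolution yields the combined variance $V(\pi)+V^*(\pi)$. The only difference is presentational---you introduce an auxiliary $G$ and condition on it, while the paper writes $\Exp[g_n(J_n)]$ and integrates by parts against $g_n'$; you in fact describe the paper's route as your ``alternative'' at the end.
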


\begin{proof}[Proof of Lemma~\ref{lem:approx_V}]
Fix $n \in \mathbb{N}$. For the sake of brevity, let 
\begin{equation}
\Phi_1 := \Phi\left(\sqrt{n}\cdot\frac{x-C(T_{S^n})}{\sqrt{V(T_{S^n})}} \right),\qquad \Phi_2 := \Phi\left(\sqrt{n}\cdot\frac{x-C(T_{S^n})}{\sqrt{V(\pi)}}\right). \nonumber
\end{equation}
Also consider a {\em typical set} defined as follows:
\begin{align}
\cA_n(\pi):=\left\{s^n\in\cS^n \, \bigg| \,  \| T_{s^n}-\pi\|_{\infty}<\sqrt{ \frac{\log n}{n}}\right\} \label{eqn:typicalset}
\end{align}
From Lemma~\ref{lem:conc1} in Appendix~\ref{app:concentations}, we know that if $S^n$ is generated independently from $\pi$, 
\begin{align*}
\Pr[ S^n\notin\cA_n (\pi)]\le\frac{2|\cS|}{n^2}.
\end{align*}
We are interested in the function $h(v):=\Phi\big(\frac{\alpha}{v}\big)$. Let $\phi(x)=\frac{1}{\sqrt{2\pi}}e^{-x^2/2}$ be the Gaussian PDF and $\phi^{(k)}$ its $k$-th derivative.  By direct differentiation, the first few derivatives of $h$ are
\begin{align*}
h'(v) &=\frac{\mathrm{d}}{\mathrm{d}v}\Phi\Big(\frac{\alpha}{v}\Big)=-\frac{\alpha}{2v^{3/2}}\phi\Big(\frac{\alpha}{v}\Big)=\frac{1}{2v}\phi'\Big(\frac{\alpha}{v}\Big),\\
h''(v) & = -\frac{1}{2v^2}\phi'\Big(\frac{\alpha}{v}\Big) - \frac{\alpha}{4v^{5/2}}\phi''\Big(\frac{\alpha}{v}\Big)=\frac{1}{4v^2}\phi^{(3)}\Big(\frac{\alpha}{v}\Big) , \\
h^{(k)}(v)  &= \frac{1}{(2v)^k }\phi^{( 2k-1)}\Big(\frac{\alpha}{v}\Big),
\end{align*}
where we used the facts that $\phi'(x) = -x\phi(x)$ and $\phi^{(k+1)} (x)=-x\phi^{(k)}(x) - k \phi^{(k-1)}(x)$ for $k\ge 1$.  Hence, by  setting $\alpha:= \sqrt{n}\cdot  [x-C(T_{S^n})]$ and Taylor expanding $v\mapsto h(v)$  around $v=V(\pi)$,  we obtain the expansion
\begin{align*}
\Phi_1 = \Phi_2 + \sum_{k=1}^{\infty}\frac{1}{k!}\left(\frac{V(T_{S^n})-V(\pi)}{2V(\pi)}\right)^k \phi^{(2k-1)}\left(\frac{x-C(\pi)}{\sqrt{V(\pi)}} \right).
\end{align*}
Taking expectations on both sides, we obtain
\begin{align*}
\Exp\big[\Phi_1 - \Phi_2\big]= \Exp\left[\, \sum_{k=2}^{\infty}\frac{1}{k!}\left(\frac{V(T_{S^n})-V(\pi)}{2V(\pi)}\right)^k \phi^{(2k-1)}\left(\frac{x-C(\pi)}{\sqrt{V(\pi)}} \right)\right]
\end{align*}
where the $k=1$ term vanished because $V(\cdot)$ is linear and the expectation of   $T_{S^n}(s)$ is exactly $\pi(s)$ for every $s\in\cS$.  Now let the random variable in the expectation on the RHS be denoted as $\Delta_n$, i.e., 
\begin{align*}
\Delta_n:=\Phi_1 - \Phi_2-\left(\frac{V(T_{S^n})-V(\pi)}{2V(\pi)}\right)\phi'\left(\frac{x-C(\pi)}{\sqrt{V(\pi)}} \right).
\end{align*}
Observe that with probability one, 
\begin{align*}
\big|   \Delta_n \big|  \le  2+ \frac{1}{8}\left(\frac{V^+}{V_{\min}  }+1 \right) 
\end{align*}
because $\sup_{x\in\mathbb{R}}|\phi'(x)|\le \frac{1}{4}$ and the information dispersions are bounded above and below by $V^+$ and $V_{\min}$ respectively.   By the law of total expectation, 
\begin{align*}
\Exp\big[\Delta_n\big]= \Exp\big[\Delta_n\,\big|\, S^n\in\cA_n\big]\Pr\big[ S^n\in\cA_n \big] + \Exp\big[\Delta_n\,\big|\, S^n\notin\cA_n\big]\Pr\big[ S^n\notin\cA_n \big]
\end{align*}
Hence, by the triangle inequality,
\begin{align}
\left|\, \Exp\big[\Delta_n\big] \, \right|   \le \left|\, \Exp\big[\Delta_n\,\big|\, S^n\in\cA_n\big] \, \right|  + \left[2+ \frac{1}{8}\left(\frac{V^+}{V_{\min}  }+1 \right) \right]\cdot \frac{2|\cS|}{n^2}. \label{eqn:total_prob} 
\end{align}
 Now we bound the expectation on the RHS above as follows:
\begin{align}
\big| \Exp[\Delta_n\,\big|\,  S^n\in\cA_n (\pi) ] \big|\le\sum_{k=2}^{\infty}\frac{1}{k!} \left| \Exp \left[ \left(\frac{V(T_{S^n})-V(\pi)}{2V(\pi)}\right)^k   \, \bigg|  \, S^n\in \cA_n(\pi) \right] \right|  \cdot \left| \phi^{(2k-1)}\left(\frac{x-C(\pi)}{\sqrt{V(\pi)}} \right)\right|. \label{eqn:infinite_sum}
\end{align}
Now, we have
\begin{align}
\left| \Exp \left[ \left(\frac{V(T_{S^n})-V(\pi)}{2V(\pi)}\right)^k   \, \bigg|  \, S^n\in \cA_n(\pi) \right] \right| \le \Exp \left[ \left|\frac{V(T_{S^n})-V(\pi)}{2V(\pi)}\right|^k   \, \bigg|  \, S^n\in \cA_n(\pi) \right]   \le  \left(\frac{|\cS| V^+}{2V_{\min}} \sqrt{ \frac{\log n}{n}} \right)^k \nonumber
\end{align}
because  $V(\pi)\ge V_{\min}$ and 
\begin{align*}
\big|V(T_{s^n})-V(\pi)  \big| = \left| \sum_s \big[T_{s^n}(s)-\pi(s)\big]  \, V_s  \right|\le |\cS| V^+ \sqrt{ \frac{\log n}{n}}
\end{align*}
for all sequences $s^n\in\cA_n (\pi)$.  Furthermore, by Lemma~\ref{lem:gaussian_der} in Appendix~\ref{app:gaussian_der}, we know that
\begin{align*}
\sup_{x\in \mathbb{R}} \big|\phi^{(k)}(x)\big|\le  \frac{e^{1/8}}{(2\pi)^{1/4}} \,  k^{1/4} \, \left(\frac{k}{e}\right)^{k/2}
\end{align*}
for all $k \in \mathbb{N}$.  Note that this upper bound is monotonically increasing in $k$ so we can upper bound $\sup_{x\in \mathbb{R}}  |\phi^{(2k-1)}(x) |$ with the upper bound  for  $\sup_{x\in \mathbb{R}}  |\phi^{(2k)}(x) |$ which is what we do in the following. Let $c=e^{1/8}  (2\pi)^{1/4}$ be a universal constant.  As a result,  the infinite sum in \eqref{eqn:infinite_sum} can be bounded as 
\begin{align}
\big| \Exp[ \Delta_n  \, \big|\,  S^n\!\in\!\cA_n (\pi) ] \big| \!\le\! c \sum_{k=2}^{\infty} \frac{1}{{k}^{1/4}\left( \frac{k}{e}\right)^k }    \left(\frac{|\cS| V^+}{2V_{\min}} \sqrt{ \frac{\log n}{n}} \right)^k \left(\frac{2k}{e}\right)^{k}  \!\le\! c  \sum_{k=2}^{\infty}\left(\frac{|\cS| V^+}{  V_{\min}} \sqrt{ \frac{\log n}{n}} \right)^k \!   = \! O\left( \frac{\log n}{n}\right) \label{eqn:infinite_series}
\end{align}
where in the first step, we used Stirling's bounds for the factorial \cite[Statement 6.1.38]{Abr65}, and in the last step, we used the formula for the infinite sum of geometric series.  Note that the implied constant in the $O(\cdot)$-notation depends only on $|\cS|, V_{\min}$ and $V^+$.  Uniting \eqref{eqn:total_prob} and \eqref{eqn:infinite_series} completes the proof.
\end{proof}

\begin{proof}[Proof of Lemma~\ref{lem:approx_sum_Vs}]
Let us introduce the following random variables
\begin{align}
J_n:=\frac{1}{\sqrt{n}}\sum_{k=1}^n E_k,\qquad\mbox{and}\qquad E_k:=\frac{1}{\sqrt{V^*(\pi)}}\big[ C_{S_k}-C(\pi)\big]. \label{eqn:defJ}
\end{align}
Note that $J_n$ converges in distribution to a standard Gaussian since $E_k$ are i.i.d.\ zero-mean, unit-variance  random variables. Furthermore, the third absolute moment of each $E_k$ is bounded above by $2  \log^3 |\cY| /\sqrt{V^*(\pi)^3}$.  Since $C(T_{S^n})=\frac{1}{n}\sum_{k=1}^n C_{S_k}$, we can write
\begin{align*}
 \Exp \left[\Phi\left(\sqrt{n}\cdot\frac{x-C(T_{S^n})}{\sqrt{V(\pi)}} \right)\right]= \Exp \left[\Phi\left(\sqrt{n}\cdot\frac{x-C(\pi)}{\sqrt{V(\pi)}} +\sqrt{\frac{V^*(\pi)}{V(\pi)}} \, J_n  \right)\right]
\end{align*}
Now by using the weak form of the Berry-Esseen theorem \cite[Thm.~2.2.14]{tao12}, we claim that the following holds:
\begin{align}
\Exp \left[\Phi\left(\sqrt{n}\cdot\frac{x-C(T_{S^n})}{\sqrt{V(\pi)}} \right)\right]=\Exp \left[\Phi\left(\sqrt{n}\cdot\frac{x-C(\pi)}{\sqrt{V(\pi)}} +\sqrt{\frac{V^*(\pi)}{V(\pi)}}  \,  Z  \right)\right] + O\left( \frac{1}{\sqrt{n}}\right) , \label{eqn:weakBE}
\end{align}
where $Z$ is a standard Gaussian random variable.    Note that the preceding estimate is uniform in $x \in \mathbb{R}$.  Let us justify~\eqref{eqn:weakBE} more carefully using the argument suggested in~\cite[Sec.\ 2.2.5]{tao12}. Define the family of functions $g_n : \mathbb{R}\cup\{\pm\infty\} \to [0,1]$ as    
\begin{align*}
g_n(\theta):=  \Phi\left(\sqrt{n}\cdot\frac{x-C(\pi)}{\sqrt{V(\pi)}} +\sqrt{\frac{V^*(\pi)}{V(\pi)}} \,  \theta  \right) .
\end{align*}
Observe that $\int_{-\infty}^{\infty} g_n'(\theta)\, \mathrm{d}\theta=1$.  Now, let $f_{{J}_n}$ and $\phi$ be the PDFs corresponding to  $J_n$ and $Z$ respectively. Then,
\begin{align*}
\big| \Exp[ g_n(J_n)] - \Exp[  g_n (Z) ] \big| &= \left| \int_{-\infty}^{\infty} g_n(\theta) f_{{J}_n}(\theta)   \, \mathrm{d}\theta   -\int_{-\infty}^{\infty} g_n(\theta) \phi(\theta)   \, \mathrm{d}\theta      \right|\\
&=\left|g_n(\infty) -  \int_{-\infty}^{\infty} \Pr[ J_n\le\theta]g_n'(\theta)\,\mathrm{d}\theta - g_n(\infty) +  \int_{-\infty}^{\infty} \Pr[ Z\le\theta]g_n'(\theta)\,\mathrm{d}\theta \right|  \\
&\le\sup_{\theta\in \mathbb{R}} \big |\Pr[ J_n<\theta]- \Pr[ Z<\theta] \big|   \int_{-\infty}^{\infty}  g_n'(\theta)\,\mathrm{d}\theta    \le \frac{12\, \log^3 |\cY|  }{\sqrt{ n V^*(\pi)^3 }} ,
\end{align*}
where the first inequality follows by integration by parts and the final inequality by the usual Berry-Esseen theorem~\cite[Sec.\ XVI.7]{feller}. This proves~\eqref{eqn:weakBE}.
Now, we introduce $\phi(x; \mu,\sigma^2)$ as the Gaussian PDF with mean $\mu$ and variance $\sigma^2$ (so $\phi(x) = \phi(x;0,1)$) and write
     \begin{align*}
       \Exp \Bigg[ \Phi \bigg( \sqrt{n} \cdot \frac{x - C(\pi)}{\sqrt{V(\pi)}}+  \sqrt{\frac{V^*(\pi)}{V(\pi)}}  \, Z \bigg) \Bigg] 
       &= \int_{-\infty}^{\infty} \phi(z)\, \Phi \bigg( \sqrt{n} \cdot \frac{x - C(\pi)}{\sqrt{V(\pi)}}+  \sqrt{\frac{V^*(\pi)}{V(\pi)}}  \, z \bigg)\; \mathrm{d}  z   \\
       &= \int_{-\infty}^{\infty} \phi \bigg(z; 0, \frac{V^*(\pi)}{n} \bigg) \int_{-\infty}^{x+z} \phi \bigg( w ; C(\pi), \frac{V(\pi)}{n} \bigg) \; \mathrm{d}  w\;   \mathrm{d}  z \\
       &= \int_{-\infty}^{x} \int_{-\infty}^{\infty}  \phi \bigg(z; 0, \frac{V^*(\pi)}{n} \bigg)  \phi \bigg(w - z; C(\pi), \frac{V(\pi)}{n} \bigg)  \; \mathrm{d}  z\; \mathrm{d}  w  .
     \end{align*}
     Now the inner integral is obviously a convolution integral, under which mean and variance of independent Gaussians are additive. Thus,
     \begin{align*}
       \int_{-\infty}^{x}  \int_{-\infty}^{\infty} \phi \bigg(z; 0, \frac{V^*(\pi)}{n} \bigg)  \phi \bigg( w - z; C(\pi), \frac{V(\pi)}{n} \bigg)  \; \mathrm{d}  z\;  \mathrm{d}  w 
       &= \int_{-\infty}^{x} \phi \bigg( w; C(\pi), \frac{V(\pi)+V^*(\pi)}{n} \bigg) \; \mathrm{d}   w \\
       &= \Phi \bigg( \sqrt{n} \cdot \frac{x - C(\pi)}{\sqrt{V(\pi) + V^*(\pi)}} \bigg).
     \end{align*}
This, together with the estimate in~\eqref{eqn:weakBE}, proves Lemma~\ref{lem:approx_sum_Vs}.
\end{proof}

\begin{proof}[Proof of Theorem~\ref{thm:iid}]
Combining Lemmas~\ref{lem:approx_V} and \ref{lem:approx_sum_Vs} yields 
\begin{align*}
 \Exp \left[\Phi\left(\sqrt{n}\cdot\frac{x-C(T_{S^n})}{\sqrt{V(T_{S^n})}} \right)\right]  = \Phi\left( \sqrt{n} \cdot \frac{x-C(\pi)}{\sqrt{V(\pi) + V^*(\pi)}} \right)+ O\left(\frac{1}{\sqrt{n}} \right)
\end{align*}
uniformly in $x\in\mathbb{R}$. Since $C(\pi)$ is the $\eps$-capacity in this i.i.d.\ state scenario, 
\begin{align*}
K\bigg(r \, \Big|\,C_\eps , \frac{1}{2};W,P_{\hat{S}}\bigg)  &= \limsup_{n\to\infty}  \Exp \left[\Phi\left( \frac{n  C_\eps +\sqrt{n}r-nC(T_{S^n})}{\sqrt{nV(T)}} \right)\right] \\
&= \limsup_{n\to\infty}  \Phi\left(  \sqrt{n}\cdot \frac{  C_\eps +r/\sqrt{n}- C(\pi)}{\sqrt{V(\pi)+ V^*(\pi)}} \right) =  \Phi\left(   \frac{ r}{\sqrt{V(\pi)+ V^*(\pi)}} \right) .
\end{align*}
Consequently, an application of Theorem~\ref{thm:sec_gen} yields the desired result.
\end{proof}

\subsection{Block i.i.d.\ States}
Here, we prove Theorem~\ref{thm:subblock} concerning the second-order asymptotics for the channel with block i.i.d.\ states. Recall that $\nu\in(0,1)$ and  for each blocklength $n$, there are $d+1$ subblocks, the first $d=\lfloor n^\nu\rfloor$ of which have length $m$ and the remaining subblock has length $r = n - md$.  Each block is assigned an independent state from $\pi \in\cP(\cS)$. The proof here only requires a slight modification of the proof for the i.i.d.\ case. See Section~\ref{sec:prf_iid} for the overall outline. Essentially, we need to provide analogues of Lemma~\ref{lem:approx_V} and~\ref{lem:approx_sum_Vs}.  

\begin{proof}[Proof of Theorem~\ref{thm:subblock}]
First, since the residual $O(\frac{\log n}{n})$  term in Lemma~\ref{lem:approx_V} depends on how rapidly the type $T_{S^n}$ concentrates to $\pi$, in this block i.i.d.\  setting,  it is easy to see that since we have $d=\lfloor n^\nu \rfloor$ independent random variables drawn from $\pi$, 
\begin{equation}
  \Exp \left[\Phi\left(\sqrt{n}\cdot\frac{x-C(T_{S^n})}{\sqrt{V(T_{S^n})}} \right)\right]-\Exp \left[\Phi\left(\sqrt{n}\cdot\frac{x-C(T_{S^n})}{\sqrt{V(\pi)}} \right)\right]   = O\left(\frac{\log n}{n^\nu}\right), \nonumber
\end{equation}
uniformly in  $x\in\mathbb{R}$. 

Second, to develop an analogue of Lemma~\ref{lem:approx_sum_Vs}, we define, instead of $J_n$ and $E_k$ in \eqref{eqn:defJ}, the following random variables
\begin{align*}
\tilde{J}_{d+1}:=\frac{1}{m\sqrt{d+1}}\sum_{j=1}^{d+1} \tilde{E}_j,\qquad  \tilde{E}_j&:=\frac{m}{\sqrt{V^*(\pi)}}\big[ C_{S_{ (j-1) m+1}}-C(\pi)\big], \quad 1\le j\le d, \\
 \mbox{and}\qquad \tilde{E}_{d+1}&:=\frac{r}{\sqrt{V^*(\pi)}}\big[ C_{S_{dm+1}}-C(\pi)\big] .  
\end{align*}
The  random variables  $\tilde{E}_j$ for $1\le j \le d+1$  are independent and are basically the sums of the $E_k$'s  within each subblock. Note that $\Var[ \tilde{J}_{d+1}] = \frac{dm^2+r^2}{m^2(d+1)}$ which converges to one as $n$ becomes large. Thus, $\tilde{J}_{d+1}$ converges in distribution to a standard Gaussian. Define 
$$
\tau_n :=  \frac{m^2 (d+1)}{n} \sim n^{1-\nu}  ,
$$  
where $a_n \sim b_n$ means that the limit of ratio of $a_n$ and $b_n$ tends to one. 
Following the proof of Lemma~\ref{lem:approx_sum_Vs}, we obtain  
\begin{equation}
 \Exp \left[\Phi\left(\sqrt{n}\cdot\frac{x-C(T_{S^n})}{\sqrt{V(\pi)}} \right)\right] - \Phi\left(\sqrt{n}\cdot\frac{x-C(\pi)}{\sqrt{V(\pi)+  \tau_n  V^*(\pi)}} \right)  = O\left( \frac{1}{n^{\nu/2}} \right), \nonumber
\end{equation}
which holds  uniformly in  $x\in\mathbb{R}$.  As a result, 
\begin{align*}
 \Exp \left[\Phi\left(\sqrt{n}\cdot\frac{x-C(T_{S^n})}{\sqrt{V(T_{S^n})}} \right)\right]  = \Phi\left( \sqrt{n} \cdot \frac{x-C(\pi)}{\sqrt{V(\pi) + \tau_n   V^*(\pi)}} \right)+  O\left( \frac{1}{n^{\nu/2}} \right),
\end{align*}
also holds uniformly in  $x\in\mathbb{R}$.
Now  since   $\tau_n    V^*(\pi)$ dominates $V(\pi)$, we see that 
\begin{align*}
K\bigg(r \, \Big|\, C_\eps , 1-\frac{\nu}{2};W,P_{\hat{S}}\bigg) & =\limsup_{n\to\infty}  \Exp \left[\Phi\left( \frac{nC_\eps  +n^{1-\nu/2} r-nC(T_{S^n})}{\sqrt{nV(T_{S^n})}} \right)\right]   \\
& = \limsup_{n\to\infty}  \Phi\left(  \sqrt{n}\cdot \frac{ C_\eps  +n^{-\nu/2}r- C(\pi)}{\sqrt{V(\pi)+ \tau_n  V^*(\pi)}} \right) =  \Phi\left(   \frac{ r}{\sqrt{ V^*(\pi)}} \right) .
\end{align*}
As such, an application of  Theorem \ref{thm:sec_gen} reveals that the $(\eps, 1-\frac{\nu}{2})$-optimum second-order coding rate is   $\sqrt{ V^*(\pi)}\Phi^{-1}(\eps)$ as desired.
\end{proof}

\subsection{Markov States}
In this Section, we prove Theorem~\ref{thm:markov} concerning the second-order asymptotics for the channel with Markov states. Recall that the    Markov chain is time-homogeneous, irreducible and ergodic, which means there exists a unique stationary distribution $\pi \in \cP(\cS)$.  As in the block i.i.d.\ case, the proof is very similar to that for the i.i.d.\ case  in Section~\ref{sec:prf_iid}. As such, we only highlight the salient differences.

First, we modify  the definition of the typical set $\cA_n(\pi)$ in  \eqref{eqn:typicalset} to the following:
\begin{align}
\cA_n(\pi;\zeta):=\left\{s^n\in\cS^n \, \bigg| \, \|T_{s^n}-\pi\|_{\infty} <  \zeta\, \sqrt{\frac{\log n}{n}}\right\} \label{eqn:typ_set2}
\end{align}
For the Markov case, Lemma~\ref{lem:markov_conc} in Appendix~\ref{app:concentations} shows that 
\begin{align*}
\Pr[ S^n\notin\cA_n (\pi;\zeta)] = O\left(\frac{1}{n}\right).
\end{align*}
if $\zeta>0$ is chosen sufficiently large (depending on the mixing  properties of the chain).  This allows us to prove the same order estimate as the one in Lemma~\ref{lem:approx_V}.

Second, we note that since the chain is time-homogeneous, irreducible and ergodic, it is $\alpha$-mixing and in fact, the $\alpha$-coefficient tends to zero exponentially fast~\cite[Thm.~3.1]{Bradley}. As such, the following surrogate of the Berry-Esseen theorem due to Tikhomirov~\cite{Tik} can be used.  
\begin{theorem} \label{thm:modified_BE}
Suppose that a stationary zero-mean process $\{X_j\}_{j\ge 1}$ is $\alpha$-mixing, where 
\begin{align*}
\alpha(n)\le Ke^{-\kappa n}, \qquad \Exp\big[ |X_1|^{4+\gamma}\big]<\infty,\qquad \sigma_n^2 \to \infty,
\end{align*} 
where 
\begin{align*}
\sigma_n^2 := \Exp\bigg[ \Big(\sum_{j = 1}^n X_j \Big)^2\bigg].
\end{align*}
Then,  there is a constant $B >0$, depending on $K,\kappa$ and $\gamma$ such that for all $n\ge 1$,  
\begin{align*}
\sup_{x\in\mathbb{R}} \bigg| \Pr \bigg[ \frac{1}{\sigma_n}  \sum_{j=1}^n X_j \le x  \bigg] -\Phi(x) \bigg|\le\frac{B\log n}{\sqrt{n}}.
\end{align*}
\end{theorem}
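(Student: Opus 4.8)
This statement is, up to the specialization to exponentially mixing sequences with a finite $(4+\gamma)$-th moment, Tikhomirov's Berry--Esseen bound for strongly mixing random variables~\cite{Tik}; under these hypotheses his general rate collapses to $O(\log n/\sqrt{n})$. The plan is to prove it by the classical Bernstein big-block/small-block decomposition combined with Esseen's smoothing inequality. As a first step, the standard covariance inequality for $\alpha$-mixing sequences (Davydov's inequality) together with $\alpha(j)\le Ke^{-\kappa j}$ and $\Exp[|X_1|^{4+\gamma}]<\infty$ gives $|\Cov[X_1,X_{1+j}]|\le Ce^{-\kappa' j}$, whence $\sigma_n^2 = n s^2 + O(1)$ with $s^2 := \Var[X_1] + 2\sum_{j\ge 1}\Cov[X_1,X_{1+j}] \ge 0$. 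The assumption $\sigma_n^2\to\infty$ then forces $s^2>0$ (otherwise $\sigma_n^2 = O(1)$), and since $|\sigma_n - s\sqrt{n}| = O(1/\sqrt{n})$ it suffices to bound the deviation of the law of $S_n/(s\sqrt{n})$, with $S_n:=\sum_{j=1}^n X_j$, from $\Phi$ up to an additive $O(1/\sqrt{n})$ error.

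Next I would fix block lengths $p=p_n$ and $q=q_n$, both of (poly-)logarithmic order, chosen so that $q_n$ is large enough for $k_n\alpha(q_n) = O(n^{-1/2})$ while $q_n = o(p_n)$; partition $\{1,\ldots,n\}$ into alternating big blocks of length $p_n$ and small blocks of length $q_n$, say $k_n\asymp n/p_n$ big blocks, and write $S_n = \sum_{i=1}^{k_n} U_i + W_n$ where $U_i$ is the sum over the $i$-th big block and $W_n$ collects the small blocks and the final remainder. The covariance bound gives $\Var[W_n] = O(nq_n/p_n) = o(n)$, and a slightly finer accounting (e.g.\ subdividing $W_n$ into $\lceil \log n\rceil$ pieces) keeps its contribution to the characteristic-function comparison at $O(\log n/\sqrt{n})$. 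For the big-block part, iterating the standard mixing estimate $|\Exp[e^{it(U_1+\cdots+U_m)}] - \Exp[e^{itU_m}]\Exp[e^{it(U_1+\cdots+U_{m-1})}]| \le C\alpha(q_n)$ yields
\begin{align*}
\Big| \Exp\big[e^{it S_n^{\mathrm{big}}/(s\sqrt{n})}\big] - \prod_{i=1}^{k_n} \Exp\big[e^{it U_i/(s\sqrt{n})}\big] \Big| \le C k_n \alpha(q_n) = O(n^{-1/2})
\end{align*}
uniformly for $|t| \le c\sqrt{n}/\log n$, where $S_n^{\mathrm{big}} := \sum_i U_i$; then a Taylor expansion of each factor (using $\Exp[U_i]=0$, $\Var[U_i] = p_n s^2(1+o(1))$, and the uniform bound $\Exp[|U_i|^3]\le Cp_n^{3/2}$, again a consequence of the mixing and moment hypotheses) shows $\prod_i \Exp[e^{it U_i/(s\sqrt{n})}] = e^{-t^2/2}\big(1 + O((1+|t|^3)\log n/\sqrt{n})\big)$ on the same range of $t$.

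Finally I would invoke Esseen's smoothing inequality with threshold $T_n\asymp\sqrt{n}/\log n$: the uniform CDF error is controlled by $\int_{-T_n}^{T_n} |t|^{-1}|\psi_n(t) - e^{-t^2/2}|\,\mathrm{d}t + C/T_n$, where $\psi_n$ is the characteristic function of $S_n/(s\sqrt{n})$; the big-block characteristic-function error, the small-block correction, and the tail $C/T_n$ are each $O(\log n/\sqrt{n})$, giving the claim. The main obstacle is the joint choice of $p_n$ and $q_n$ so that (i) the mixing error $k_n\alpha(q_n)$, (ii) the small-block variance contribution, and (iii) the Taylor remainder in the product of big-block characteristic functions are simultaneously $O(\log n/\sqrt{n})$; these constraints are compatible precisely because the mixing is exponential and the $(4+\gamma)$-th moment is finite, and the loss of the $\log n$ factor relative to the i.i.d.\ Berry--Esseen bound is exactly the price of taking $p_n$ of (poly-)logarithmic size. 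Since this is carried out in full generality (for arbitrary polynomial mixing rates) in~\cite{Tik}, I would in fact simply cite that reference and present the above as an explanatory remark.
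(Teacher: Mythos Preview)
The paper does not prove this theorem at all; it simply quotes the result from Tikhomirov~\cite{Tik} and uses it as a black box in the proof of Theorem~\ref{thm:markov}. Your proposal correctly identifies this, gives a sound outline of the Bernstein big-block/small-block plus Esseen smoothing argument behind the result, and then concludes (rightly) that one should just cite~\cite{Tik}---which is precisely what the paper does.
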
 

\begin{proof}[Proof of Theorem~\ref{thm:markov}]
The estimate provided in Lemma~\ref{lem:approx_V} remains unchanged if we replace the definition  $\cA_n(\pi)$ in  \eqref{eqn:typicalset}  with $\cA_n(\pi;\zeta)$ in \eqref{eqn:typ_set2} for a suitably large constant $\zeta>0$. By replacing the use of the usual Berry-Esseen theorem in  the proof of Lemma~\ref{lem:approx_sum_Vs} with  Theorem~\ref{thm:modified_BE}, we see that the following estimate holds uniformly in $x\in\mathbb{R}$
\begin{align*}
\Exp \left[\Phi\left(\sqrt{n}\cdot\frac{x-C(T_{S^n})}{\sqrt{V(\pi)}} \right)\right] - \Phi\left(\sqrt{n}\cdot\frac{x-C(\pi)}{\sqrt{V(\pi)+V^{**}_n(M)}} \right)  = O\left( \frac{\log n}{\sqrt{n}} \right),
\end{align*}
where  the effective variance for blocklength $n$ is
\begin{align*}
V^{**}_n (M) : = \frac{1}{n}\Var\Bigg[ \sum_{k=1}^n  C_{S_k}  \Bigg] = \frac{1}{n} \sum_{ k,l=1}^n  \Cov\big[ C_{S_k}, C_{S_l} \big] = \Var[C_S] + \frac{2}{n}\sum_{j=1}^n (n-j) \Cov\big[ C_{S_j}, C_{S_{1+j}} \big].
\end{align*}
Since $|\!\Cov [ C_{S_1}, C_{S_{1+j}} ]|$ decays exponentially fast in the lag $j$ for these chains, $ \frac{1}{n}\sum_{j=1}^{\infty}  j\Cov [ C_{S_1}, C_{S_{1+j}} ] \to 0$. Hence, the result for this Markov state scenario follows from an application of Theorem~\ref{thm:sec_gen}.
\end{proof}

\subsection{Memoryless but Non-Stationary States}

\begin{proof}[Proof of Theorem~\ref{thm:weird}]
In this case, for every $n\in\mathbb{N}$, the expectation that defines $K(r |\, \Ceps, \frac{1}{2}; W, P_{\hat{S}})$ in \eqref{eqn:def_expect_T} simplifies. In particular,
\begin{align}
\Exp\left[ \Phi\left( \sqrt{n}\cdot \frac{ C_\eps  + r/\sqrt{n}-C(T_{S^n})}{\sqrt{nV(T_{S^n})}}\right)\right]=\Phi\left( \sqrt{n}\cdot \frac{ C_\eps   + r/\sqrt{n}-C(t_n)}{\sqrt{nV(t_n)}}\right) \label{eqn:phi_weird}
\end{align}
for some specific type $t_n\in\cP_n(\cS)$ for each $n$.  We need to evaluate the $\limsup$ of the above. Recall that one definition of the  $\limsup$ is that it is the supremum of all   subsequential limits.   Now consider the subsequence in $\mathcal{J}$ indexed by $n_k:=2^{2k}-1$ for $k\in\mathbb{N}$. Along this subsequence, the type of $S^{n_k}$ is $t_{n_k}=(\frac{2}{3},\frac{1}{3})$ for all $k$ and so $C_\eps  =C(t_{n_k})=\frac{2C_0}{3}+\frac{C_1}{3}$. The  $\limsup$ is achieved along this $\{n_k\}_{k\ge 1}$. All other subsequences   $\{n_{l}\}_{l\ge 1}$ that have a different subsequential limit  of $\Phi$ on the RHS of~\eqref{eqn:phi_weird}   result in a strictly smaller   $\lim_{l\to\infty}C(t_{n_{l}})$ (because $C_0<C_1$)  so  the  $\limsup$    vanishes.
\end{proof}

\appendix

\subsection{Evaluating the Covariance Term for Markov States}
\label{app:markov}

Here, we assume that $\hat{S} = \{ S^n = (S_1, \ldots, S_n)\}_{n=1}^{\infty}$ evolves according to a time-homogenous, irreducible, ergodic Markov chain, specializing Example~\ref{eg:markov}. Let $M$ be transition matrix of the Markov chain governing the memory, i.e.\ the probability for moving from state $s \in \cS$ to state $s' \in \cS$ in $k$ steps is given by $\Pr[S_{\ell+k} = s' | S_{\ell} = s] = [M^k]_{s,s'}$. We assume for the following that $M$ is diagonalizable.\footnote{Indeed, the following arguments can be generalized by using the Jordan normal form instead of the eigenvalue decomposition in case $M$ is not diagonalizable.} 
Thus, $M = U \Sigma U^{\dagger}$ with diagonal matrix $\Sigma = \diag(\lambda_1, \lambda_2, \ldots \lambda_{|\cS|})$ for eigenvalues satisfying
 $1 = |\lambda_1| > |\lambda_2| \geq |\lambda_3| \geq \ldots \geq |\lambda_{|\cS|}|$. 

Recall that the dispersion is composed of $V(\pi)$ and the term
$$V^{**}(M) = \Var_{S \leftarrow \pi} [C_S] + 2 \sum_{k=1}^{\infty} \Cov_{S_1 \leftarrow \pi} [C_{S_1}, C_{S_{1+k}}] .$$
Here, $S_1 \leftarrow \pi$ where $\pi$ is the stationary distribution of $M$ and the joint distributions are induced the Markov chain $S_1 \rightarrow S_2 \rightarrow \ldots \rightarrow S_n$ with (invariant) transition kernel $M$.

\begin{lemma}
  \label{lm:markov-simplified}
   Let $\hat{S}$ be governed by a diagonalizable transition matrix $M = U \diag ( 1, \lambda_2, \ldots, \lambda_{|\cS|}) U^{\dagger}$ of a time-homogenous, irreducible, ergodic Markov chain with stationary distribution $\pi$. Then,
   \begin{align}
    &V^{**}(M) = \Cov_{(S, S') \leftarrow \pi \times \Pi} \big[ C_S, C_{S'} \big]
    \quad \textrm{where} \qquad \Pi(s'|s) = \left[ U \diag\left( 1, \frac{1+\lambda_2}{1-\lambda_2},  \ldots, \frac{1+ \lambda_{|\cS|}}{1-\lambda_{|\cS|}} \right) U^{\dagger} \right]_{s,s'}. \label{eq:vdoublestar}
   \end{align}
\end{lemma}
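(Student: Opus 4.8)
The plan is to compute the infinite sum of covariances in closed form by diagonalizing $M$ and recognizing the resulting geometric series. First I would express each covariance in a convenient linear-algebraic form. Let $c \in \mathbb{R}^{|\cS|}$ be the vector with entries $c_s = C_s$, let $D_\pi = \diag(\pi)$, and write $\bar c := \sum_s \pi(s) C_s$. Then for any $k \ge 0$,
\begin{align*}
\Cov_{S_1 \leftarrow \pi}\big[ C_{S_1}, C_{S_{1+k}} \big] = \sum_{s, s'} \pi(s) [M^k]_{s,s'} C_s C_{s'} - \bar c^{\,2} = c^{\dagger} D_\pi M^k c - \bar c^{\,2},
\end{align*}
since $S_1 \leftarrow \pi$ is stationary. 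Using the eigendecomposition $M = U \Sigma U^{\dagger}$ with $\Sigma = \diag(1, \lambda_2, \ldots, \lambda_{|\cS|})$, one has $M^k = U \Sigma^k U^{\dagger}$, and the $\lambda_1 = 1$ component contributes exactly $\bar c^{\,2}$ to $c^{\dagger} D_\pi M^k c$ for every $k$ (because the corresponding spectral projector maps onto the stationary subspace; this should be verified carefully using $\pi^{\dagger} M = \pi^{\dagger}$ and $M \mathbf{1} = \mathbf{1}$). Thus $\Cov_{S_1 \leftarrow \pi}[C_{S_1}, C_{S_{1+k}}]$ equals the contribution of the eigenvalues $\lambda_2, \ldots, \lambda_{|\cS|}$ only, so it decays like $\max_{j \ge 2} |\lambda_j|^k$ and the sum converges absolutely.

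Next I would sum the series. Writing $M^k = U \Sigma^k U^{\dagger}$ and using $V^{**}(M) = \Var_{S \leftarrow \pi}[C_S] + 2 \sum_{k=1}^\infty \Cov_{S_1 \leftarrow \pi}[C_{S_1}, C_{S_{1+k}}]$, the $k$-sum over the non-unit eigenvalues is a matrix geometric series: $\sum_{k=1}^\infty \Sigma_{\mathrm{red}}^k = \Sigma_{\mathrm{red}}(I - \Sigma_{\mathrm{red}})^{-1}$, where $\Sigma_{\mathrm{red}} = \diag(0, \lambda_2, \ldots, \lambda_{|\cS|})$ handles only the decaying modes. Combining the variance term (which is $\sum_{k=0}$, i.e.\ the $k=0$ contribution of the reduced spectrum) with twice the $k \ge 1$ sum gives, on the reduced spectral modes, the scalar multiplier
\begin{align*}
1 + 2 \sum_{k=1}^\infty \lambda_j^{\,k} = 1 + \frac{2\lambda_j}{1 - \lambda_j} = \frac{1 + \lambda_j}{1 - \lambda_j}, \qquad j = 2, \ldots, |\cS|,
\end{align*}
while the $\lambda_1 = 1$ mode continues to contribute the constant $\bar c^{\,2}$, consistent with replacing it by the multiplier $1$ in the diagonal matrix (its effect is exactly cancelled by the $-\bar c^{\,2}$ centering already absorbed into each covariance). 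Hence $V^{**}(M) = c^{\dagger} D_\pi \Pi c - \bar c^{\,2}$ with $\Pi = U \diag\big(1, \frac{1+\lambda_2}{1-\lambda_2}, \ldots, \frac{1+\lambda_{|\cS|}}{1-\lambda_{|\cS|}}\big) U^{\dagger}$, which is precisely $\Cov_{(S,S') \leftarrow \pi \times \Pi}[C_S, C_{S'}]$ once one checks that $\Pi$ is a valid stochastic matrix (row sums $1$, since the constant vector $\mathbf{1}$ is the eigenvector for $\lambda_1 = 1$ and is fixed by $\Pi$) with $\pi$ as a left eigenvector, so that $\pi(s)\Pi(s'|s)$ is a genuine joint law with marginals $\pi$.

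The main obstacle I anticipate is bookkeeping around the unit eigenvalue: one must be careful that the spectral projector $P_1$ onto the $\lambda_1 = 1$ eigenspace satisfies $D_\pi P_1 = \pi \pi^{\dagger}$ (or the analogous identity), so that $c^{\dagger} D_\pi P_1 c = \bar c^{\,2}$ exactly, and that the remaining projectors $P_2, \ldots, P_{|\cS|}$ are annihilated appropriately under the centering. This requires using both the left eigenvector ($\pi^{\dagger} M = \pi^{\dagger}$) and right eigenvector ($M \mathbf{1} = \mathbf{1}$) of the Perron eigenvalue, and is where diagonalizability (or, per the footnote, the Jordan form with the observation that the Perron block is $1 \times 1$) is used. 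A secondary point to handle cleanly is justifying the interchange of the infinite sum with the finite spectral expansion, which follows from $\sum_k \sum_{j \ge 2} |\lambda_j|^k < \infty$; this also re-proves the exponential covariance decay claimed in Example~\ref{eg:markov}. Everything else is routine linear algebra, and the closed form $V^{**}(M) = \frac{1-\tau}{4\tau}(C_0 - C_1)^2$ for the $2 \times 2$ Gilbert--Elliott kernel follows by substituting $\lambda_2 = 1 - 2\tau$, $\pi = (\tfrac12, \tfrac12)$, and $U$ the Hadamard-type eigenvector matrix into~\eqref{eq:vdoublestar}.
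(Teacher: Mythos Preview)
Your proposal is correct and follows essentially the same approach as the paper: express each covariance via $M^k$, diagonalize, observe that the Perron mode exactly cancels the centering term $\bar c^{\,2}$, sum the remaining geometric series, and combine with the $k=0$ (variance) term to obtain the multiplier $(1+\lambda_j)/(1-\lambda_j)$ on the decaying modes. Your matrix notation $c^\dagger D_\pi M^k c$ is more compact than the paper's elementwise sums, and your explicit identification of $D_\pi P_1 = \pi\pi^\dagger$ (hence $c^\dagger D_\pi P_1 c = \bar c^{\,2}$) handles the unit-eigenvalue bookkeeping more carefully than the paper, which effectively asserts the analogous cancellation via $\pi(s') = [U\diag(1,0,\ldots,0)U^\dagger]_{s,s'}$ without isolating the projector.
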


\begin{remark}
  This is particularly simple to evaluate for the Gilbert-Elliott-type state evolution with $\cS = \{0, 1\}$ and transition probability $0 < \tau < 1$. This can be described by the transition matrix
  \begin{align*}
   M = \left( \begin{array}{cc} 1-\tau & \tau \\ \tau & 1-\tau \end{array} \right) = \frac{1}{2} \left( \begin{array}{cc} 1 & 1 \\ 1 & -1 \end{array} \right)\left( \begin{array}{cc} 1 & 0 \\ 0 & 1-2\tau \end{array} \right)\left( \begin{array}{cc} 1 & 1 \\ 1 & -1 \end{array} \right) , 
  \end{align*}
  with eigenvalues $1$ and $\lambda_2 = 1 - 2\tau$, where $|\lambda_2| < 1$. It is easy to verify that
  \begin{align*}
    \Pi = \frac{1}{2} \left( \begin{array}{cc} 1 & 1 \\ 1 & -1 \end{array} \right)\left( \begin{array}{cc} 1 & 0 \\ 0 & \frac{1-\tau}{\tau} \end{array} \right)\left( \begin{array}{cc} 1 & 1 \\ 1 & -1 \end{array} \right) = \frac{1}{2\tau} \left( \begin{array}{cc} 1 & 2\tau -1 \\ 2\tau - 1 & 1 \end{array} \right)
  \end{align*}
  and the covariance term can then be evaluated to $V^{**}(M) = \frac{1-\tau}{4\tau} (C_0 - C_1)^2$, generalizing the dispersion found for the related problem in~\cite[Thm.~4]{PPV11}.
\end{remark}

\begin{proof}
  First, let us verify that $\Pi(s'|s)$ is indeed a conditional probability distribution, i.e.\ $\sum_{s' \in \cS} \Pi(s'|s) = 1$ for all $s \in \cS$. This is equivalent to the condition that $\Pi = U \diag(1, (1+\lambda_2)/(1-\lambda_2), \ldots) U^{\dagger}$ is stochastic, or equivalently that $\Pi (1, 1, \ldots, 1)^{\dagger} = (1, 1, \ldots, 1)^{\dagger}$. 
  However, since $M$ is stochastic, we conclude that $(1, 1, \ldots, 1)^{\dagger}$ must be an eigenvector of $M$ with eigenvalue $1$; thus, in particular $U^{\dagger} (1, 1, \ldots, 1)^{\dagger} = (|\cS|, 0, 0, \ldots, 0)^{\dagger}$ since the unity eigenvalue is unique for $M$. This implies that $(1, 1, \ldots, 1)^{\dagger}$ is indeed an eigenvector of $\Pi$ with eigenvalue $1$.
  
  Next, let us verify~\eqref{eq:vdoublestar}. First, note that
  \begin{align*}
    \Cov_{S_1 \leftarrow \pi} [C_{S_1}, C_{S_{1+k}}] &= \sum_{s,s' \in \cS} \pi(s) \big[M^k\big]_{s,s'} C_{s} C_{s'} - C(\pi)^2 \\
    &= \sum_{s,s' \in \cS} \pi(s) \left( \left[ U \diag(1, \lambda_2^k, \ldots, \lambda_{|\cS|}^k) U^{\dagger} \right]_{s,s'} - \pi(s') \right) C_s C_{s'} \\
    &= \sum_{s,s' \in \cS} \pi(s) \left[ U \diag(0, \lambda_2^k, \ldots, \lambda_{|\cS|}^k) U^{\dagger} \right]_{s,s'} C_s C_{s'} ,
  \end{align*}
  where, in the final step, we used that $\pi(s') = \lim_{k \to \infty} [ M^k ]_{s,s'} = \big[ U \diag(1,0,\ldots,0) U^{\dagger} \big]_{s,s'}$ for all $s' \in \cS$. We can now easily sum these terms over $k$ using the geometric series to find
  \begin{align}
    \sum_{k=1}^\infty \Cov_{S_1 \leftarrow \pi} \big[ C_{S_1}, C_{S_{1+k}} \big] = \sum_{s,s' \in \cS} \pi(s) \left[ U \diag\left(0, \frac{\lambda_2}{1-\lambda_2}, \ldots, \frac{\lambda_{|\cS|}}{1-\lambda_{|\cS|}}\right) U^{\dagger} \right]_{s,s'} C_s C_{s'} \label{eq:covterms}
  \end{align}
  We also note that we may write
  \begin{align}
    \Var_{S \leftarrow \pi} [ C_S ] = \sum_{s,s' \in \cS} \pi(s) \big[ U \diag(1, 1, \ldots, 1) U^{\dagger} ]_{s,s'}\, C_s C_{s'}
    \label{eq:varterms}
  \end{align}
  Adding the expressions in~\eqref{eq:covterms} and~\eqref{eq:varterms} yields the desired expression for $V^{**}(M)$.
\end{proof}

%
%

\subsection{Generalized Inverse}
\label{app:inverse}

For any monotonically non-decreasing function $f: \mathbb{R} \to \mathbb{R} \cup \{-\infty, \infty\}$, we define its generalized inverse $f^{-1}$ on $\mathbb{R}$ as
\begin{align}
  f^{-1}: \mathbb{R} \to \mathbb{R} \cup \{-\infty, \infty\}, \quad y \mapsto \sup \{ x \in \mathbb{R} \,|\, f(x) \leq y \} . \label{eq:inv}
\end{align}
Note that this definition does not require the function to be continuous; however, if the function is upper semi-continuous (everywhere), the inverse satisfies the following useful properties.

\begin{lemma}
\label{lm:inverse}
Let $f, g: \mathbb{R} \to \mathbb{R} \cup \{-\infty, \infty\}$ be monotonically non-decreasing and upper semi-continuous. Then, the following holds:
\begin{enumerate}
\item[(a)] $f^{-1}$ is monotonically non-decreasing and upper semi-continuous. 
\item[(b)] We have $(f^{-1})^{-1} \equiv f$ wherever $f$ is finite.
\item[(c)] For any interval $\mathcal{I} \subseteq \mathbb{R}$, we have that $f(x) \geq g(x)$ for all $x\in \mathcal{I}$ implies  $f^{-1}(y) \leq g^{-1}(y)$ for all $y\in f(\mathcal{I})$. 
\end{enumerate}
\end{lemma}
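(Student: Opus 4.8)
The plan is to derive all three parts from two elementary ``Galois-type'' inequalities that hold for any non-decreasing $f$: (i) $f(x)\le y \implies x\le f^{-1}(y)$, which is immediate from the definition of the supremum in~\eqref{eq:inv}; and (ii) $x< f^{-1}(y)\implies f(x)\le y$, because $x<\sup\{x':f(x')\le y\}$ forces the existence of some $x'>x$ with $f(x')\le y$, and monotonicity gives $f(x)\le f(x')\le y$. The contrapositive of (ii), namely $f(x)>y\implies f^{-1}(y)\le x$, will be used as well. I will also use the standard fact that a non-decreasing function is upper semi-continuous if and only if it is right-continuous, i.e.\ $\inf_{z>x}f(z)=f(x)$ for all $x$.

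For part (a), monotonicity of $f^{-1}$ is immediate since $y_1\le y_2$ gives $\{x:f(x)\le y_1\}\subseteq\{x:f(x)\le y_2\}$. For upper semi-continuity it then suffices to show $\inf_{y>y_0}f^{-1}(y)=f^{-1}(y_0)$; the inequality ``$\ge$'' is monotonicity, and for ``$\le$'' I would take any $x<\inf_{y>y_0}f^{-1}(y)$, so $x<f^{-1}(y)$ for every $y>y_0$, apply (ii) to get $f(x)\le y$ for every $y>y_0$, hence $f(x)\le y_0$, and then (i) gives $x\le f^{-1}(y_0)$. This part uses only monotonicity of $f$, not upper semi-continuity.

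For part (b), fix $x$ with $f(x)$ finite and write $(f^{-1})^{-1}(x)=\sup\{y:f^{-1}(y)\le x\}$. The lower bound $(f^{-1})^{-1}(x)\ge f(x)$ follows because every $y<f(x)$ satisfies $f^{-1}(y)\le x$ by the contrapositive of (ii), so the sup is at least $\sup\{y:y<f(x)\}=f(x)$. For the upper bound I would invoke right-continuity of $f$ at $x$: given $y>f(x)$ there is $\delta>0$ with $f(x+\delta)<y$, whence $x+\delta\le f^{-1}(y)$ by (i), so $f^{-1}(y)>x$ and $y\notin\{y:f^{-1}(y)\le x\}$; this yields $(f^{-1})^{-1}(x)\le f(x)$. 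The one point requiring care is exactly this appeal to right-continuity (together with the exclusion of the degenerate values $f(x)=\pm\infty$, which is why the statement claims equality only where $f$ is finite).

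For part (c), fix $y\in f(\mathcal I)$, say $y=f(x_0)$ with $x_0\in\mathcal I$; since $g\le f$ on $\mathcal I$ we get $g(x_0)\le y$, hence $x_0\le g^{-1}(y)$ by (i). If $f^{-1}(y)\le x_0$ we are done. Otherwise $f^{-1}(y)>x_0$, and combining (ii) with monotonicity forces $f\equiv y$ on $(x_0,f^{-1}(y))$; then $g(x)\le f(x)=y$ for every $x\in\mathcal I\cap(x_0,f^{-1}(y))$, and pushing this up to the right endpoint via (i) and right-continuity of $g$ gives $g^{-1}(y)\ge f^{-1}(y)$. I expect the main obstacle to be precisely this last step — the bookkeeping at the endpoints of $\mathcal I$, where one must use that $\mathcal I$ is an interval and that $y$ lies in the image $f(\mathcal I)$ to ensure the supremum defining $f^{-1}(y)$ is ``captured'' by $g^{-1}(y)$; parts (a) and (b), by contrast, are direct consequences of (i), (ii) and right-continuity.
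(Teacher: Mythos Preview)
Your treatment of (a) and (b) is correct and matches the paper's: the paper also dismisses (a) as immediate from the definition and proves (b) by a $\delta$-perturbation invoking upper semi-continuity (your right-continuity). Your Galois-connection inequalities (i) and (ii) are a clean way to package this; the arguments are the same in substance.

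Your unease about (c) is well founded, and in fact the statement as written is \emph{false} without an extra hypothesis. Take $\mathcal I=[0,1]$, let $f(x)=0$ for $x<2$ and $f(x)=1$ for $x\ge 2$, and $g(x)=0$ for $x<3/2$ and $g(x)=1$ for $x\ge 3/2$. Both are non-decreasing and right-continuous, and $f\equiv g\equiv 0$ on $\mathcal I$, so $f\ge g$ there; yet for $y=0\in f(\mathcal I)$ one gets $f^{-1}(0)=2>3/2=g^{-1}(0)$. The paper's own proof of (c) makes precisely the unjustified move you were worried about: it asserts $f^{-1}(y)=\sup\{x\in\mathcal I:f(x)\le y\}$ ``since $y\in f(\mathcal I)$'', i.e.\ it restricts the defining supremum to $\mathcal I$ without argument, and the example shows this can fail. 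So the ``bookkeeping at the endpoints of $\mathcal I$'' is not a detail to be filled in --- at this level of generality there is nothing to fill in.

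What does hold, and what is actually used in Proposition~\ref{lem:converse_second}, is (c) under the additional hypothesis that $f^{-1}(y)\in\overline{\mathcal I}$ for the $y$ in question (equivalently, that the supremum defining $f^{-1}(y)$ is already witnessed within $\mathcal I$). In the paper's application the role of $f$ is played by the right-hand side of Lemma~\ref{lm:psit}, a continuous strictly increasing function of $\eps$ on $\mathcal I=[0,1-(B+1)/\sqrt n]$, and the conclusion is only claimed for $R<R_0(t,n)$, which is exactly the range $y\in f(\mathcal I)$ with $f^{-1}(y)\in\mathcal I$. Under this extra hypothesis your sketch for (c) goes through cleanly: then $(x_0,f^{-1}(y))\subseteq\mathcal I$, so $g\le f\equiv y$ on all of $(x_0,f^{-1}(y))$, and (i) immediately gives $g^{-1}(y)\ge x$ for every such $x$, hence $g^{-1}(y)\ge f^{-1}(y)$ --- no appeal to right-continuity of $g$ is needed.
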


\begin{proof}
  To show (a), we simply note that monotonicity and upper semi-continuity of $f^{-1}$ directly follow from the definition in~\eqref{eq:inv}.
  
  To show (b), we take any $x$ such that $f(x)$ is finite and write
  \begin{align*}
    (f^{-1})^{-1}(x) = \sup \big\{ y' \in \mathbb{R} \,\big|\, \sup \{ x' \in \mathbb{R} \,|\, f(x') \leq y' \} \leq x \}
  \end{align*}
  For any $\delta > 0$, we have the following. First, since 
  $$\sup \{ x' \in \mathbb{R} \,|\, f(x') \leq f(x) - \delta \} = \inf \{ x' \in \mathbb{R} \,|\, f(x') > f(x) - \delta \} \leq x,$$ we find that $y' = f(x) - \delta$ is feasible and thus $(f^{-1})^{-1}(x) \geq f(x) - \delta$. Second, upper semi-continuity implies that there exists a $\mu > 0$ such that $f(x + \mu) \leq f(x) + \delta$. Thus,
  $$\sup \{ x' \in \mathbb{R} \,|\, f(x') \leq f(x) + \delta \} \geq x + \mu > x. $$ 
  Hence, we have $(f^{-1})^{-1}(x) \leq f(x) + \delta$. The result then follows as $\delta > 0$ is arbitrary. 
  
  To show (c), we may write $f^{-1}(y) = \sup \{ x \in \mathcal{I} \,|\, f(x) \leq y \}$ since $y \in f(\mathcal{I})$. Thus,
  \begin{align*}
    f^{-1}(y) = \sup \{ x \in \mathcal{I} \,|\, f(x) \leq y \} \leq \sup \{ x \in \mathcal{I} \,|\, g(x) \leq y \} \leq 
    \sup \{ x \in \mathbb{R} \,|\, g(x) \leq y \} = g^{-1}(y) 
  \end{align*}
  which proves (c).
\end{proof}

\subsection{Basic Concentration Bounds}
\label{app:concentations}

\begin{lemma} \label{lem:conc1}
Let $S^n=(S_1, \ldots, S_n)$ be an i.i.d.\ random vector generated from $\pi\in\cP(\cS)$. Let $T_{S^n}$ be its type. For every $\eta>0$, 
\begin{align*}
\Pr\Big[ \|T_{S^n}-\pi \|_{\infty}>\eta\Big]\le 2|\cS| \exp(-2 n\eta^2 )
\end{align*}
\end{lemma}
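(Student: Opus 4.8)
The plan is to reduce the claim to a coordinatewise concentration estimate combined with a union bound over the finite alphabet $\cS$. First I would fix a symbol $s \in \cS$ and observe that $T_{S^n}(s) = \frac1n \sum_{k=1}^n 1\{S_k = s\}$ is the empirical mean of the i.i.d.\ random variables $Z_k^{(s)} := 1\{S_k = s\}$, each of which takes values in the interval $[0,1]$ and satisfies $\Exp[Z_k^{(s)}] = \pi(s)$.

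Next I would apply Hoeffding's inequality to the bounded i.i.d.\ sequence $\{Z_k^{(s)}\}_{k=1}^n$. In its two-sided form this gives, for every $\eta > 0$,
\[
  \Pr\big[ |T_{S^n}(s) - \pi(s)| > \eta \big] \leq 2 \exp\!\big(-2 n \eta^2\big),
\]
where the constant $2$ in the exponent is exactly what Hoeffding's bound produces for averages of random variables confined to an interval of length one.

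Finally I would take a union bound over the $|\cS|$ symbols. Writing the sup-norm deviation event as the finite union $\{\|T_{S^n} - \pi\|_\infty > \eta\} = \bigcup_{s \in \cS} \{|T_{S^n}(s) - \pi(s)| > \eta\}$ and using subadditivity of probability,
\[
  \Pr\big[\|T_{S^n} - \pi\|_\infty > \eta\big] \leq \sum_{s \in \cS} \Pr\big[ |T_{S^n}(s) - \pi(s)| > \eta \big] \leq 2 |\cS| \exp\!\big(-2 n \eta^2\big),
\]
which is the stated bound. There is essentially no obstacle here; the only step requiring a moment's care is recasting the $\ell_\infty$ deviation event as a finite union of coordinate deviation events before invoking the union bound. (An alternative would be to combine the standard type-class estimate $\Pr[T_{S^n} = t] \leq \exp(-n D(t\|\pi))$ with Pinsker's inequality and the type-counting lemma, but the Hoeffding route is shorter and delivers the precise constants in the statement.)
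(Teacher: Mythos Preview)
Your proof is correct and follows exactly the approach the paper indicates: the paper's own proof is a single sentence stating that the result ``follows straightforwardly by the union bound and Hoeffding's inequality,'' and your argument fills in precisely those details.
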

\begin{proof}
This follows straightforwardly by the union bound and Hoeffding's inequality.
\end{proof}

Let $S^n= (S_1,\ldots, S_n)$ be a Markov chain on a finite state space $\cS$ that satisfies the {\em Doeblin condition}. That is, there exists an integer $m\ge 1$, a $\kappa>0$ and a probability measure $\mathbb{Q}$ on $\cS$ such that
\begin{equation}
\Pr[S_m \in A \, | \,  S_0 = s ]\ge \kappa\, \mathbb{Q} (A), \qquad \forall\, A\subset \cS , s\in\cS. \label{eqn:doeb}
\end{equation}
From \cite{Kont05}, we have the following generalization of Hoeffding's inequality for ergodic Markov chains:
\begin{theorem} \label{thm:kont}
Suppose the Markov chain   $S^n= (S_1,\ldots, S_n)$   has   stationary distribution $\pi  \in\cP(\cS)$ and satisfies the Doeblin condition in \eqref{eqn:doeb}. For any bounded function $F:\cS\to\mathbb{R}$ and any $\epsilon>0$, we have 
\begin{equation}
\Pr \Bigg[ \frac{1}{n}\sum_{k=1}^n \Big[ F(S_k) - \Exp_{S\leftarrow\pi}[F(S)]  \Big]\ge\epsilon\Bigg]\le\exp\left( -\frac{n-1}{2}\left[\frac{\kappa\epsilon}{m\bar{F}}-\frac{3}{n-1}\right]^2\right) \label{eqn:conc_mc}
\end{equation}
provided $n\ge 1 + 3m\bar{F}/(\kappa\epsilon)$ and where $\bar{F}:=\sup_s |F(s)|$.
\end{theorem}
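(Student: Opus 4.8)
The plan is to establish \eqref{eqn:conc_mc} by the classical Poisson-equation-plus-martingale route (this is essentially the argument behind \cite{Kont05}). Throughout, write $\tilde F := F - \Exp_{S\leftarrow\pi}[F(S)]$, which is a $\pi$-centred bounded function with oscillation $\|\tilde F\|_{\mathrm{osc}} := \sup_s \tilde F(s) - \inf_s \tilde F(s) = \|F\|_{\mathrm{osc}} \le 2\bar F$, and let $P$ denote the transition kernel (so $M$ in the theorem is its matrix). The Doeblin condition \eqref{eqn:doeb} is exactly a one-step minorisation $P^m(s,\cdot) \ge \kappa\,\mathbb{Q}(\cdot)$ for the $m$-step kernel, and hence bounds its Dobrushin contraction coefficient by $\delta(P^m)\le 1-\kappa$. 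Consequently $\|P^{jm}h\|_{\mathrm{osc}} \le (1-\kappa)^{j}\|h\|_{\mathrm{osc}}$ for every bounded $h$ and every $j$, while $\|P^k h\|_{\mathrm{osc}}$ is non-increasing in $k$; this is the only analytic input we will need from the hypothesis.

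First I would solve the Poisson equation. Since $\pi(P^k\tilde F)=0$, we have $\|P^k\tilde F\|_\infty \le \|P^k\tilde F\|_{\mathrm{osc}} \le (1-\kappa)^{\lfloor k/m\rfloor}\|\tilde F\|_{\mathrm{osc}}$, so the series $\hat F := \sum_{k\ge 0} P^k\tilde F$ converges uniformly, satisfies $\hat F - P\hat F = \tilde F$, and obeys
\begin{align*}
\|\hat F\|_{\mathrm{osc}} \le \sum_{k\ge 0}\|P^k\tilde F\|_{\mathrm{osc}} \le \|\tilde F\|_{\mathrm{osc}}\sum_{k\ge 0}(1-\kappa)^{\lfloor k/m\rfloor} = \frac{m\,\|\tilde F\|_{\mathrm{osc}}}{\kappa}\le\frac{2m\bar F}{\kappa}.
\end{align*}

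Next I would telescope. Writing $\tilde F(S_k) = \hat F(S_k) - (P\hat F)(S_k)$ and inserting $\pm\hat F(S_{k+1})$,
\begin{align*}
\sum_{k=1}^n\big[F(S_k) - \Exp_{S\leftarrow\pi}[F(S)]\big] = \hat F(S_1) - \hat F(S_{n+1}) + \sum_{k=1}^n D_k,\qquad D_k := \hat F(S_{k+1}) - (P\hat F)(S_k),
\end{align*}
where $(D_k)$ is a martingale difference sequence for the natural filtration of the chain. Conditioned on $S_k$ the quantity $\hat F(S_{k+1})$ ranges within an interval of length at most $\|\hat F\|_{\mathrm{osc}}$, so $D_k$ conditionally lies in an interval of the same length, and also $|\hat F(S_1) - \hat F(S_{n+1})|\le\|\hat F\|_{\mathrm{osc}}$.

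The remaining step is to apply the Hoeffding--Azuma inequality to $\sum_{k=1}^n D_k$ with conditional ranges bounded by $\|\hat F\|_{\mathrm{osc}}\le 2m\bar F/\kappa$, after absorbing the endpoint term into the deviation level; conditioning on $S_1$ leaves $n-1$ genuine increments, and the side condition $n\ge 1+3m\bar F/(\kappa\epsilon)$ is exactly what is needed so that $n\epsilon$ exceeds the endpoint correction. This yields a bound of the shape
\begin{align*}
\Pr\Big[\tfrac1n\textstyle\sum_{k=1}^n\tilde F(S_k)\ge\epsilon\Big] \le \exp\!\bigg(\!-\frac{2\big(n\epsilon - c\,m\bar F/\kappa\big)^2}{(n-1)\,(2m\bar F/\kappa)^2}\bigg),
\end{align*}
and a routine rearrangement (pulling $(2m\bar F/\kappa)^{-1}$ inside the square) reproduces $\exp\!\big(-\tfrac{n-1}{2}\big[\tfrac{\kappa\epsilon}{m\bar F}-\tfrac{3}{n-1}\big]^2\big)$ once the endpoint bookkeeping is arranged so that the constant comes out as $3$. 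I expect the main obstacle to be exactly this bookkeeping: conceptually everything is forced, but getting the precise constants $3$ and $n-1$ requires care about how the two boundary values $\hat F(S_1),\hat F(S_{n+1})$ are charged against $n\epsilon$ and about treating $\hat F(S_1)$ as a deterministic offset — the place where the deliberately loose factor-of-two estimates above must be tightened. (If one simply wishes to invoke \cite{Kont05}, the statement is immediate; the above merely reconstructs its proof.)
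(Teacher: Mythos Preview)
The paper does not prove this theorem at all: it is stated in the appendix with the preface ``From \cite{Kont05}, we have the following generalization of Hoeffding's inequality for ergodic Markov chains,'' and is simply invoked as a black box. Your proposal is therefore not competing with any argument in the paper; you are reconstructing the proof of the cited result.

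Your reconstruction follows the standard Poisson-equation-plus-Azuma route, which is indeed the mechanism behind \cite{Kont05}. The steps (Dobrushin contraction from the minorisation, convergence of $\hat F=\sum_k P^k\tilde F$ with the oscillation bound $\|\hat F\|_{\mathrm{osc}}\le m\|\tilde F\|_{\mathrm{osc}}/\kappa$, telescoping into martingale increments plus a boundary term, then Hoeffding--Azuma) are correct and forced. You also correctly flag that the only nontrivial issue is bookkeeping to land exactly on the constants $3$ and $n-1$; the inequality you display at the end with a generic constant $c$ already gives the right shape. Since the paper's ``proof'' is a citation, and since you explicitly note that invoking \cite{Kont05} makes the statement immediate, there is nothing to correct.
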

Observe that we can take the measure $\mathbb{Q}$ to be the stationary distribution $\pi$ and the constant  $\kappa$ arbitrarily close to $1$, say $\kappa=\frac{1}{2}$. Using this theorem, we can provide an analogue of Lemma~\ref{lem:conc1} for well-behaved Markov chains.
\begin{lemma} \label{lem:markov_conc}
Let $S^n=(S_1, \ldots, S_n)$ be a Markov random vector satisfying the conditions in Theorem~\ref{thm:kont}. Let $T_{S^n}$ be its type. For every $\eta>0$, 
\begin{align*}
\Pr\Big[ \|T_{S^n}-\pi \|_{\infty}>\eta\Big]\le 2|\cS| \exp\left(-(n-1) \frac{\eta^2}{32m^2}   \right)
\end{align*}
for all $n\ge 12m/\eta+1$ where $m$ is the constant   in the Doeblin condition corresponding to $\kappa=\frac{1}{2}$.
\end{lemma}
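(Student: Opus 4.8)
The plan is to mirror the proof of Lemma~\ref{lem:conc1}: split the event $\{\|T_{S^n}-\pi\|_{\infty}>\eta\}$ by a union bound over the finitely many symbols $s\in\cS$, control each coordinate deviation $|T_{S^n}(s)-\pi(s)|$ using the Markov Hoeffding bound of Theorem~\ref{thm:kont}, and then simplify the resulting exponent with the help of the hypothesis $n\ge 12m/\eta+1$.

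First I would write $T_{S^n}(s)=\frac1n\sum_{k=1}^n 1\{S_k=s\}$ and, for a fixed $s\in\cS$, apply Theorem~\ref{thm:kont} to the bounded function $F(\cdot)=1\{\cdot=s\}$, for which $\bar F=1$ and $\Exp_{S\leftarrow\pi}[F(S)]=\pi(s)$; by the remark following that theorem we may take $\mathbb{Q}=\pi$ and $\kappa=\tfrac12$. With $\epsilon=\eta$, this yields, for all $n\ge 1+6m/\eta$,
\[
\Pr\big[T_{S^n}(s)-\pi(s)\ge\eta\big]\le\exp\!\left(-\frac{n-1}{2}\Big[\frac{\eta}{2m}-\frac{3}{n-1}\Big]^2\right).
\]
Applying the same estimate to $-F$, which still satisfies $\sup_s|-F(s)|=1$, controls the lower tail $\Pr[\pi(s)-T_{S^n}(s)\ge\eta]$ with the identical constant, so the two-sided deviation for each fixed $s$ is bounded by twice the right-hand side above.

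Next I would clean up the exponent. Under the assumption $n-1\ge 12m/\eta$ we have $\frac{3}{n-1}\le\frac{\eta}{4m}$, hence $\frac{\eta}{2m}-\frac{3}{n-1}\ge\frac{\eta}{4m}>0$, so the bracketed term is at least $\frac{\eta}{4m}$ and
\[
\exp\!\left(-\frac{n-1}{2}\Big[\frac{\eta}{2m}-\frac{3}{n-1}\Big]^2\right)\le\exp\!\left(-\frac{n-1}{2}\cdot\frac{\eta^2}{16m^2}\right)=\exp\!\left(-(n-1)\frac{\eta^2}{32m^2}\right).
\]
Note that $12m/\eta+1>6m/\eta+1$, so the chosen range of $n$ also validates the hypothesis needed to invoke Theorem~\ref{thm:kont}. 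Finally, a union bound over the $|\cS|$ symbols $s$, each contributing the two-sided bound, produces the factor $2|\cS|$ and completes the proof.

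Every ingredient here is either Theorem~\ref{thm:kont} or elementary, so there is no genuine obstacle; the only points requiring a little care are (i) verifying that applying the theorem to $-F$ legitimately gives the lower-tail bound with the same $\bar F=1$, and (ii) the bookkeeping that converts the hypothesis on $n$ into the clean exponent $\eta^2/(32m^2)$ — in particular, that requiring $n\ge 12m/\eta+1$ (rather than just the $6m/\eta+1$ needed to apply Theorem~\ref{thm:kont}) is precisely what allows the $3/(n-1)$ correction to be absorbed into half of the main term $\eta/(2m)$.
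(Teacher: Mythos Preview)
Your proposal is correct and follows exactly the approach the paper sketches: a union bound over $s\in\cS$ combined with Theorem~\ref{thm:kont} applied to the indicator functions $F_{s'}(s)=1\{s=s'\}$. The paper omits the details you have supplied, including the lower-tail via $-F$ and the algebra that turns the hypothesis $n\ge 12m/\eta+1$ into the clean exponent $(n-1)\eta^2/(32m^2)$; your bookkeeping on both points is accurate.
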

\begin{proof}
This follows straightforwardly by the union bound and taking $F$ in Theorem~\ref{thm:kont}  to be various indicator functions such as $F_{s'}(s) := 1\{ s=s' \}$ for some $s'\in\cS$. We omit the details.
\end{proof}

\subsection{Uniform Bounds  on the Derivatives of the Gaussian PDF} \label{app:gaussian_der}
\begin{lemma}\label{lem:gaussian_der}
Let $\phi^{(k)}(x)$ be the $k$-th derivative of the Gaussian PDF $\phi(x)=\frac{1}{\sqrt{2\pi}}\exp\big(-x^2/2\big)$. Then 
\begin{align*}
\sup_{x\in\mathbb{R}} \big|\phi^{(k)}(x) \big| \le \frac{e^{1/8}}{ ( 2\pi)^{1/4}}  \, k^{1/4}\,  \left(\frac{k}{e}\right)^{k/2}
\end{align*}
for all $k\in\mathbb{N}$.
\end{lemma}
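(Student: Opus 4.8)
The plan is to combine the Fourier self-duality of the Gaussian with Stirling's bound for the Gamma function. First, recall that up to normalization $\phi$ is its own Fourier transform: since $\int_{-\infty}^{\infty} e^{-t^2/2}\, e^{-ixt}\,\mathrm{d}t = \sqrt{2\pi}\, e^{-x^2/2}$, one has the representation $\phi(x) = \frac{1}{2\pi}\int_{-\infty}^{\infty} e^{-t^2/2}\, e^{-ixt}\,\mathrm{d}t$. Because $t \mapsto |t|^k e^{-t^2/2}$ is integrable for every $k\in\mathbb{N}$, differentiation $k$ times under the integral sign is legitimate and gives $\phi^{(k)}(x) = \frac{1}{2\pi}\int_{-\infty}^{\infty} (-it)^k e^{-t^2/2}\, e^{-ixt}\,\mathrm{d}t$. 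Estimating the integrand in absolute value then yields the uniform bound
\begin{equation*}
\sup_{x\in\mathbb{R}}\big|\phi^{(k)}(x)\big| \;\le\; \frac{1}{2\pi}\int_{-\infty}^{\infty} |t|^k e^{-t^2/2}\,\mathrm{d}t .
\end{equation*}

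Next I would evaluate this integral in closed form. The substitution $u = t^2/2$ turns it into a Gamma integral, $\int_{-\infty}^{\infty} |t|^k e^{-t^2/2}\,\mathrm{d}t = 2^{(k+1)/2}\,\Gamma\!\big(\tfrac{k+1}{2}\big)$ (equivalently, this is $\sqrt{2\pi}$ times the $k$-th absolute moment of a standard normal variable), so that
\begin{equation*}
\sup_{x\in\mathbb{R}}\big|\phi^{(k)}(x)\big| \;\le\; \frac{2^{(k-1)/2}}{\pi}\,\Gamma\!\Big(\frac{k+1}{2}\Big).
\end{equation*}

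The only quantitative step is to bound the Gamma factor. I would apply Stirling's inequality $\Gamma(z) \le \sqrt{2\pi}\, z^{z-1/2} e^{-z} e^{1/(12z)}$, valid for $z \ge 1$ (cf.\ \cite{Abr65}), with $z = \tfrac{k+1}{2}\ge 1$. Using $2^{k/2}\big(\tfrac{k+1}{2}\big)^{k/2} = (k+1)^{k/2}$, the estimate $(1+1/k)^k \le e$, and $e^{1/(6(k+1))} \le e^{1/12}$, all the factors of $2$ and $\pi$ combine and one is left with $\sup_{x}|\phi^{(k)}(x)| \le \frac{e^{1/12}}{\sqrt{\pi}}\big(\tfrac{k}{e}\big)^{k/2}$. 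To conclude, it suffices to observe that $\frac{e^{1/12}}{\sqrt{\pi}} \le \frac{e^{1/8}}{(2\pi)^{1/4}}\, k^{1/4}$ for every integer $k\ge 1$: this is equivalent to $k^{1/4} \ge (2/\pi)^{1/4}\, e^{-1/24}$, whose right-hand side is strictly below $1$. Hence $\sup_{x}|\phi^{(k)}(x)| \le \frac{e^{1/8}}{(2\pi)^{1/4}}\, k^{1/4} (k/e)^{k/2}$ for all $k\in\mathbb{N}$, as claimed.

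All the individual steps are essentially routine: the integrability/differentiation-under-the-integral justification, the elementary substitution computing the Gamma integral, and the monotone comparison of constants. The only point that needs a little care is making sure the Stirling inequality is invoked in the range where it holds ($z\ge 1$, which is automatic here) and that the final constant comparison remains valid all the way down to $k=1$; as the computation shows there is comfortable slack, so this is not a genuine obstacle. (As an alternative one could write $\phi^{(k)}$ in terms of Hermite polynomials and invoke a Cram\'er-type sup-norm bound, but the Fourier route above is self-contained.)
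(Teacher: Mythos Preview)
Your proof is correct, but it takes a genuinely different route from the paper's. The paper writes $\phi^{(k)}(x) = (-1)^k He_k(x)\,\phi(x)$ in terms of the probabilists' Hermite polynomials, converts to the physicists' polynomials via $He_k(x)=2^{-k/2}H_k(x/\sqrt{2})$, and then invokes the classical Cram\'er-type bound $|H_k(x)|\le e^{1/12}e^{x^2/2}\,2^{k/2}\sqrt{k!}$ from \cite[22.14.17]{Abr65}; the Gaussian weight cancels and one is left with $\sup_x|\phi^{(k)}(x)|\le \frac{e^{1/12}}{\sqrt{2\pi}}\sqrt{k!}$, after which Stirling on $k!$ yields the stated bound. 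Your Fourier/Gamma-function argument bypasses the Hermite machinery entirely and is more self-contained, relying only on differentiation under the integral sign and Stirling for $\Gamma$. It also happens to produce the slightly sharper intermediate estimate $\sup_x|\phi^{(k)}(x)|\le \frac{e^{1/12}}{\sqrt{\pi}}(k/e)^{k/2}$, which lacks the $k^{1/4}$ factor and is then relaxed to match the lemma's form. The paper's approach, on the other hand, makes explicit the connection to Hermite polynomials (which is the route you allude to in your final parenthetical remark), and that connection is sometimes useful elsewhere; but for the purpose of this lemma your argument is at least as clean.
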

Before we prove Lemma~\ref{lem:gaussian_der}, let us  define the  {\em probabilists' Hermite polynomials}  and the {\em physicists' Hermite polynomials} as
\begin{align}
He_k(x) := (-1)^k e^{x^2/2}\frac{\mathrm{d}^k}{\mathrm{d} x^k}e^{-x^2/2} ,\quad\mbox{and}\quad  H_k(x):=(-1)^k e^{x^2 }\frac{\mathrm{d}^k}{\mathrm{d} x^k}e^{-x^2 }   \nonumber
\end{align}
respectively. Statement 22.5.18  in \cite{Abr65} asserts  $H_k(x)$ and $He_k(x)$ are related as follows:
\begin{equation}
He_k(x) = 2^{-k/2}H_k\bigg(\frac{x}{\sqrt{2}}\bigg). \label{eqn:relation}
\end{equation}
Moreover, Statement 22.14.17  in \cite{Abr65} also asserts that  the following bound on the physicists' Hermite polynomials holds for all $k\in \mathbb{N}$ and all $x\in\mathbb{R}$
\begin{equation}
 |H_k(x)| \le e^{1/12} \,  e^{x^2/2} \,  2^{k/2} \,  \sqrt{k!}.  \label{eqn:abr_bound}
\end{equation}

\begin{proof}[Proof of Lemma \ref{lem:gaussian_der}]
Now, by the definition of the  probabilists' Hermite polynomials, we see that the following holds 
\begin{equation}
 \phi^{( k )}(x) =   (-1)^k He_k(x)\phi(x).\nonumber
\end{equation}
We would like to bound $\sup_x |\phi^{( k )}(x) |$.  We have 
\begin{align}
\sup_x |\phi^{( k )}(x) | &= \frac{1}{\sqrt{2\pi}}\sup_x |He_k(x)|e^{-x^2/2}  =\frac{2^{-k/2} }{\sqrt{2\pi}}\sup_x  \bigg| H_k\bigg(\frac{x}{\sqrt{2}}\bigg) \bigg|   e^{-x^2/2}   \nonumber\\ & \le   e^{1/12} \, \frac{2^{-k/2} }{\sqrt{2\pi}}\sup_x  e^{x^2/4} \,  2^{k/2} \, \sqrt{k!} \, e^{-x^2/2} = \frac{e^{1/12} }{\sqrt{2\pi}} \sqrt{k!}    \nonumber
\end{align}
where in the second equality we used the relation in \eqref{eqn:relation} and for the inequality we used the bound in \eqref{eqn:abr_bound}.  Now, we further appeal to Stirling's upper bound on the factorial \cite[Statement 6.1.38]{Abr65} to obtain 
\begin{align}
\sup_x |\phi^{( k )}(x) | \le \frac{e^{1/12}}{\sqrt{2\pi}} \left(e^{1/12} \,  \sqrt{2\pi k } \left(\frac{k}{e}\right)^k  \right)^{1/2}  =\frac{e^{1/8}}{ ( 2\pi)^{1/4}} \,  k^{1/4} \, \left(\frac{k}{e}\right)^{k/2}  , \nonumber
\end{align}
as desired. 
\end{proof}

\subsection*{Acknowledgements}
MT is funded by the Ministry of Education (MOE) and National Research Foundation Singapore, as well as MOE Tier 3 Grant "Random numbers from quantum processes" (MOE2012-T3-1-009). VT would like to
acknowledge funding support  from NUS startup grants WBS R-263-000-A98-750 (FoE) and WBS  R-263-000-A98-133 (ODPRT) as well as  A*STAR, Singapore.

\end{document}